\numberwithin{equation}{section}
\pgfplotsset{compat=1.14}
\definecolor {processblue}{cmyk}{0.96,0,0,0}
\tikzset{->-/.style={decoration={
  markings,
  mark=at position .5 with {\arrow{>}}},postaction={decorate}}}
\DeclareFontFamily{U}{mathx}{}
\DeclareFontShape{U}{mathx}{m}{n}{<-> mathx10}{}
\DeclareSymbolFont{mathx}{U}{mathx}{m}{n}
\DeclareMathAccent{\widecheck}{0}{mathx}{"71}
\def\R{{\mathbb R}}
\def\C{{\mathbb C}}
\def\N{{\mathbb N}}
\def\Z{{\mathbb Z}}
\def\T{{\mathbb T}}
\def\rme{\mathrm{e}}
\def\ii{\mathrm{i}}
\def\sinc{\operatorname{sinc}}
\newtheorem{thm}{Theorem}
\newtheorem{prop}{Proposition}
\newtheorem{definition}{Definition}
\theoremstyle{definition}
\newtheorem{assumption}{Assumption}
\theoremstyle{remark}
\newtheorem{rem}{Remark}
\newcommand{\be}{\begin{equation}}
\newcommand{\ee}{\end{equation}}
\newcommand{\ben}{\begin{equation*}}
\newcommand{\een}{\end{equation*}}
\def\_#1{\def\next{#1}%
 \ifx\next\risingsign\expandafter\rising\els\rme^{\underline{#1}}\fi}
\def\risingsign{^}
\def\rising#1{^{\overline{#1}}}
\title{Quantum systems with jump-discontinous mass. I}
\author[F.~D.~Cunden]{Fabio Deelan Cunden}
\address{Dipartimento di Matematica, Univerisit\`a degli Studi di Bari, I-70125 Bari, Italy, and
INFN, Sezione di Bari, I-70126 Bari, Italy}
\email{fabio.cunden@uniba.it}
\author[G.~Gramegna]{Giovanni Gramegna}
\address{Dipartimento di Fisica, Univerisit\`a degli Studi di Bari, I-70126 Bari, Italy, and
INFN, Sezione di Bari, I-70126 Bari, Italy}
\email{giovanni.gramegna@uniba.it}
\author[M.~Ligab\`o]{Marilena Ligab\`o}
\address{Dipartimento di Matematica, Univerisit\`a degli Studi di Bari, I-70125 Bari, Italy}
\email{marilena.ligabo@uniba.it}
\begin{document}

\tikzset{midarrow/.style={decoration={markings, mark=at position 0.5 with {\arrow{stealth}}}, postaction={decorate}}}

	\maketitle
	
\begin{abstract}
We consider a free quantum particle in one dimension whose mass profile exhibits jump discontinuities. The corresponding Hamiltonian is a self-adjoint realisation of the kinetic-energy operator, with the specific realisation determined by the boundary conditions at the points of mass discontinuity. For a family of scale-free boundary conditions, we analyse the associated spectral problem. We find that the eigenfunctions exhibit a highly sensitive and erratic dependence on the energy.  Notably, the system supports infinitely many distinct semiclassical limits, each labeled by a point on a spectral curve embedded in the two-torus. These results demonstrate a rich interplay between discontinuous coefficients, boundary data, and spectral asymptotics.
\end{abstract}
\section{Introduction}

A free quantum particle of mass $m$, confined in an open bounded region $\Omega\subset\R^d$, $d \geq 1$, is formally described by the kinetic-energy operator,
\begin{equation}\label{eq:Kinetic}
H=-\frac{\hbar^2}{2m}\Delta,
\end{equation}
which acts on a proper subspace of the Hilbert space $L^2(\Omega)$. Equation (\ref{eq:Kinetic}) prescribes the action of $H$ only in the bulk of the system. The Hamiltonian $H$ should indeed be equipped with suitable boundary conditions (b.c.), specifying the behaviour of the particle at the boundary $\partial \Omega$, in order to generate a well-defined quantum dynamics. In quantum mechanics the 
 admissible b.c., encoded in the domain $D(H)$ of $H$, are constrained by the requirement that $H$ must be a self-adjoint operator, i.e. $D(H) = D(H^*)$ and $H = H^*$. Indeed, self-adjointness is a necessary and sufficient condition for a symmetric operator to have a purely real spectrum and to generate a unitary dynamics.
Different self-adjoint extensions correspond to different behaviours of the particle at the boundary,  they generate different dynamics and represent different physical situations.
As an example, we can consider the Dirichlet b.c.\ (i.e., we ask the wavefunctions to vanish on $\partial \Omega$). This is the prototypical setting of a \emph{quantum billiard}: a quantum counterpart of a classical particle of mass $m$ moving freely in $\Omega$ and bouncing off (specular reflection) at the boundary $\partial \Omega$.  It is known that there is
a non-decreasing sequence $0<E_1\leq E_2\leq\cdots$ of eigenvalues accumulating at $+\infty$, and an orthonormal basis $\{\psi_{E_{\kappa}}\}_{\kappa\geq1}$ in $H^2(\Omega)\cap H^1_0(\Omega)$ such that 
\be
H\psi_{E_{\kappa}}=E_{\kappa}\psi_{E_{\kappa}}, \quad \textrm{for all $\kappa \geq 1$.}
\ee
One of the most explored questions on quantum billiards regards the eigenfunctions delocalisations in the high-frequency limit $\kappa\to+\infty$ (here equivalent to the semiclassical limit $\hbar\to0$ if we impose that the energy levels converge to a classical limit $E_{\kappa}\to E$, as $ \kappa \to + \infty$). In this regime, it is expected that
quantum mechanics should converge to classical mechanics in a certain sense. 
The quantum mechanical interpretation of the measure $|\psi_{E_{\kappa}}|^2dx$ is  the probability law of the position of a particle with defined energy $E_{\kappa}$.  A major theme in quantum chaos is understanding which measures $\mu$ can arise as
weak-$^*$ limits of $|\psi_{E_{\kappa}}|^2dx$ in the high-frequency limit; this includes the Quantum Ergodicity theorem
and the Quantum Unique Ergodicity conjecture. See~\cite{Anantharaman22} for an updated survey of results and methods.

There is a vast literature on the subject  in dimension $d\geq2$. The one-dimensional case $d=1$ is hardly discussed in the literature and in textbooks (even the most pedagogical). To see why, consider a quantum particle of mass $m$ in a one-dimensional box $\Omega=(-\ell/2,\ell/2)$. We may say that the Hamiltonian of the system is 
$$
H=-\frac{\hbar^2}{2m}\frac{d^2}{d x^2},
$$
with domain $H^2_0(\Omega)$ (Dirichlet b.c.).  Its eigenvalues and eigenfunctions are textbook:
\begin{equation}
E_{\kappa}=\frac{\hbar^2 \pi^2 \kappa^2}{2m\ell^2},\quad \psi_{E_{\kappa}}(x)=\sqrt{\frac{2}{\ell}}\sin\left(\dfrac{\kappa\pi}{\ell}\left(x+\frac{\ell}{2}\right)\right)\chi_{(-\ell/2,\ell/2)}(x), \quad \kappa \geq 1.
\end{equation}
In the high-frequency limit $\kappa \to +\infty$, the probability densities $|\psi_{E_{\kappa}}|^2 dx$ flatten out, converging to the uniform measure:
\begin{equation}
d\mu= \frac{1}{|\Omega|}\chi_{\Omega}(x)dx.
\end{equation}
All self-adjoint realisations of the Laplacian in dimension $d=1$ yield this same macroscopic density. While the full phase-space distributions (e.g., Wigner functions) may oscillate wildly and resist convergence, their spatial projections always settle to the uniform measure. For a detailed discussion, see~\cite{Angelone2024}. The universality of the microscopic density is discussed in~\cite{Cunden18}.

Because of this apparent regularity, one-dimensional quantum billiards are often dismissed as uninteresting  (as one-dimensional autonomous systems are completely integrable). After all, playing billiard on a line requires no skills at all! 
Here we challenge this perspective by showing that certain one-dimensional quantum billiards exhibit a richness of structure and behaviour worthy of close attention. 
Their spectral properties and eigenfunction statistics can be as instructive as their higher-dimensional cousins if not more so, due to their tractability. 

\subsection{A glimpse of the setting and results}
\label{sec:glimpse}
The system considered in this paper is a quantum particle confined not to a single interval, but to the union of two, forming a piecewise domain
\begin{equation}
\Omega=I_1\cup I_2 , \quad I_1=(-\ell_1,0),\quad
I_2=(0,\ell_2),
\end{equation}
and the corresponding Hilbert space is $L^2(\Omega)$. Different ways of `gluing' the billiards correspond to different b.c.\ on the kinetic energy operator.  
The particle evolves in this pair of joined billiards under the sole influence of its kinetic energy, yet with a crucial twist: it experiences a discontinuous change in mass. Specifically, the mass is constant and equal to $m_1$ on the left interval $I_1$, and $m_2$ on the right $I_2$, introducing  a discontinuity analogous to having  
different \emph{refractive indices} into an otherwise free system. 
We will be interested in the case  $m_1\neq m_2$, so that formally the Hamiltonian is the kinetic energy operator 
\begin{equation}
\label{eq:formalH}
H=-\frac{\hbar^2}{2m}\frac{d^2}{d x^2}, \quad \text{with}
    \quad m=
    \begin{cases}
m_1 &\text{for $x\in I_1$}\\
m_2 &\text{for $x\in I_2$}
    \end{cases}.
\end{equation}
The Hamiltonian 
 has a tauntingly simple form, but exhibits rich and complex behaviour.

 A possible measure of the delocalisation of  the wavefunctions, is the \emph{leaning}~\cite{Ares2020} defined, for any nonzero $\psi \in L^2(\Omega)$, as
    the ratio
   \begin{equation}
      \mathcal{L}(\psi)= \frac{\int_{I_2}|\psi(x)|^2\, dx -\int_{I_1}|\psi(x)|^2 \, dx}{\int_{I_2}|\psi(x)|^2 \, dx+\int_{I_1}|\psi(x)|^2\, dx}\in[-1,1],
   \end{equation}
which records whether the quantum probability is skewed towards the left or right subintervals. Negative (resp.\ positive) values of $\mathcal{L}(\psi)$  correspond to densities that `lean' on the left (resp.\ right).
   The extreme values $-1$ and $1$ correspond to wavefunctions completely localised in $I_1$ or $I_2$, respectively.
   
Consider for instance the eigenfunctions of~\eqref{eq:formalH} that  vanish
(Dirichlet b.c.) at the edges $\psi(-\ell_1)=\psi(\ell_2)=0$ and that satisfy the zero current and continuity conditions (Kirchhoff b.c.) at the junction $\frac{1}{m_1}\psi'(0^-)=\frac{1}{m_2}\psi'(0^+)$, $\psi(0^{-})=\psi(0^{+})$.
 When the masses are equal ($m_1=m_2$), the problem  reduces to the textbook scenario of a single interval with Dirichlet boundary conditions. In this case, the eigenfunctions $\psi_{E_{\kappa}}$ become asymptotically equidistributed, and one recovers the classical ergodic time average: 
\begin{equation}
  \lim_{\kappa\rightarrow +\infty}\mathcal{L}(\psi_{E_{\kappa}})=\frac{\ell_2-\ell_1}{\ell_2+\ell_1}.%,\quad\text{as $\kappa \to +\infty$.}
\end{equation}
Classically, this matches the difference between the fraction of time spent in the two intervals by a particle bounding back and forth between $-\ell_1$ and $\ell_2$ with constant speed. 
\begin{figure}
    \centering
    \includegraphics[width=1\linewidth]{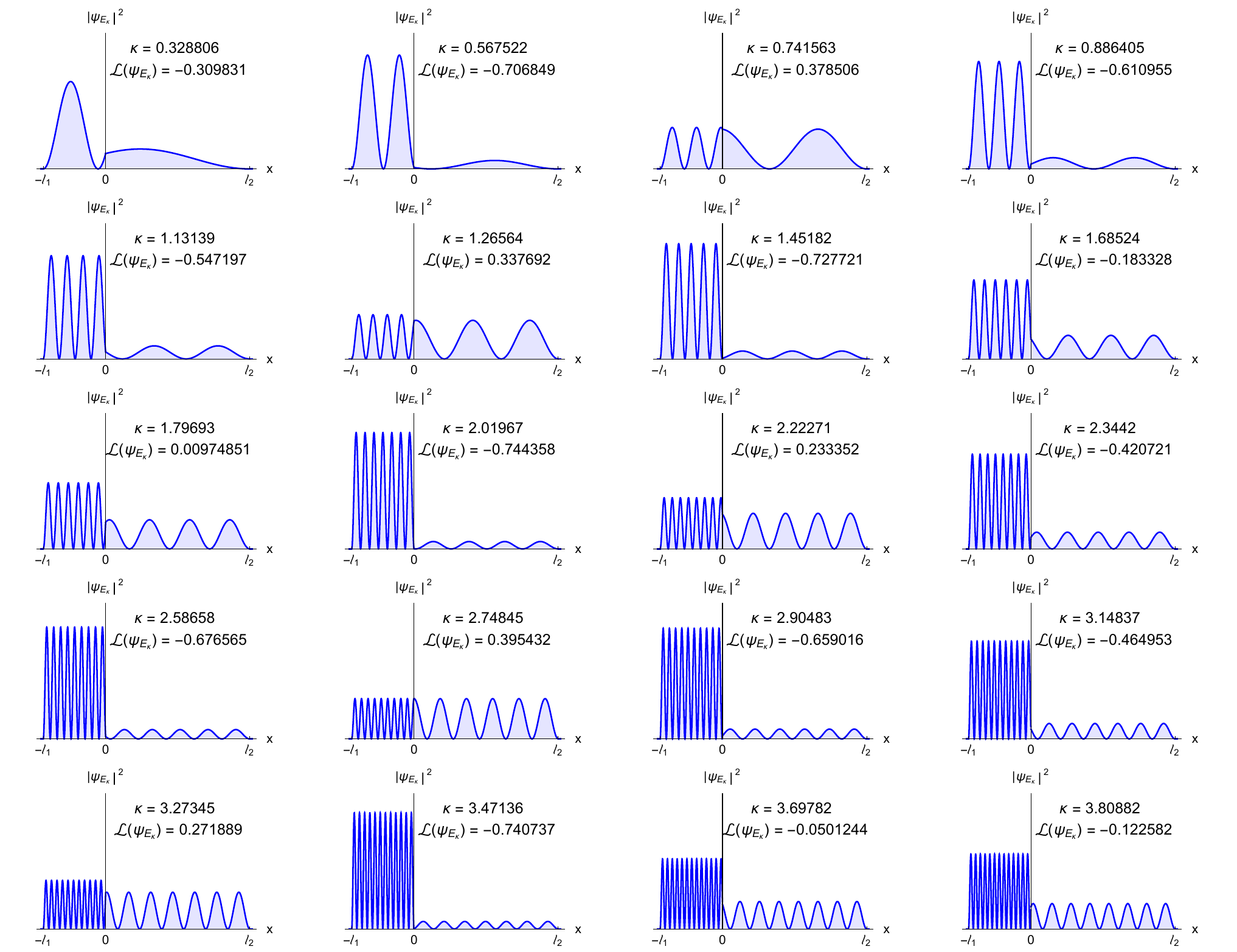}
    \caption{Particle density $|\psi_{E_{\kappa}}|^2$ for the first $20$ values of $E_{\kappa}=\hbar^2\kappa^2/2$.  Here $m_1=16$, $m_2=1$, $\ell_1=\rme$, and $\ell_2=\pi$. The corresponding frequencies $\omega_1=4\rme$ and $\omega_2=\pi$ are nonresonant. In the plots we also indicated the corresponding values of $\kappa$ and $\mathcal{L}(\psi_{E_{\kappa}})$.}
    \label{fig:eigenfunctions_leaning}
\end{figure}
The picture transforms dramatically when $m_1\neq m_2$. The particle density no longer exhibits a regular limiting behaviour, see 
 Fig.~\ref{fig:eigenfunctions_leaning}; the leaning becomes erratic, dependent on the arithmetic of the \emph{frequencies} of the system
\begin{equation}
    \omega_1=\sqrt{m_1}\ell_1,  \quad \omega_2=\sqrt{m_2}\ell_2.
\end{equation}
In the generic (nonresonant) case, the set of values $\{\mathcal{L}(\psi_{E_{\kappa}})\}_{{\kappa}}$ `fill' a proper interval, see  Fig.~\ref{fig:leaning_Kirk}.
Yet, amid this irregularity, order emerges in the form of upper limit, Ces\`aro limit, and lower limit:
\begin{align}
\label{eq:upper_limit}
\limsup_{\kappa \to + \infty} \mathcal{L}(\psi_{E_{\kappa}})&=\frac{\ell_2-\ell_1}{\ell_2+\ell_1},\\
\label{eq:Cesaro_limit}
    \lim_{E\to+\infty}\frac{1}{\#\{E_{\kappa}\leq E\}}\sum_{E_{\kappa}\leq E}\mathcal{L}(\psi_{E_{\kappa}})&=\frac{\sqrt{m_2}\ell_2-\sqrt{m_1}\ell_1}{\sqrt{m_2}\ell_2+\sqrt{m_1}\ell_1},\\
    \label{eq:lower_limit}
\liminf_{\kappa \to + \infty} \mathcal{L}(\psi_{E_{\kappa}})&=\frac{m_2\ell_2-m_1\ell_1}{m_2\ell_2+m_1\ell_1},
\end{align}
if $m_2\leq m_1$ (otherwise the values of the upper and lower limits are swapped), see Sec.~\ref{sec:scalefree} for a detailed description of this phenomenon. Note that when $m_1=m_2$ the limits~\eqref{eq:upper_limit}-\eqref{eq:Cesaro_limit}-\eqref{eq:lower_limit} all coincide. 

\begin{figure}
    \centering
    \includegraphics[width=.75\linewidth]{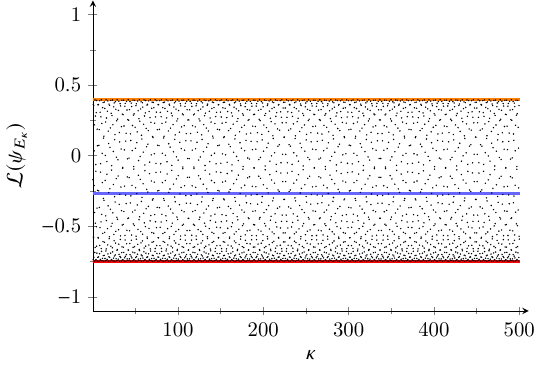}
    \caption{Leaning $\mathcal{L}(\psi_{E_{\kappa}})$ versus $\kappa$. We also display the upper limit~\eqref{eq:upper_limit} (orange line), the Ces\`aro  limit~\eqref{eq:Cesaro_limit} (blue  line), and the lower limit~\eqref{eq:lower_limit} (red line). Here the frequencies are nonresonant (same mass and length values as in Fig.~\ref{fig:eigenfunctions_leaning}).}
    \label{fig:leaning_Kirk}
\end{figure}

\subsection{Motivation and some related literature}

Around 2020, there were reports by Ares, Esteve,  Falceto, and Us\'on~\cite{Ares2020} of numerical diagonalisation of two coupled free fermionic chains (hence each integrable on its own) with different hopping rates whose one-particle eigenstates have spatial distribution that depends on the energy in a rather erratic way. In the limit of long chains, the particle is confined in one of the two parts or, delocalised along the whole chain. There is no sharp localisation-delocalisation transition: Instead, there emerged a peculiar sensitivity of spatial distribution to the energy, with irregular, almost chaotic features. 

The present work originates as an attempt to detect whether such phenomena persist in continuous, rather than lattice, systems. The idea was that in such a infinite-dimensional model the problem of taking the limit of `infinitely long chains' would have been completely bypassed. We thus replace free fermionic chains by free particles (kinetic-operator) on segments of the real line. Gluing of two discrete chains with different hopping rates, corresponds to joining two intervals in which the particle would have two different masses. This translation of the lattice model into a continuous one came with several difficulties: 
\begin{enumerate}
	\item We faced the problem of defining a symmetric kinetic energy operator with position-dependent mass $m(x)$ that is already a non-obvious task (see Sec.~\ref{sec:PDM} below). In the present setting the mass has jump-discontinuities, and we overcame this obstruction by defining the model on the direct sum of two $L^2$ spaces.
	\item One can observe that the model of two joined one-dimensional billiards is a two-edges quantum graph. One can therefore use the classification of self-adjoint extensions of kinetic-operators on quantum graphs in terms of unitary operators~\cite{Asorey05,Bruening08,Asorey15} as well as in terms of scattering operators~\cite{Kostrykin99,Kostrykin00,Bolte09}. 
    However, the standard treatments assume equal mass on each edge. Incorporating different masses on distinct edges requires adapting the existing theory. 
	\item  There is substantial work on the spectral and
scattering properties of quantum graphs; see, for instance, the monograph~\cite{Berkolaiko2013}.  Notably, Kaplan~\cite{Kaplan01} analysed the detailed wave function structure on quantum ring graphs with \emph{random} edge lengths and reported deviations from the uniform (ergodic) distributions. Shortly after, Berkolaiko, Keating and Winn~\cite{Berkolaiko2003,Berkolaiko2004}  proved non-quantum-ergodicity in star graphs with a \emph{large number of edges}.  Our setting departs from both of these: the number of edges is fixed (only two) and there is no disorder (the lengths of the edges are not random). Here it is the jump discontinuity of the masses that breaks ergodicity. 
 \item In~\cite{Jakobson2015}  Jakobson, Safarov, Strohmaier and Colin de Verdi\`ere  elucidated the semiclassical limit of spectral theory on manifolds whose metrics have jump-like discontinuities along hypersurfaces, providing examples in dimension $d=2$ with zero Kirchhoff b.c.\ at the points of discontinuity.  Specifically, they related the semiclassical theory of discontinuous quantum systems to \emph{ray-splitting billiards} -- classical systems in which a single ray may bifurcate upon encountering a discontinuity.
One-dimensional models of quantum particle in a piecewise constant potential were explored by Bl\"umel et al.~\cite{Blumel2001,Blumel2002}  as a toy model to argue that ray-splitting billiards are equivalent to systems where \emph{newtonian} and \emph{non-newtonian trajectories} coexist. Something similar seems to happen in our setting of discontinuous mass (see below). 
We stress however that the sequence of particle densities $|\psi_{E_{\kappa}}|^2dx$ for a quantum particle in an interval with jump-discontinuos potential (and constant mass) has a semiclassical limit (no longer uniform in the interval $\Omega$), and it is therefore different from the quantum particle with zero (constant) potential and jump-discontinuous mass considered here which has not a semiclassical limit. It is our hope that our results 
will serve as a motivation to further investigation on quantum discontinuous systems and their semiclassical limits.
\end{enumerate}

\subsection{Outline of the paper}  In Sec.~\ref{sec:PDM} we define the model of a quantum particle with jump-discontinuous mass. We first consider the general problem of defining a one-dimensional Schr\"odinger operator with position-dependent mass $m(x)$. The nature of the problem forces us to consider separately the cases when i) the mass profile $m(x)$ is a regular function of the position  (Sec.~\ref{sub:regular}), and  ii)   $m(x)$ is piecewise regular (Sec.~\ref{sub:piecewise}). The latter case indicates a clear path to define our model of a particle with piecewise constant mass using the standard theory of self-adjoint extensions in terms of boundary conditions  (Proposition~\ref{prop:sa}); we introduce and characterise the class of scale-free (aka dilation-invariant) self-adjoint extensions of the kinetic energy operator. The connection to some physics literature on quantum systems with position-dependent mass is briefly presented in Sec.~\ref{sub:physics}.  In Sec.~\ref{sec:spectral_problem} we consider the spectral properties of the model. We first setup the general scheme in terms of spectral matrices and spectral functions in Sec.~\ref{sec:spectral_gen}. Then, in Sec.~\ref{sec:tori} we consider in detail the case of scale-free boundary conditions for which the spectral problem can be cast in the language of linear flows on the torus (Theorem~\ref{thm:tori}). As a warm-up example, in Sec~\ref{sec:uncoupled} we consider the (easy but instructive) spectral problem for two uncoupled intervals using the language of spectral matrices and spectral functions, and show that under a nonresonant condition, the eigenstates are completely localised.
Sec.~\ref{sec:scalefree} contains new explicit solutions to the spectral problem for two-edges quantum graphs (with distinct masses) with zero Kirchhoff b.c.\ at the connected vertices. In Sec.~\ref{sec:semiclassical} we consider the Wigner function in phase space of the eigenfunctions (Proposition~\ref{prop:Wigner}), and we show that for scale-free boundary conditions, there are infinitely many semiclassical limit of the Wigner functions, each of them parametrised by a point on the 2-torus. In the conclusions, Sec.~\ref{sec:conclusions}, we outline some future directions.

\section{One-dimensional Schr\"odinger operators with position dependent mass}
\label{sec:PDM}
To define our quantum model with discontinuous mass, we first revisit the notion of Schr\"odinger operators with position-dependent mass.
Consider the Hamiltonian of a classical system with position-dependent mass (PDM),
\begin{equation}
  \label{eq:classical_hamiltonian}
\mathcal{H}(x,p)=\frac{p^2}{2m(x)}+V(x),\quad (x,p)\in\R^2,
\end{equation}
with $m(x) >0$ and $V(x) \in \R$.
The corresponding quantization is given by an operator 
 on $L^2(\R)$  that is formally 
\begin{equation}
\frac{\hat{p}^2}{2m(\hat{x})}+V(\hat{x}),
\end{equation}
where $\hat{x}\psi(x)=x \psi(x)$ is the position operator and $\hat{p}\psi(x)=-\ii\hbar \psi'(x)$ is the momentum operator.
Notice that, since $\hat{p}$ and $\hat{x}$ do not commute, the  expression $\frac{\hat{p}^2}{2m(\hat{x})}$ is not well defined! A more suitable quantum version of~\eqref{eq:classical_hamiltonian} is obtained by writing the kinetic term in \emph{divergence form},
\begin{equation}
\label{eq:divergence}
H=\hat{p}\frac{1}{2m(\hat{x})}\hat{p}+V(\hat{x}),
\end{equation}
acting as
\begin{equation}
H\psi(x)=\left[\frac{\hbar}{\ii}\frac{d}{d x}\frac{1}{2m(x)}\frac{\hbar}{\ii}\frac{d}{d x}+V(x)\right]\psi(x)
\end{equation}
on normalised wavefunctions $\psi\in C^{\infty}_c(\R)$. The operator $H$ in~\eqref{eq:divergence} is a Sturm-Liouville operator and is evidently symmetric. 

\subsection{Regular mass profile}
\label{sub:regular}
If $m(x)$ is regular, then $H$ in~\eqref{eq:divergence} is unitarily equivalent to a Schr\"odinger operator with constant (unit) mass in a modified potential.

Consider the unitary transformation $W\colon L^2(\R)\to L^2(\R)$ given by
\begin{equation}
(W\psi)(x)=\psi(g(x))\sqrt{g'(x)}, 
\end{equation}
where $g(x)$ is a primitive of $\sqrt{m(x)}$.
Then, 
\begin{equation}
(W^*HW\varphi)(y)=\left[-\frac{\hbar^2}{2}\frac{\partial^2}{\partial y^2}+V_m(y)\right]\varphi(y),
\end{equation}
where the modified potential $V_m(x)$ is related to $V(x)$ and $m(x)$ by 
\begin{equation}
\label{eq:mod_pot}
V_m(g(x))=V(g(x))-\frac{\hbar^2}{8}\frac{m''(x)}{m^2(x)}+\frac{7\hbar^2}{32}\frac{(m'(x))^2}{m^3(x)}.
\end{equation}
The reader is referred to~\cite{Bagchi04} for a unified approach to many exact results known in the literature of PDM systems.

\subsection{Piecewise-regular mass} 
\label{sub:piecewise}
The above construction fails when $m(x)$ is not smooth everywhere.  Indeed, if $m(x)$ is not $C^2$, the modified potential~\eqref{eq:mod_pot} is singular. 

This obstruction appears already at the classical level. Hamilton's equations for the PDM classical system~\eqref{eq:classical_hamiltonian} are
\begin{equation}
    \begin{cases}
        \dot{p}=-\dfrac{\partial \mathcal{H}}{\partial x}=\dfrac{p^2}{2m^2(x)}m'(x)-V'(x)\\\\
        \dot{x}=\dfrac{\partial  \mathcal{H}}{\partial p}=\dfrac{p}{m(x)}.
    \end{cases}
\end{equation}
We see that if $m(x)$ in~\eqref{eq:classical_hamiltonian} is not  differentiable, then Hamilton's equations are not defined (the Hamiltonian vector field associated to $\mathcal{H}$ is not smooth).

This raises the question of how to define a quantum system with a piecewise regular mass profile $m(x)$. A rather natural way to proceed is as follows. 
\par

Let $m\colon (a,b)\to \mathbb{R}$ be a mass profile defined on the (not necessarily bounded) open interval $(a,b)\subset\mathbb{R}$. Suppose that $m(x)$ is a piecewise regular mass profile, i.e., there exists a finite partition  $a=\alpha_0< \alpha_1<\cdots<\alpha_n=b$ of  $(a,b)$ such that $m$ and its derivative $m'$ are defined and continuous in $(\alpha_0,\alpha_1)\cup(\alpha_1,\alpha_2)\cup\cdots\cup (\alpha_{n-1},\alpha_n)$. In such a case, we can first define the symmetric PDM operator $H=\hat{p}\frac{1}{2m(\hat{x})}\hat{p}+V(\hat{x})$ acting on functions $\psi\in C^{\infty}_c\left((\alpha_0,\alpha_1)\cup(\alpha_1,\alpha_2)\cup\cdots\cup (\alpha_{n-1},\alpha_n)\right)$ vanishing at the discontinuity points $\alpha_0,\alpha_1,\ldots,\alpha_n$, and then consider the possible self-adjoint extensions of $H$ in $L^2(a,b)$.
\par
\subsection{A quantum particle in a box with jump-discontinuous mass}
\label{sub:model}
We now implement the above strategy in the simplest case: a free quantum particle (zero potential) in an interval whose \emph{mass profile $m(x)$ is piecewise-constant with a single jump-discontinuity}.

Consider a quantum particle in $\Omega=(-\ell_1,0)\cup (0,\ell_2)\equiv I_1\cup I_2$ with mass $m_1>0$ in the segment $I_1$, and mass $m_2>0$ in the segment $I_2$.

Consider on the space $L^2(\Omega)\simeq L^2(I_1)\oplus L^2(I_1)$ the linear operator $H_{\operatorname{min}}$, 
acting as
\begin{equation}
\label{eq:Hmin}
H_{\operatorname{min}}\psi_1\oplus\psi_2=\left(-\frac{\hbar^2}{2m_1}\psi_1''\right)\oplus \left(-\frac{\hbar^2}{2m_2}\psi_2''\right).
\end{equation}
with domain
\begin{equation}
D(H_{\operatorname{min}})=H_0^2(I_1)\oplus H_0^2(I_2),
\end{equation}
where $H^2_0(\Omega)$ denotes the closure in the Sobolev space $H^2(\Omega)$ of infinitely differentiable compactly supported functions $C^\infty_c(\Omega)$.
The adjoint operator $H^*_{\operatorname{min}}$ has the same functional form of~\eqref{eq:Hmin} but is defined on the larger domain $D(H^*_{\operatorname{min}})=H^2(I_1)\oplus H^2(I_2)$. Therefore, $H_{\operatorname{min}}$ is symmetric but not self-adjoint. A direct calculation shows that the deficiency indices $n_{\pm}=\operatorname{dim}\ker \left(H^*_{\operatorname{min}}\mp \ii I\right)$ are $n_+=n_-=4$; therefore, by von Neumann's theory (see~\cite[Ch. X]{ReedSimon}) the operator $H_{\operatorname{min}}$ admits infinitely many self-adjoint extensions in one-to-one correspondence with unitary matrices $U\in\mathcal{U}(4)$. Each unitary matrix encodes the so-called \emph{connection rules} for the boundary data, that is, the values of $\psi$ and  $(1/m)\psi'$ at the boundary points of $\Omega$. Indeed, we recall~\cite{Asorey05,Asorey15} that  
the `lack of self-adjointess' of $H^*_{\operatorname{min}}$ is measured by noting that for all $\varphi,\psi\in H^2(I_1)\oplus H^2(I_2)$,
\begin{equation}
\label{eq:boundary_term}
    \langle \varphi, H^*_{\operatorname{min}}\psi\rangle- \langle H^*_{\operatorname{min}} \varphi, \psi\rangle=
    \frac{\hbar^2}{2}\left[\overline{\varphi_1}\frac{1}{m_1}\psi_1'-\overline{\varphi_1'}\frac{1}{m_1}\psi_1\right]^{x=0^-}_{x=-\ell_1}+\frac{\hbar^2}{2}\left[\overline{\varphi_2}\frac{1}{m_2}\psi_2'-\overline{\varphi_2'}\frac{1}{m_2}\psi_2\right]^{x=\ell_2}_{x=0^+}.
\end{equation}
(The boundary terms are well defined by the Sobolev embedding $H^2(I_1)\oplus H^2(I_2)\subset C^{1}(I_1)\oplus C^1(I_2)$.) Hence, $H^*_{\operatorname{min}}$  is self-adjoint if restricted 
 (equivalently,  $H_{\operatorname{min}}$  is self-adjoint if extended) to the subspace of  wavefunctions for which the boundary term in~\eqref{eq:boundary_term} vanishes.

Denote the boundary data of a wavefunction $\psi\in H^2(I_1)\oplus H^2(I_2)$ by
\begin{align}
\gamma{\psi}&=\left(\psi_1(-\ell_1),\psi_1(0^-),\psi_2(0^+),\psi_2(\ell_2)\right)^T,\\
\nu{\psi}&=
\left(
-\frac{1}{m_1}\psi_1'(-\ell_1),
\frac{1}{m_1}\psi_1'(0^-),
-\frac{1}{m_2}\psi_2'(0^+),
\frac{1}{m_2}\psi_2'(\ell_2)
\right)^T.
\end{align}
The following result is adapted from the classification of self-adjoint extensions of the Laplacian operator on general bounded domain in~\cite{Facchi18}.
 \begin{prop}
 \label{prop:sa} 
 The set of all self-adjoint extensions of $H_{\operatorname{min}}$ is
 \begin{equation}
     \left\{H_U : D(H_U) \to L^2(I_1) \oplus L^2(I_2)   \colon U\in \mathcal{U}(4)\right\},
 \end{equation}
  where for all $U\in \mathcal{U}(4)$, 
\begin{equation}
\label{eq:Dom_H_U}
D(H_U)=\left\{\psi\in H^2(I_1)\oplus H^2(I_2)\colon \ii(\mathbb{I}+U)\gamma{\psi} =(\mathbb{I}-U)\nu{\psi}\right\},
\end{equation}   
and $\mathbb{I}$ is the identity matrix.
 \end{prop}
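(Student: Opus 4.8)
The plan is to prove Proposition~\ref{prop:sa} by exhibiting $(\mathbb{C}^4,\gamma,\nu)$ as a \emph{boundary triple} for the adjoint $H^*_{\operatorname{min}}$ and then invoking the standard bijection between self-adjoint extensions of a symmetric operator and unitaries acting on the auxiliary space. This is exactly the route of~\cite{Facchi18} for the Laplacian on a bounded domain; the only genuinely new feature here is the presence of the two distinct masses, which will be absorbed into the definition of the ``normal'' boundary map $\nu$, so that no weighted inner product is ever needed.

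\textbf{Step 1 (Green's identity in canonical form).} First I would rewrite the boundary term~\eqref{eq:boundary_term} in terms of $\gamma$ and $\nu$. A direct inspection shows that, for all $\varphi,\psi\in H^2(I_1)\oplus H^2(I_2)$,
\begin{equation*}
\langle\varphi,H^*_{\operatorname{min}}\psi\rangle-\langle H^*_{\operatorname{min}}\varphi,\psi\rangle=\frac{\hbar^2}{2}\Bigl(\langle\gamma\varphi,\nu\psi\rangle_{\mathbb{C}^4}-\langle\nu\varphi,\gamma\psi\rangle_{\mathbb{C}^4}\Bigr).
\end{equation*}
The point of this step is purely bookkeeping: the sign pattern built into $\gamma\psi$ and $\nu\psi$ at the endpoints $-\ell_1$ and $0^+$ (the outward-normal convention at the four points $-\ell_1,0^-,0^+,\ell_2$) is precisely what converts the evaluations $[\,\cdot\,]^{x=0^-}_{x=-\ell_1}$ and $[\,\cdot\,]^{x=\ell_2}_{x=0^+}$ into this single antisymmetric form with no residual cross terms between the $I_1$ and $I_2$ data, while the weights $1/m_1,1/m_2$ are simply carried along inside $\nu$. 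The boundary values are well defined by the Sobolev embedding $H^2(I_j)\hookrightarrow C^1(\overline{I_j})$, already invoked after~\eqref{eq:boundary_term}.

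\textbf{Step 2 (boundary triple axioms and the unitary parametrisation).} Next I would verify the two remaining axioms of a boundary triple: (i) the map $\psi\mapsto(\gamma\psi,\nu\psi)$ is surjective onto $\mathbb{C}^4\times\mathbb{C}^4$, and (ii) its kernel equals $D(H_{\operatorname{min}})=H^2_0(I_1)\oplus H^2_0(I_2)$. For (i) one constructs, on each interval separately, $H^2$ functions with arbitrarily prescribed value and first derivative at the two endpoints (e.g.\ functions that agree with suitable cubic polynomials near each endpoint and are cut off in the interior) and assembles them; for (ii) one uses the standard characterisation of $H^2_0$ of an interval as the elements of $H^2$ whose value and first derivative vanish at both endpoints. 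Once $(\mathbb{C}^4,\gamma,\nu)$ is a boundary triple, a symmetric operator $H$ with $H_{\operatorname{min}}\subset H\subset H^*_{\operatorname{min}}$ is self-adjoint iff the subspace $\{(\gamma\psi,\nu\psi):\psi\in D(H)\}\subset\mathbb{C}^4\times\mathbb{C}^4$ is maximal isotropic for the Hermitian form on the right of Step 1. Diagonalising that form through the combinations $\nu\psi\pm\ii\,\gamma\psi$, one checks that such a subspace is maximal isotropic precisely when it is the graph $\{\nu\psi-\ii\,\gamma\psi=U(\nu\psi+\ii\,\gamma\psi)\}$ of a \emph{unitary} $U\in\mathcal{U}(4)$ (isotropy forces $U$ to be an isometry, maximality forces it onto); rearranging this identity yields $\ii(\mathbb{I}+U)\gamma\psi=(\mathbb{I}-U)\nu\psi$, which is~\eqref{eq:Dom_H_U}. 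Distinct unitaries give distinct graphs, so $U\mapsto H_U$ is a bijection, consistently with the deficiency indices $n_\pm=4$ recorded above.

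\textbf{Main obstacle.} I do not expect a deep difficulty: the argument is essentially a careful adaptation of~\cite{Facchi18}. The one step that requires genuine attention is Step 1 — confirming that the specific sign choices in $\gamma$ and $\nu$, together with the mass weights, reproduce the \emph{canonical} symplectic form exactly, with no leftover cross terms and no spurious constants. A mismatch there would still yield a valid self-adjoint classification, but parametrised by a conjugate of $U$, so landing on~\eqref{eq:Dom_H_U} precisely as stated is what makes this step non-automatic. A secondary, purely technical point is the explicit construction in Step 2(i) of $H^2$ functions realising arbitrary boundary data on each subinterval.
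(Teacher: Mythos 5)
Your proposal is correct and follows essentially the same route the paper intends: the paper gives no proof of Proposition~\ref{prop:sa} but explicitly defers to the boundary-data/unitary parametrisation of~\cite{Facchi18}, which is precisely the boundary-triple argument you carry out (your Step 1 identity checks out against~\eqref{eq:boundary_term} with the stated sign conventions in $\gamma\psi$ and $\nu\psi$, and the Cayley-type rearrangement $\nu\psi-\ii\gamma\psi=U(\nu\psi+\ii\gamma\psi)$ reproduces~\eqref{eq:Dom_H_U} exactly). The count via deficiency indices $n_\pm=4$ mentioned in the paper is consistent with, and subsumed by, your parametrisation.
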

For example,  $U=\mathbb{I}$ and $U=-\mathbb{I}$ correspond to Dirichlet b.c.\ and Neumann b.c., respectively ($\psi(x)=0$ and $\psi'(x)=0$ at the boundary points, respectively).
 \par

A special class of b.c.\ are those that are invariant under dilations, or \emph{scale-free} b.c.. Those are exactly the b.c.\ that do not `mix' the values of $\psi$ and $\psi'$ at the boundary. The next definition makes this condition precise.
\begin{definition}[Scale-free extensions and boundary conditions]\label{def:ScaleFree}
    We say that $H_U$ is a scale-free extension of $H_{\operatorname{min}}$, or that the corresponding boundary conditions are scale-free, if $U\in\mathcal{U}(4)$ satisfies $\operatorname{Ran}(\mathbb{I}+U)\cap\operatorname{Ran}(\mathbb{I}-U)=\{0\}$. In this case, the domain of $H_U$ defined in~\eqref{eq:Dom_H_U}, is given by the conditions \begin{equation}
        \gamma\psi\in\ker(\mathbb{I}+U),\quad \nu\psi\in\ker(\mathbb{I}-U),
    \end{equation} 
    or, equivalently, 
    \begin{equation}
    \gamma\psi\in\operatorname{Ran}(\mathbb{I}-U),\quad \nu\psi\in\operatorname{Ran}(\mathbb{I}+U).
        \end{equation} 
\end{definition}
The following proposition characterises the scale-free self-adjoint extensions $H_U$.
\begin{prop}\label{prop:ScaleFree} Let $U$ be a unitary matrix. The following are equivalent:
    \begin{enumerate}
        \item $\operatorname{Ran}(\mathbb{I}+U)\cap\operatorname{Ran}(\mathbb{I}-U)=\{0\}$;
        \item $\sigma(U)\subset\{-1,1\}$;
        \item $U=\mathbb{I}-2P$, with $P=P^2=P^*$ orthogonal projection.
    \end{enumerate}
\end{prop}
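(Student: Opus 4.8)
The plan is to prove the chain of implications $(2)\Rightarrow(3)\Rightarrow(1)\Rightarrow(2)$, using the spectral theorem for unitary matrices throughout. The key observation is that for a unitary $U$, the operators $\mathbb{I}\pm U$ are both normal, and in fact simultaneously diagonalisable with $U$ itself, so everything can be read off eigenspace by eigenspace. Concretely, write $\C^4=\bigoplus_{\lambda}E_\lambda$ as the orthogonal direct sum of eigenspaces of $U$, where $\lambda$ ranges over $\sigma(U)\subset\{z\in\C:|z|=1\}$. On $E_\lambda$, the operator $\mathbb{I}+U$ acts as multiplication by $1+\lambda$ and $\mathbb{I}-U$ as multiplication by $1-\lambda$.

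For $(2)\Rightarrow(3)$: if $\sigma(U)\subset\{-1,1\}$, let $P$ be the orthogonal projection onto the eigenspace $\ker(\mathbb{I}+U)$ (the $-1$-eigenspace). Since $U$ is diagonalisable with eigenvalues in $\{-1,1\}$, we have $U=(+1)\cdot(\mathbb{I}-P)+(-1)\cdot P=\mathbb{I}-2P$, and $P=P^2=P^*$ because it is an orthogonal projection onto an eigenspace of a normal operator. For $(3)\Rightarrow(1)$: if $U=\mathbb{I}-2P$, then $\mathbb{I}+U=2(\mathbb{I}-P)$ and $\mathbb{I}-U=2P$, so $\operatorname{Ran}(\mathbb{I}+U)=\operatorname{Ran}(\mathbb{I}-P)=\ker P$ and $\operatorname{Ran}(\mathbb{I}-U)=\operatorname{Ran}(P)$; these are orthogonal complements, hence intersect only in $\{0\}$. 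For $(1)\Rightarrow(2)$: argue contrapositively. Suppose $U$ has an eigenvalue $\lambda\in\sigma(U)$ with $\lambda\notin\{-1,1\}$, and pick a unit eigenvector $v\in E_\lambda$. Then $(\mathbb{I}+U)v=(1+\lambda)v$ and $(\mathbb{I}-U)v=(1-\lambda)v$; since $\lambda\neq -1$ and $\lambda\neq 1$, both scalars $1+\lambda$ and $1-\lambda$ are nonzero, so $v$ (up to rescaling) lies in both $\operatorname{Ran}(\mathbb{I}+U)$ and $\operatorname{Ran}(\mathbb{I}-U)$, giving a nonzero vector in the intersection and contradicting $(1)$.

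I do not expect a serious obstacle here; the statement is essentially a repackaging of the spectral theorem, and the only points requiring a little care are (a) justifying that $\operatorname{Ran}(\mathbb{I}-P)$ and $\operatorname{Ran}(P)$ are mutually orthogonal when $P$ is an \emph{orthogonal} projection — this is where self-adjointness of $P$ is used, as opposed to a general idempotent — and (b) noting in $(1)\Rightarrow(2)$ that an eigenvalue on the unit circle other than $\pm 1$ automatically makes \emph{both} $1\pm\lambda$ nonzero, which is immediate but is the crux of why the two ranges overlap. One could alternatively prove $(1)\Leftrightarrow(2)$ directly by decomposing an arbitrary vector along eigenspaces and computing the two ranges as $\bigoplus_{\lambda\neq -1}E_\lambda$ and $\bigoplus_{\lambda\neq 1}E_\lambda$ respectively, whose intersection is $\bigoplus_{\lambda\neq\pm1}E_\lambda$; this is zero precisely when no such $\lambda$ occurs, i.e. when $(2)$ holds. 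I would likely present this eigenspace computation as the main engine and derive all three equivalences from it in one stroke.
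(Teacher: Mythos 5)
Your proposal is correct and follows essentially the same route as the paper: the same cycle of implications (merely entered at a different point), with $(2)\Rightarrow(3)$ via the spectral theorem, $(3)\Rightarrow(1)$ via orthogonality of $\operatorname{Ran}(P)$ and $\operatorname{Ran}(\mathbb{I}-P)$, and $(1)\Rightarrow(2)$ by exhibiting a nonzero vector in the intersection of the two ranges when some eigenvalue $\lambda$ has $\lambda^2\neq 1$ (the paper phrases this directly via $(\mathbb{I}+U)(\mathbb{I}-U)v=(1-\lambda^2)v$, you phrase it contrapositively, but the content is identical). No gaps.
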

\begin{proof}
    We first prove $(1)\Rightarrow (2)$. Suppose that $v\neq 0$ is an eigenvector of $U$ with eigenvalue $\lambda$. From $Uv=\lambda v$, we have
    \begin{align*}
        (\mathbb{I}+U)(\mathbb{I}-U)v&=(\mathbb{I}+U)(1-\lambda)v=(1-\lambda^2)v\\
        (\mathbb{I}-U)(\mathbb{I}+U)v&=(\mathbb{I}-U)(1+\lambda)v=(1-\lambda^2)v.
    \end{align*}
    We conclude that $(1-\lambda^2)v\in \operatorname{Ran}(\mathbb{I}+U)\cap\operatorname{Ran}(\mathbb{I}-U)$. This implies $(1-\lambda^2)v=0$, namely $\lambda^2=1$.  
    \par
  We now show $(2)\Rightarrow (3)$. By the spectral theorem for normal matrices, $U=Q-P$, where $Q$ (respectively $P$) is the orthogonal projection onto the eigenspace relative to the eigenvalue $-1$ (respectively $+1$). Use $Q+P=I$ to conclude.
   \par
It remains to show $(3)\Rightarrow (1)$. Suppose $U=\mathbb{I}-2P$, and denote $Q=\mathbb{I}-P$. We have $\mathbb{I}+U=2(\mathbb{I}-P)=2Q$ and $\mathbb{I}-U=2P$. Since $PQ=QP=0$, we have that the $\operatorname{Ran}(\mathbb{I}+U)$ and $\operatorname{Ran}(\mathbb{I}-U)$ are orthogonal. 
 \end{proof}

Dirichlet ($U=\mathbb{I}$) and Neumann ($U=-\mathbb{I}$) are special cases of scale-free b.c.. General Robin b.c.\ are not scale-free.
\par
The scale-free self-adjoint extensions are nonnegative.
\begin{prop}
\label{prop:nonnegative}
    Let $H_U$, with $U=\mathbb{I}-2P$, $P$ orthogonal projection. Then, $H_U\geq0$.
\end{prop}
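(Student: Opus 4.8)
Here is how I would prove Proposition~\ref{prop:nonnegative}.

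The plan is to evaluate the quadratic form $\langle\psi,H_U\psi\rangle$ directly on $D(H_U)$. This is legitimate because $D(H_U)\subset H^2(I_1)\oplus H^2(I_2)$, so $H_U\psi\in L^2(I_1)\oplus L^2(I_2)$ and the pairing is finite. First I would integrate by parts once on each of $I_1=(-\ell_1,0)$ and $I_2=(0,\ell_2)$, moving a derivative from $\psi_j''$ onto $\overline{\psi_j}$. This gives
\[
\langle\psi,H_U\psi\rangle=\frac{\hbar^2}{2m_1}\|\psi_1'\|_{L^2(I_1)}^2+\frac{\hbar^2}{2m_2}\|\psi_2'\|_{L^2(I_2)}^2+\mathcal{B}(\psi),
\]
where $\mathcal{B}(\psi)$ collects the boundary products evaluated at $-\ell_1,0^-,0^+,\ell_2$; these are finite by the Sobolev embedding $H^2\subset C^1$ used already after~\eqref{eq:boundary_term}.

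The second step is bookkeeping: one checks that $\mathcal{B}(\psi)=-\tfrac{\hbar^2}{2}\langle\gamma\psi,\nu\psi\rangle_{\C^4}$, i.e.\ $\mathcal{B}$ is the diagonal restriction of the boundary sesquilinear form appearing in~\eqref{eq:boundary_term}. Matching the four terms produced by integration by parts against the entries of $\gamma\psi$ and $\nu\psi$ only requires keeping track of the sign conventions built into the definition of $\nu\psi$ (the minus signs in front of $-\tfrac1{m_1}\psi_1'(-\ell_1)$ and $-\tfrac1{m_2}\psi_2'(0^+)$). Hence
\[
\langle\psi,H_U\psi\rangle=\frac{\hbar^2}{2m_1}\|\psi_1'\|_{L^2(I_1)}^2+\frac{\hbar^2}{2m_2}\|\psi_2'\|_{L^2(I_2)}^2-\frac{\hbar^2}{2}\langle\gamma\psi,\nu\psi\rangle_{\C^4}.
\]

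The third step uses the scale-free structure. Writing $U=\mathbb{I}-2P$ and $Q=\mathbb{I}-P$, we have $\mathbb{I}+U=2Q$ and $\mathbb{I}-U=2P$, so $\ker(\mathbb{I}+U)=\ker(Q)=\operatorname{Ran}(P)$ and $\ker(\mathbb{I}-U)=\ker(P)=\operatorname{Ran}(Q)$. By Definition~\ref{def:ScaleFree}, $\psi\in D(H_U)$ forces $\gamma\psi\in\operatorname{Ran}(P)$ and $\nu\psi\in\operatorname{Ran}(Q)$; since $P$ and $Q$ are orthogonal projections with $PQ=QP=0$, the subspaces $\operatorname{Ran}(P)$ and $\operatorname{Ran}(Q)$ are orthogonal in $\C^4$, whence $\langle\gamma\psi,\nu\psi\rangle_{\C^4}=0$. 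The boundary term drops out and we are left with $\langle\psi,H_U\psi\rangle=\tfrac{\hbar^2}{2m_1}\|\psi_1'\|^2+\tfrac{\hbar^2}{2m_2}\|\psi_2'\|^2\ge 0$, which is the assertion.

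There is no real obstacle here; the only delicate point is the sign matching in the second step, which must be done carefully against the conventions in the definitions of $\gamma\psi$ and $\nu\psi$. I would also emphasise that scale-freeness is essential: for a general $U\in\mathcal{U}(4)$ the term $-\tfrac{\hbar^2}{2}\langle\gamma\psi,\nu\psi\rangle_{\C^4}$ is still real on $D(H_U)$ (this is precisely the self-adjointness condition~\eqref{eq:boundary_term} with $\varphi=\psi$) but it need not be nonnegative, so a general $H_U$ need not be a nonnegative operator (e.g.\ Robin-type extensions with a suitable sign acquire a negative eigenvalue).
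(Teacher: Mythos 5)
Your proof is correct and follows essentially the same route as the paper: integrate by parts to obtain the kinetic Dirichlet form plus the boundary term $-\tfrac{\hbar^2}{2}\langle\gamma\psi,\nu\psi\rangle_{\C^4}$, then use that for $U=\mathbb{I}-2P$ the domain condition forces $\gamma\psi\in\operatorname{Ran}(P)$ and $\nu\psi\in\operatorname{Ran}(\mathbb{I}-P)$, which are orthogonal, so the boundary term vanishes. The only cosmetic difference is that you invoke Definition~\ref{def:ScaleFree} directly while the paper extracts the same orthogonality from the condition $\ii(\mathbb{I}-P)\gamma\psi=P\nu\psi$ in~\eqref{eq:Dom_H_U}; your sign bookkeeping against the conventions for $\gamma\psi$ and $\nu\psi$ checks out.
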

\begin{proof} 
Let $\psi\in D(H_U)$. Integrating by parts, 
\begin{equation}
    \langle \psi,H_U\psi\rangle_{L^2(\Omega)}=\int_{I_1}\frac{\hbar^2}{2m_1}|\psi'(x)|^2\, dx+\int_{I_2}\frac{\hbar^2}{2m_2}|\psi'(x)|^2\, dx-\frac{\hbar^2}{2}\langle\gamma\psi,\nu\psi\rangle_{\mathbb{C}^4}.
\end{equation}
When $U=\mathbb{I}-2P$ the condition defining $D(H_U)$ in~\eqref{eq:Dom_H_U} reads $\ii(\mathbb{I}-P)\gamma\psi=P\nu\psi$; since $P$ and $\mathbb{I}-P$ have orthogonal ranges, we have 
\begin{equation}
\label{eq:orthogonality_scalefree}
\langle\gamma\psi,\nu\psi\rangle_{\mathbb{C}^4}=0.
\end{equation}
\end{proof}
\subsection{PDM Hamiltonian in physics}
\label{sub:physics} 
The motion of free carriers in materials (such as electrons and holes in semiconductors) of nonuniform chemical composition is sometimes described by means of a Hamiltonian with a position-dependent effective mass. It was soon realised that a mass varying in space comes with the problem of ordering
ambiguity in defining the kinetic  term of the Hamiltonian~\cite{Ross83}. In the physics literature, a PDM operator $H$ in divergence form as in~\eqref{eq:divergence} was considered as early as 1966 in the work of BenDaniel and Duke~\cite{BenDaniel66} on one-electron models in semiconductors. 
The setting of the present paper of a jump-discontinuous mass, appeared in the literature of quantum transport in semiconductors  as a model of abrupt junction of two materials~\cite{BenDaniel66,Dekar99,Pena17,Koc05}. In those papers, to extend the analogue of $H_{\operatorname{min}}$ in~\eqref{eq:Hmin} to a self-adjoint operator the {connection rules} of the wave function across the junction are the continuity of $\psi(x)$ and  $(1/m(x))\psi'(x)$. Together with the condition of vanishing wavefunction at the edge points, the connection rules identify the scale-free self-adjoint extension of $H_{\operatorname{min}}$ with boundary data connected by the unitary matrix
\begin{equation}
\label{eq:U_KirDir}
U=
\begin{pmatrix}
    1 &0 &0 &0\\
        0 &0 &-1 &0\\
            0 &-1 &0 &0\\
                0 &0 &0 &1
\end{pmatrix}.
\end{equation}
We stress that the choice~\eqref{eq:U_KirDir}  corresponds to just one of the infinitely many self-adjoint extensions $H_U$, $U\in\mathcal{U}(4)$.

\section{Spectral properties}
\label{sec:spectral_problem}
Consider now the spectral problem $H_U\psi_E=E\psi_E$ of the Hamiltonian $H_U$, $U\in\mathcal{U}(4)$, defined in Sec.~\ref{sub:model}. The next discussion follows by standard arguments adapted to our setting of different masses in $I_1$ and $I_2$.

\par

\subsection{Generalities}
\label{sec:spectral_gen} The eigenfunctions and eigenvalues of $H_U$ have the form 
\begin{equation}
\label{eq:eigenfunction}
\psi_E(x)=
\begin{cases}
c_1\rme^{\ii  k_1 x}+d_1 \rme^{-\ii  k_1 x} &\text{for $x\in I_1$}\\
c_2\rme^{\ii  k_2 x}+d_2 \rme^{-\ii  k_2 x} &\text{for $x\in I_2$}
\end{cases},\qquad 
E=\frac{\hbar^2k_1^2}{2m_1}=\frac{\hbar^2k_2^2}{2m_2},
\end{equation}
with wavenumbers $k_1,k_2$ related by
$\frac{k_1^2}{m_1}={\frac{k_2^2}{m_2}}$.
Imposing  $\psi_E\in D(H_U)$ from~\eqref{eq:Dom_H_U}, forces the vector of coefficients $(c_1,d_1,c_2,d_2)^T$ to be in the null space of a $4\times 4$ \emph{spectral matrix}
 \begin{equation}\label{eq:AU}
 	A_U(\kappa)=(\mathbb{I}+U)X(\kappa)+\kappa (\mathbb{I}-U)G Y(\kappa),
 \end{equation}
 with $G=\mathrm{diag}\left(\frac{1}{\sqrt{m_1}},\frac{1}{\sqrt{m_1}},\frac{1}{\sqrt{m_2}},\frac{1}{\sqrt{m_2}}\right)$,
 \begin{align}\label{eq:X}
 	X(\kappa)=
 	\begin{pmatrix}
 		\rme^{-\ii  \omega_1 \kappa } & \rme^{\ii  \omega_1 \kappa } & 0 & 0  & \\
 		1 & 1  & 0 & 0  &\\
		0 & 0 &  1 & 1 &\\
		0 & 0 &  \rme^{\ii  \omega_2 \kappa } & \rme^{-\ii  \omega_2 \kappa } &
        \end{pmatrix},
    \end{align}
 \begin{align}\label{eq:Y}
 	Y(\kappa)=
 	\begin{pmatrix}
	\rme^{-\ii  \omega_1 \kappa } & { -\rme^{\ii  \omega_1 \kappa }} & 0 & 0  & \\
	-1 & 1  & 0 & 0  &\\
	0 & 0 &  1 &  -1&\\
	0 & 0 &   -\rme^{\ii  \omega_2 \kappa } &  \rme^{-\ii  \omega_2 \kappa } &
	\end{pmatrix}
    \end{align}
 parametrised by the \emph{spectral parameter} $\kappa$ defined as
\begin{equation}
\label{eq:kappa}
\frac{k_1^2}{m_1}=\frac{k_2^2}{m_2}=:\kappa^2.
\end{equation}
The spectral matrix $A_{U}(\kappa)$ depends explicitly on the masses $m_i$,  and on the frequencies $\omega_i=\sqrt{m_i}\ell_i$, $i=1,2$. 
The requirement  $\psi_E\neq 0$ is therefore equivalent to the null space of $A_U(\kappa)$ being nontrivial. This occurs whenever $\kappa$ is a solution of the \emph{secular equation}
\begin{equation}
\label{eq:secular_varkappa}
S_U(\kappa)=0,
\end{equation}
where $S_U(\kappa)=\det A_U(\kappa)$ is the so-called \emph{spectral function}. 

If $\kappa$ is a zero of the spectral function $S_U$, a corresponding eigenfunction $\psi_{E_{\kappa}}$ with eigenvalue $E_{\kappa}=\hbar^2\kappa^2/2$ is given by~\eqref{eq:eigenfunction} with $(c_1,d_1,c_2,d_2)^T$ a nonzero vector in the (nontrivial) null space $\ker A_U(\kappa)$. 

\subsection{Scale-free b.c.\ and linear motions on the torus}
\label{sec:tori}
Recall that if $U=\mathbb{I}-2P$, with $P$ orthogonal projection, the equations for the boundary data do not mix the value of the functions $\gamma\psi$ and its derivatives $\nu\psi$ (scale-free b.c.). Recall (Proposition~\ref{prop:nonnegative}) that the eigenvalues of $H_{\mathbb{I}-2P}$ are nonnegative. In such a case, the spectral matrix has the form
\begin{equation}
    A_{\mathbb{I}-2P}(\kappa)=2\left[(\mathbb{I}-P)X(\kappa)+\kappa P G Y(\kappa)\right].
\end{equation}
We have the following result.
\begin{thm}
\label{thm:tori}
    Let $U=\mathbb{I}-2P\in\mathcal{U}(4)$, where $P$ is an orthogonal projection of rank $r=\operatorname{rank} P\in\{0,1,2,3,4\}$.
    The spectral function $S_{\mathbb{I}-2P}(\kappa)=\det A_{\mathbb{I}-2P}(\kappa)$ has the form %$S_U(k)=\det A_U(k)$ 
\begin{equation}\label{eq:Spectral}
    S_{\mathbb{I}-2P}(\kappa)=\kappa^r   f_P(\omega_1\kappa,\omega_2\kappa) g_P(\kappa),
\end{equation}
where $g_P(\kappa) \neq 0$  for all $\kappa \in \R$ , and $f_P\colon \T^2\to \R$ is a trigonometric polynomial on the two-dimensional torus $\T^2=\R^2/(2\pi\Z)^2$ of side $2\pi$.
\end{thm}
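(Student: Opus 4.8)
The plan is to compute $S_{\mathbb{I}-2P}(\kappa) = \det A_{\mathbb{I}-2P}(\kappa) = 2^4 \det\!\bigl[(\mathbb{I}-P)X(\kappa) + \kappa P G Y(\kappa)\bigr]$ and track the $\kappa$-dependence explicitly. First I would observe that $G$ is a fixed invertible diagonal matrix, so $\det A_{\mathbb{I}-2P}(\kappa) = 16\,\det\!\bigl[(\mathbb{I}-P)X(\kappa)G^{-1} + \kappa P Y(\kappa)\bigr]\det G$. The entries of $X(\kappa)$ and $Y(\kappa)$ are, by inspection of \eqref{eq:X}--\eqref{eq:Y}, either constants ($1$, $-1$, $0$) or the four exponentials $\rme^{\pm\ii\omega_1\kappa}$, $\rme^{\pm\ii\omega_2\kappa}$. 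Hence every entry of the matrix inside the determinant is an affine function of $\kappa$ whose coefficients are (products of) $\rme^{\pm\ii\omega_j\kappa}$ and constants; more precisely each entry has the form $a + \kappa b$ with $a, b$ depending on $\kappa$ only through $\rme^{\pm\ii\omega_1\kappa},\rme^{\pm\ii\omega_2\kappa}$. Expanding the $4\times 4$ determinant by the Leibniz formula produces a finite sum of products of four such entries; collecting powers of $\kappa$ gives $S_{\mathbb{I}-2P}(\kappa) = \sum_{j=0}^{4} \kappa^j\, h_j(\omega_1\kappa,\omega_2\kappa)$, where each $h_j$ is a trigonometric polynomial on $\T^2$ (a finite sum of terms $\rme^{\ii(p\,\theta_1 + q\,\theta_2)}$ with $p,q$ bounded integers).

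The key structural claim is that the \emph{lowest} power of $\kappa$ appearing with a nonzero coefficient is exactly $\kappa^r$, and that after factoring it out the remaining factor splits as a trigonometric polynomial $f_P$ times a nonvanishing function $g_P$. To see which $\kappa$-powers survive, note that in the Leibniz expansion a term picks up a factor of $\kappa$ precisely for each matrix row whose chosen entry comes from the $\kappa P Y(\kappa)$ summand rather than from $(\mathbb{I}-P)X(\kappa)G^{-1}$. Since $\mathbb{I}-P$ has rank $4-r$ and $P$ has rank $r$, a rank/kernel argument (choose an orthonormal basis adapted to the splitting $\C^4 = \operatorname{Ran}(\mathbb{I}-P)\oplus\operatorname{Ran}P$ and write the matrix in block form) shows that any nonvanishing Leibniz term must select at least $r$ of its entries from the $P$-block, so $\kappa^r \mid S_{\mathbb{I}-2P}(\kappa)$; conversely there is a term of order exactly $\kappa^r$. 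This is the step I expect to be the main obstacle: one must argue that the coefficient of $\kappa^r$ is not \emph{identically} zero as a function on the torus, which requires a careful look at the block structure $\begin{pmatrix} M_{11} & M_{12} \\ M_{21} & M_{22}\end{pmatrix}$ induced by $P$, using that $X(\kappa)$ and $Y(\kappa)$ differ only by sign flips in two columns so their $r\times r$ minors are related but generically independent.

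Having written $S_{\mathbb{I}-2P}(\kappa) = \kappa^r \sum_{j=0}^{4-r}\kappa^j\, \tilde h_j(\omega_1\kappa,\omega_2\kappa)$, the final step is to absorb the polynomial-in-$\kappa$ part into the two advertised factors. The cleanest route is: either the whole sum $\sum_{j}\kappa^j\tilde h_j$ is already of the form $f_P(\omega_1\kappa,\omega_2\kappa)g_P(\kappa)$ by direct inspection of the three genuinely distinct cases $r\in\{0,1,2\}$ (the cases $r=3,4$ being mirror images via $P\leftrightarrow\mathbb{I}-P$ up to reindexing, or handled by the same expansion), or one notes that for each fixed $P$ the explicit determinant computation collapses the polynomial tail — because the entries of $Y(\kappa)$ are obtained from those of $X(\kappa)$ by sign changes only, the "pure $Y$" contributions degenerate and no honest $\kappa^{j\ge1}$ survives beyond an overall scalar. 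Concretely, I would just compute $\det A_{\mathbb{I}-2P}(\kappa)$ for a complete set of representatives of $\mathcal{U}(4)$-conjugacy classes of projections $P$ (there is essentially one per rank, since conjugating $P$ permutes/ mixes boundary components), read off $f_P$ as the trigonometric-polynomial part and $g_P$ as the residual factor, and verify $g_P(\kappa)\neq 0$ for all real $\kappa$ directly from its explicit form (it will be a nonzero constant or a nonvanishing monomial in $\rme^{\ii\omega_j\kappa}$ times a constant). Reality of $f_P$ follows since $S_{\mathbb{I}-2P}(\kappa)$ is real (the matrix pencil is, up to the invertible $G$ and the projection, essentially a real-analytic object whose determinant one checks is real for real $\kappa$ using $\overline{X(\kappa)} = X(-\kappa)$ and the like).
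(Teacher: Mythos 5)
Your first half is on the right track, but the second half has a genuine gap, and the fix is simpler than either of the routes you propose. The key structural fact you are missing is that the two summands $(\mathbb{I}-P)X(\kappa)$ and $\kappa P G Y(\kappa)$ have \emph{orthogonal ranges}: the first maps into $\operatorname{Ran}(\mathbb{I}-P)$, the second into $\operatorname{Ran}P$. Hence, in an orthonormal basis adapted to $\mathbb{C}^4=\operatorname{Ran}P\oplus\operatorname{Ran}(\mathbb{I}-P)$, every row of the matrix is \emph{homogeneous} in $\kappa$: the $r$ rows indexed by $\operatorname{Ran}P$ are $\kappa$ times rows of $PGY(\kappa)$, the remaining $4-r$ rows are rows of $(\mathbb{I}-P)X(\kappa)$, and no entry has the mixed form $a+\kappa b$. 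Multilinearity of the determinant in the rows then gives \emph{exactly}
\begin{equation*}
\det A_{\mathbb{I}-2P}(\kappa)=2^4\,\kappa^r\,\det\bigl[(\mathbb{I}-P)X(\kappa)+PGY(\kappa)\bigr],
\end{equation*}
and the remaining determinant depends on $\kappa$ only through $(\omega_1\kappa,\omega_2\kappa)$ because $X$ and $Y$ do. There is no polynomial tail $\sum_{j\geq1}\kappa^{j}\tilde h_j$ to dispose of; it is identically absent. This is, in essence, the paper's one-paragraph proof.

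By contrast, your Leibniz-expansion framing creates a phantom problem: you correctly obtain divisibility by $\kappa^r$ from the adapted-basis block structure, but you are then left with $\kappa^r\sum_{j=0}^{4-r}\kappa^{j}\tilde h_j$, and neither of your two proposed ways of killing the $j\geq1$ terms works as stated. The claim that the ``pure $Y$'' contributions degenerate because $Y$ differs from $X$ only by sign flips is not an argument, and the relevant cancellation has nothing to do with the relation between $X$ and $Y$ --- it is the orthogonality of the ranges. The fallback of computing $\det A_{\mathbb{I}-2P}$ for one representative $P$ per rank is also insufficient: the spectral function is \emph{not} invariant under conjugating $P$ by a unitary, since $X$, $Y$, $G$ are fixed matrices; indeed the paper exhibits three rank-one projections (the segment, the pendant, the rose) with three genuinely different $f_P$. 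Finally, note that $g_P$ in the statement is just a nonvanishing factor (a constant in all the worked examples) peeled off the trigonometric polynomial for convenience; it is not there to absorb any residual polynomial dependence on $\kappa$.
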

\begin{proof}
Defining $Q=\mathbb{I}-P$, the matrix $QX(\kappa)+\kappa PGY(\kappa)$ can be rewritten in an eigenbasis of $P$ and $Q$, where $\kappa$ will appear as a multiplicative factor of the first $r$ rows, which yields $\det(A_U(\kappa))=\kappa^r\det(QX(\kappa)+PGY(\kappa))$. Moreover, $X(\kappa)$ and $Y(\kappa)$ depend on $\kappa$ only through the angles $(\varphi_1,\varphi_2)=(\omega_1 \kappa,\omega_2\kappa)$. By inspection of \eqref{eq:X} and \eqref{eq:Y} one  finds that $\det(QX(\kappa)+PGY(\kappa))$ is a trigonometric polynomial. After possibly factoring a non-vanishing term $g_P(\kappa)$, one gets \eqref{eq:Spectral}.
\end{proof}
In this paper, we will always assume the following.
\begin{assumption}[Nonresonant condition] 
\label{assump:1} The frequencies $\omega_1=\sqrt{m_1}\ell_1>0$ and $\omega_2=\sqrt{m_2}\ell_2>0$  are \emph{rationally independent} or \emph{nonresonant}, i.e. $\omega_1/\omega_2\notin\mathbb{Q}$.
\end{assumption}
We now recall a classical result on linear motions on the torus, see~\cite[Sec. 51]{Arnold89} 
\begin{prop}
\label{prop:mean}
    Consider, the linear flow $\varphi\colon \R\to\T^2$, with  $\varphi(\kappa)=\left(\varphi_1(\kappa),\varphi_2(\kappa)\right)$,
\begin{equation}
\label{eq:flow}
   \varphi_1(\kappa)=\omega_1\kappa,\quad \varphi_2(\kappa)=\omega_2\kappa\mod 2\pi.
\end{equation} 
Let $f:\mathbb{T}^2 \to \C$ be Riemann integrable. Then, under Assumption~\ref{assump:1}, the \emph{time average} of $f\circ\varphi$ exists and coincides with the \emph{space average} of $f$, i.e.
\begin{equation}
    \lim_{K\to +\infty}\frac{1}{K}\int_0^Kf\left(\varphi(\kappa)\right)d\kappa=\frac{1}{(2\pi)^2}\int_{\mathbb{T}^2}f(\varphi)d\varphi.
\end{equation}
In particular, the trajectory $\{\varphi(\kappa)\}_{\kappa\in \R}$ is dense in $\mathbb{T}^2$.
\end{prop}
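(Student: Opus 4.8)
The statement to prove is Proposition~\ref{prop:mean}, the classical Weyl equidistribution / ergodicity of an irrational linear flow on $\T^2$. Here is how I would organize the argument.

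\medskip

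\textbf{Strategy.} The plan is to prove the time-average statement first for trigonometric characters $f(\varphi) = e^{\ii(n_1\varphi_1 + n_2\varphi_2)}$, $(n_1,n_2)\in\Z^2$, by direct computation; then extend to trigonometric polynomials by linearity; then to continuous functions on $\T^2$ by a Stone--Weierstrass density argument together with a uniform bound on the time-average operator; and finally to Riemann integrable functions by sandwiching between continuous functions from above and below. The density statement ($\{\varphi(\kappa)\}$ dense in $\T^2$) then follows as a corollary: if the orbit avoided some open ball $B$, applying the equidistribution result to a continuous bump function supported in $B$ would force its space average to vanish, a contradiction.

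\medskip

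\textbf{Key steps in order.} First, for a character with $(n_1,n_2)\neq(0,0)$, one computes
\begin{equation*}
\frac{1}{K}\int_0^K e^{\ii(n_1\omega_1 + n_2\omega_2)\kappa}\, d\kappa = \frac{e^{\ii(n_1\omega_1+n_2\omega_2)K}-1}{\ii(n_1\omega_1+n_2\omega_2)K},
\end{equation*}
which is legitimate precisely because the nonresonance Assumption~\ref{assump:1} guarantees $n_1\omega_1 + n_2\omega_2 \neq 0$ whenever $(n_1,n_2)\neq(0,0)$; this quantity is $O(1/K) \to 0$, matching the space average $\frac{1}{(2\pi)^2}\int_{\T^2} e^{\ii(n_1\varphi_1+n_2\varphi_2)}d\varphi = 0$. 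For $(n_1,n_2)=(0,0)$ both sides equal $1$. Second, by linearity the convergence holds for every trigonometric polynomial. Third, observe that the time-average functional $f \mapsto \frac{1}{K}\int_0^K f(\varphi(\kappa))\,d\kappa$ is bounded by $\|f\|_\infty$ uniformly in $K$; given $f\in C(\T^2)$ and $\varepsilon>0$, pick a trigonometric polynomial $p$ with $\|f-p\|_\infty<\varepsilon$ (Stone--Weierstrass on the compact group $\T^2$), and estimate
\begin{equation*}
\left|\frac{1}{K}\int_0^K f(\varphi(\kappa))d\kappa - \frac{1}{(2\pi)^2}\int_{\T^2} f \right| \le 2\varepsilon + \left|\frac{1}{K}\int_0^K p(\varphi(\kappa))d\kappa - \frac{1}{(2\pi)^2}\int_{\T^2} p\right|,
\end{equation*}
and let $K\to\infty$ then $\varepsilon\to 0$. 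Fourth, for Riemann integrable $f$ (real-valued, bounded), use that for each $\varepsilon>0$ there exist $f_-, f_+ \in C(\T^2)$ with $f_- \le f \le f_+$ and $\frac{1}{(2\pi)^2}\int_{\T^2}(f_+-f_-) < \varepsilon$ (this is exactly the content of Riemann integrability on $\T^2$), and monotonicity of the time average together with the continuous case pins the $\liminf$ and $\limsup$ of the time averages of $f$ to within $\varepsilon$ of $\frac{1}{(2\pi)^2}\int_{\T^2}f$.

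\medskip

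\textbf{Main obstacle.} The computational heart is trivial once nonresonance is invoked; the only genuinely delicate point is the passage from continuous to Riemann integrable functions, i.e.\ producing the continuous approximants $f_\pm$ with small integral gap. This is where the hypothesis "Riemann integrable" (rather than merely "bounded measurable") is essential, and one must be slightly careful that approximating from above/below by continuous functions in the sup sense is impossible at discontinuities but approximation with small $L^1$-gap is available — this is precisely the characterization of Riemann integrability via the Lebesgue criterion (the discontinuity set has measure zero). Since the paper cites~\cite[Sec.~51]{Arnold89}, I would in practice keep the proof brief and refer to the standard treatment, spelling out only the character computation and the role of Assumption~\ref{assump:1}.
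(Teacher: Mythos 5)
The paper does not prove Proposition~\ref{prop:mean} at all: it states it as a recalled classical fact and defers entirely to \cite[Sec.~51]{Arnold89}. Your argument is correct and is precisely the standard Weyl equidistribution proof given in that reference (characters via nonresonance, trigonometric polynomials, Stone--Weierstrass, then the continuous sandwich for Riemann integrable functions, with complex $f$ handled by splitting into real and imaginary parts), so it fills the gap the paper leaves to the citation rather than diverging from anything in the paper.
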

By Theorem~\ref{thm:tori},  the condition for $E_{\kappa}=\hbar^2 \kappa^2/2$ to be an eigenvalue of $H_{\mathbb{I}-2P}$ can be written (excluding zero modes) as 
 \begin{equation}
     f_P(\omega_1\kappa,\omega_2\kappa)=0,
 \end{equation}
 where $f_P$ is a quasi-periodic function in $\mathbb{T}^2$ (a trigonometric polynomial). The equation $f_P(\varphi_1,\varphi_2)=0$ defines a curve $\Sigma_P$ embedded in $\mathbb{T}^2$. The zeros of $f_P(\omega_1\kappa,\omega_2\kappa)$ are given by the intersection of the linear flow~\eqref{eq:flow} with the curve $\Sigma_P$. See Fig.~\ref{fig:SF}. 
This observation, in the context of quantum graphs, is due to Barra and Gaspard~\cite{Barra2000}. An adaptation of their argument (see~\cite{Keating03} for an explicit construction) shows that there exists a probability measure on $\Sigma_P$ (known as \emph{Barra-Gaspard measure} in the literature of quantum graphs) that allows one to compute the Ces\`aro mean of general observables.
\begin{thm}
\label{thm:BG}
    Assume 
    that the flow~\eqref{eq:flow} is nowhere tangent to $\Sigma_P$. 
    Then, there exists a probability measure $\nu_P$ on $\Sigma_P$, such that
    \begin{equation}
        \lim_{E\to +\infty}\frac{1}{\#\{E_{\kappa}\leq E\}}\sum_{E_{\kappa}\leq E}g(E_{\kappa})=\int_{\Sigma_P}g(\varphi_1,\varphi_2)d\nu_P(\varphi_1,\varphi_2),
    \end{equation}
    for all piecewise continuous functions $g$.
\end{thm}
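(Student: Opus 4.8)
\textbf{Proof proposal for Theorem~\ref{thm:BG}.}

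The plan is to lift the sum over eigenvalues $E_\kappa \le E$ to a sum over intersection points of the linear flow $\varphi$ with the curve $\Sigma_P$, and then invoke an equidistribution result for these intersection points analogous to the classical ergodicity statement of Proposition~\ref{prop:mean}. Concretely, by Theorem~\ref{thm:tori} (excluding the finitely many zero modes coming from the factor $\kappa^r$, which do not affect the limit), the positive eigenvalues correspond to $\kappa > 0$ with $f_P(\varphi(\kappa)) = 0$, i.e.\ to the times at which the trajectory $\{\varphi(\kappa)\}$ hits $\Sigma_P$. Since $E_\kappa = \hbar^2\kappa^2/2$ is monotone in $\kappa$, the counting set $\{E_\kappa \le E\}$ is the same as $\{0 < \kappa \le K\}$ with $K = \sqrt{2E}/\hbar$, so it suffices to prove
\begin{equation*}
\lim_{K\to+\infty}\frac{1}{N(K)}\sum_{\substack{0<\kappa\le K\\ f_P(\varphi(\kappa))=0}} g(\varphi(\kappa)) = \int_{\Sigma_P} g\, d\nu_P,
\end{equation*}
where $N(K)$ is the number of such $\kappa$, for a probability measure $\nu_P$ to be identified.

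The key step is to realize the intersection times as a return map and use the ergodic theorem. First, parametrise $\Sigma_P$ (or each of its smooth arcs, since $f_P$ is a trigonometric polynomial and generically $\Sigma_P$ is a finite union of smooth closed curves) and push forward to $\T^2$ the flow-box picture: near $\Sigma_P$, the transversality hypothesis (the flow is nowhere tangent to $\Sigma_P$) guarantees that each crossing is transverse, hence isolated, and that $\Sigma_P$ is a global cross-section for the flow in the sense that almost every trajectory crosses it. The Poincar\'e return map to $\Sigma_P$ is then well-defined a.e., and one checks it preserves the measure on $\Sigma_P$ obtained by disintegrating Lebesgue measure on $\T^2$ along the flow direction; explicitly, at a point $p\in\Sigma_P$ this is
\begin{equation*}
d\nu_P(p) \;\propto\; |\,\omega \cdot n(p)\,|\; d\ell(p),
\end{equation*}
where $\omega = (\omega_1,\omega_2)$, $n(p)$ is the unit normal to $\Sigma_P$ at $p$, and $d\ell$ is arclength; the normalising constant makes $\nu_P$ a probability measure. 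This is precisely the Barra--Gaspard measure. The identity between the normalised sum over crossings and $\int g\, d\nu_P$ then follows from the Birkhoff ergodic theorem applied to the return map (which is ergodic because the linear flow is ergodic on $\T^2$ under Assumption~\ref{assump:1}), together with the Kac return-time formula to match the crossing density $N(K)/K$ with $\int_{\Sigma_P}|\omega\cdot n|\,d\ell/(2\pi)^2$. Extending from continuous $g$ to piecewise continuous $g$ is routine once one knows $\nu_P$ gives zero mass to the (finitely many) discontinuity points, which holds since $\nu_P$ is absolutely continuous with respect to arclength on $\Sigma_P$.

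The main obstacle is controlling the crossings of the flow with $\Sigma_P$ near points where $\Sigma_P$ fails to be a smooth embedded curve --- self-intersections of the zero set of the trigonometric polynomial $f_P$, or points where $\nabla f_P$ vanishes --- and, relatedly, justifying that the "bad" part of a long trajectory segment (near-tangencies, or time spent close to singular points of $\Sigma_P$) contributes negligibly to both $N(K)$ and the weighted sum. Under the blanket transversality assumption in the statement one can argue that tangencies are excluded outright, but the singular points of $\Sigma_P$ still require a separate $\varepsilon$-neighbourhood argument: one shows the flow spends a fraction $O(\varepsilon)$ of its time in an $\varepsilon$-neighbourhood of the singular set (again by the equidistribution of Proposition~\ref{prop:mean}), bounds the number of crossings inside that neighbourhood, and lets $\varepsilon\to0$ after $K\to\infty$. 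This decoupling of the two limits, and the uniform (in $K$) estimate on the crossing count needed to make it work, is the technical heart of the argument; for the explicit construction of $\nu_P$ in the quantum-graph case we follow~\cite{Barra2000,Keating03} and adapt it to the present two-edge setting with distinct masses, where the only change is that the frequencies $\omega_i = \sqrt{m_i}\,\ell_i$ enter the normal-velocity weight $|\omega\cdot n|$.
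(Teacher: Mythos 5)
The paper does not actually prove Theorem~\ref{thm:BG}: it imports the result from the quantum-graph literature, citing~\cite{Barra2000} for the original observation and~\cite{Keating03} for an explicit construction, and only computes the measure $\nu_P$ case by case in Sec.~\ref{sec:scalefree}. Your sketch is a reasonable reconstruction of that cited argument, and you identify the correct measure $d\nu_P\propto|\omega\cdot n|\,d\ell$ (the flux of the constant vector field $\omega$ through $\Sigma_P$), the correct crossing rate $N(K)/K\to(2\pi)^{-2}\int_{\Sigma_P}|\omega\cdot n|\,d\ell$, and the genuinely delicate points (singular points of the zero set, passage from continuous to piecewise continuous $g$).

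There is, however, one substantive gap in the route you chose. You invoke the Birkhoff ergodic theorem for the Poincar\'e return map, justified by ergodicity of the linear flow. Birkhoff's theorem only gives convergence of orbit averages for \emph{almost every} initial point with respect to the invariant measure, whereas the theorem concerns the single distinguished orbit $\varphi(\kappa)=(\omega_1\kappa,\omega_2\kappa)$ through the origin; nothing in your argument places the first crossing point of this particular orbit inside the full-measure set where Birkhoff convergence holds. The same objection applies to your use of Kac's formula for the crossing density. The correct tool is \emph{unique} ergodicity of the nonresonant linear flow (equivalently, Weyl equidistribution, which is exactly what Proposition~\ref{prop:mean} records for Riemann-integrable observables and \emph{every} orbit): either note that the return map induced on a transverse section of a uniquely ergodic flow is itself uniquely ergodic, so its Birkhoff averages converge for every point, or bypass the return map entirely by approximating the weighted crossing count with time averages of Riemann-integrable functions supported on thin flow-direction neighbourhoods of $\Sigma_P$, applying Proposition~\ref{prop:mean} directly, and letting the thickness tend to zero. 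With that replacement your argument closes; as written, the appeal to mere ergodicity does not suffice.
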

The explicit calculation of the measure $\nu_P$ for some scale free b.c. will be presented in  Sec.~\ref{sec:KirDir} below.  
\subsection{Uncoupled billiards}
\label{sec:uncoupled}
As a warm-up, consider the kinetic energy operator with Dirichlet b.c.\ $U=\mathbb{I}$, hence $r=0$ and $P=0$. The operator $H_{\mathbb{I}}$ is the direct sum of the kinetic energy (with mass $m_1$) in $I_1$ with Dirichlet b.c., and the kinetic energy (with mass $m_2$) in $I_2$ with Dirichlet b.c..  This corresponds to uncoupled billiards: 
\begin {center}
\begin {tikzpicture}[auto, node distance =1 cm and 3cm ,on grid ,
thick ,
state/.style ={ circle,inner sep=2pt ,top color =white , bottom color = gray ,
draw,black , text=blue , minimum width =.1 cm}]
\node[state] (C) {};
\node[state] (A) [ left=of C] {};
\node[state] (B) [right =of C] {};
\node[state] (D) [right =of B] {};
\node at (6 , -0.4) {$ \ell_2 $};
\node at (3 , -0.4) {$ 0^+ $};
\node at (0 , -0.4) {$ 0^- $};
%\node at (-.25 , -0.4) {$ 0^- $}; 
\node at (-3 , -0.4) {$ -\ell_1$}; 
\draw[->-,red]  (A) to [bend right =0] (C);
\draw[->-,blue] (B) to [bend left =0] (D);
\end{tikzpicture}
\end{center}
The spectral matrix is a direct sum
\begin{align}
 	 A_{\mathbb{I}}(\kappa)=2
 	\begin{pmatrix}
 		\rme^{-\ii  \omega_1 \kappa } & \rme^{\ii  \omega_1 \kappa } & 0 & 0  & \\
 		1 & 1  & 0 & 0  &\\
		0 & 0 &  1 & 1 &\\
		0 & 0 &  \rme^{\ii  \omega_2 \kappa } & \rme^{-\ii  \omega_2 \kappa }   &
        \end{pmatrix},
    \end{align}
and the spectral function is
\begin{equation}
    S_{\mathbb{I}}(\kappa)=\det A_{\mathbb{I}}(\kappa)=-64\sin(\omega_1\kappa)\sin(\omega_2\kappa)=-64f_{0}(\omega_1 \kappa,\omega_2 \kappa).
\end{equation}
The function $f_{0}$ is factorised: 
\begin{equation}
f_{0}(\varphi_1,\varphi_2)=f_{0,1}(\varphi_1,\varphi_2)f_{0,2}(\varphi_1,\varphi_2)=\sin(\varphi_1)\sin(\varphi_2),
\end{equation} 
with $f_{0,1}(\varphi_1,\varphi_2)=   \sin(\varphi_1) $  and $f_{0,2}(\varphi_1,\varphi_2)=   \sin(\varphi_2) $. Moreover $\Sigma_0=\Sigma_{0,1}\cup \Sigma_{0,2}$, where
\begin{equation}
\Sigma_{0,1}=\left\{ (\varphi_1, \varphi_2) \in \T^2: f_{0,1}(\varphi_1,\varphi_2)=0\right\}, \quad \Sigma_{0,2}=\left\{ (\varphi_1, \varphi_2) \in \T^2: f_{0,2}(\varphi_1,\varphi_2)=0\right\}.
\end{equation}
The spectrum of $H_{\mathbb{I}}$ is 
\begin{eqnarray}
\sigma\left( H_{\mathbb{I}}\right)&=&  \bigcup_{j=1}^2 \left \{\frac{\hbar^2}{2}\kappa^2 : \kappa\in \mathcal{K}_j  \right\},
\end{eqnarray}
where for all $j=1,2$:
\begin{equation}\label{eq:sigma12}
\mathcal{K}_j = \left\{\kappa \neq 0 : (\omega_1\kappa,\omega_2\kappa)\mod 2\pi \in \Sigma_{j,0}\right\}=\left\{\frac{\pi}{\omega_j}n \colon n \in \mathbb{Z} \setminus \{0\}\right\}.
\end{equation}
By Assumption~\ref{assump:1}, the sets $\mathcal{K}_1$ and $\mathcal{K}_2$ are disjoint.

If $\kappa=(\pi/\omega_1)n_1\in\mathcal{K}_1$, then,
\begin{equation}
    \operatorname{ker}A_{\mathbb{I}}\left(\frac{\pi }{\omega_1} n_1\right)=   \operatorname{ker}    
    \begin{pmatrix}
 		\rme^{-\ii  \pi n_1 } & \rme^{\ii  \pi n_1 } & 0 & 0  & \\
 		1 & 1  & 0 & 0  &\\
		0 & 0 &  1 & 1 &\\
		0 & 0 & \rme^{\ii \pi \frac{\omega_2}{\omega_1}n_1} &\rme^{-\ii \pi \frac{\omega_2}{\omega_1}n_1}  &
        \end{pmatrix}=\operatorname{span}\left\{\left(\begin{array}{c}1 \\-1\\0\\0 \end{array}\right)\right\}.
\end{equation}
If $\kappa=(\pi/\omega_2)n_2\in \mathcal{K}_2$, then,
\begin{equation}
    \operatorname{ker}A_{\mathbb{I}}\left(\frac{\pi }{\omega_2} n_2\right)=   \operatorname{ker}    
    \begin{pmatrix}
 		\rme^{ -\ii\pi \frac{\omega_1}{\omega_2}n_2} &\rme^{\ii\pi \frac{\omega_1}{\omega_2}n_2}& 0 & 0  & \\
 		1 & 1  & 0 & 0  &\\
		0 & 0 &  1 & 1 &\\
		0 & 0 &\rme^{ \ii\pi n_2} &\rme^{-\ii\pi n_2}  &
        \end{pmatrix}=\operatorname{span}\left\{\left(\begin{array}{c}0\\0 \\1 \\-1 \end{array}\right)\right\}.
\end{equation}

We conclude that the normalised eigenfunctions of $H_{\mathbb{I}}$ are
\begin{equation}
    \psi_{E_{\kappa}}(x)=
    \begin{cases}
 \displaystyle\sqrt{\frac{2}{\ell_1}}\sin\left(\frac{n_1 \pi}{\ell_1}x\right)\chi_{I_1}(x) &\text{if $\kappa=\frac{\pi}{\omega_1}n_1\in \mathcal{K}_1$},\\
  \displaystyle\sqrt{\frac{2}{\ell_2}}\sin\left(\frac{n_2 \pi}{\ell_2}x\right)\chi_{I_2}(x) &\text{if $\kappa=\frac{\pi}{\omega_2}n_2\in \mathcal{K}_2$}.
    \end{cases}
\end{equation}
The eigenstates with energy $\frac{\hbar^2 \pi^2 n_1^2}{2m_1\ell_1^2}$ are localised in $I_1$ and have leaning $\mathcal{L}(\psi_{E_{\kappa}})=-1$; those with energy $\frac{\hbar^2 \pi^2 n_2^2}{2m_2\ell_2^2}$ are localised in $I_2$ and have leaning $\mathcal{L}(\psi_{E_{\kappa}})=1$. 

The energy levels counting function satisfies the Weyl law
\begin{align}
    N(E)&=\#\{E_{\kappa}\leq E\}\\
    &=\#\left\{n_1\in\N\setminus \{0\}\colon \frac{\hbar^2 \pi^2 n_1^2}{2\omega_1^2}\leq E\right\}+\#\left\{n_2\in\N \setminus \{0\}\colon\frac{\hbar^2 \pi^2 n_2^2}{2\omega_2^2}\leq E\right\}\\
    &\sim\frac{\omega_1+\omega_2}{\hbar \pi}\sqrt{2E},\quad \text{as $E\to +\infty$}.
\end{align}
It follows that for $U=\mathbb{I}$, the sequence $\left\{\mathcal{L}(\psi_{E_{\kappa}})\right\}_{E_{\kappa}}$ contains only two subsequences taking values $\pm 1$, and
\begin{align}
\label{eq:upper_limit_D}
\limsup_{\kappa \to +\infty} \mathcal{L}(\psi_{E_{\kappa}})&=1,\\
\label{eq:Cesaro_limit_D}
    \lim_{E\to +\infty}\frac{1}{\#\{E_{\kappa}\leq E\}}\sum_{E_{\kappa}\leq E}\mathcal{L}(\psi_{E_{\kappa}})&=\frac{\omega_2-\omega_1}{\omega_2+\omega_1}=\frac{\sqrt{m_2}\ell_2-\sqrt{m_1}\ell_1}{\sqrt{m_2}\ell_2+\sqrt{m_1}\ell_1},\\
    \label{eq:lower_limit_D}
\liminf_{\kappa \to +\infty} \mathcal{L}(\psi_{E_{\kappa}})&=-1.
\end{align}
Mutatis mutandi, the same analysis can be repeated for Neumann b.c.\ ($U=-\mathbb{I}$, hence $r=4$ and $P=\mathbb{I}$). Imposing the vanishing of $\nu\psi$, we find the spectral matrix 
\begin{equation}
    A_{-\mathbb{I}}(\kappa)=2\kappa
   \begin{pmatrix}
 		\frac{\rme^{-\ii  \omega_1 \kappa }}{\sqrt{m_1}} & \frac{ -\rme^{\ii  \omega_1 \kappa }}{\sqrt{m_1}} & 0 & 0  & \\
 		\frac{-1}{\sqrt{m_1}} & \frac{1}{\sqrt{m_1}}  & 0 & 0  &\\
		0 & 0 &  \frac{1}{\sqrt{m_2}} &  \frac{-1}{\sqrt{m_2}} &\\
		0 & 0 &   \frac{-\rme^{\ii  \omega_2 \kappa }}{\sqrt{m_2}} &  \frac{\rme^{-\ii  \omega_2 \kappa }}{\sqrt{m_2}} &
        \end{pmatrix},
\end{equation}
and the spectral function is again factorised
\begin{equation}
    S_{-\mathbb{I}}(\kappa)=\det A_{-\mathbb{I}}(\kappa)=-\frac{64\kappa^4}{m_1m_2}\sin(\omega_1\kappa)\sin(\omega_2\kappa)=-\frac{64\kappa^4}{m_1m_2}f_{\mathbb{I}}(\varphi_1,\varphi_2).
\end{equation}
In this case $f_{\mathbb{I}}(\varphi_1,\varphi_2)=f_{\mathbb{I},1}(\varphi_1,\varphi_2)f_{\mathbb{I},2}(\varphi_1,\varphi_2)$, with $f_{\mathbb{I},1}(\varphi_1,\varphi_2)=\sin(\varphi_1)$ and\\  $~{f_{\mathbb{I},2}(\varphi_1,\varphi_2)=\sin(\varphi_2)}$.
Moreover, $\Sigma_{\mathbb{I}}=\Sigma_{\mathbb{I},1}\cup \Sigma_{\mathbb{I},2}$ with
\begin{equation}
\Sigma_{\mathbb{I},1}=\left\{ (\varphi_1, \varphi_2) \in \T^2: f_{\mathbb{I},1}(\varphi_1,\varphi_2)=0\right\}, \quad \Sigma_{\mathbb{I},2}=\left\{ (\varphi_1, \varphi_2) \in \T^2: f_{\mathbb{I},2}(\varphi_1,\varphi_2)=0\right\}.
\end{equation}
The spectrum of $H_{\mathbb{I}}$ is 
\begin{eqnarray}
\sigma\left( H_{-\mathbb{I}}\right)&=&  \bigcup_{j=0}^2 \left \{\frac{\hbar^2}{2}\kappa^2 : \kappa\in \mathcal{K}_j  \right\},
                                            \end{eqnarray}
where $\mathcal{K}_0= \{0\}$ and for $j=1,2$
\begin{equation}
\mathcal{K}_j = \left\{\kappa \neq 0 : (\omega_1\kappa,\omega_2\kappa) \mod 2\pi \in \Sigma_{j,\mathbb{I}}\right\}=\left\{\frac{\pi}{\omega_j}n \colon n \in \mathbb{Z} \setminus \{0\}\right\}.
\end{equation}
By Assumption~\ref{assump:1}, the sets $\mathcal{K}_1$ and $\mathcal{K}_2$ are disjoint.
 
The normalised eigenfunctions of $H_{-\mathbb{I}}$ for positive $E_{\kappa}>0$, are
\begin{equation}
    \psi_{E_{\kappa}}(x)=
    \begin{cases}
 \displaystyle\sqrt{\frac{2}{\ell_1}}\cos\left(\frac{n_1 \pi}{\ell_1}x\right)\chi_{I_1}(x) &\text{if $\kappa=\frac{\pi}{\omega_1}n_1\in \mathcal{K}_1$},\\
  \displaystyle\sqrt{\frac{2}{\ell_2}}\cos\left(\frac{n_2 \pi}{\ell_2}x\right)\chi_{I_2}(x) &\text{if $\kappa=\frac{\pi}{\omega_2}n_2\in \mathcal{K}_2$},
    \end{cases}
\end{equation}
while the normalised eigenfunctions for $E_{\kappa}=0$ (zero modes) are
\begin{equation}\label{eq:zeromod}
    \psi_{0}(x)=
 \displaystyle\sqrt{\frac{1}{\ell_1}}\alpha\chi_{I_1}(x)+\sqrt{\frac{1}{\ell_2}}\beta\chi_{I_2}(x) ,\quad |\alpha|^2+|\beta|^2=1.
\end{equation}
Therefore, excluding the zero modes, the eigenfunctions of $H_{-\mathbb{I}}$ behave as those of $H_{\mathbb{I}}$. In particular, \eqref{eq:upper_limit_D}-\eqref{eq:Cesaro_limit_D}-\eqref{eq:lower_limit_D} hold true.
\par

General `uncoupled billiards' are those self-adjoint extensions $H_U$ with 
$$
U=\begin{pmatrix} u_{11}& u_{12} &0 &0 &\\
u_{21}& u_{22} &0 &0 &\\
0 &0 & u_{33}& u_{34}\\
0 &0 & u_{43}& u_{44}\\
\end{pmatrix} \in \mathcal{U}(4).
$$
The above examples ($U=\mathbb{I}$ and $U=-\mathbb{I}$) suggest that~\eqref{eq:upper_limit_D}-\eqref{eq:Cesaro_limit_D}-\eqref{eq:lower_limit_D} extend to generic uncoupled b.c.\ under the nonresonant condition.

\section{Spectral properties for scale-free boundary conditions}
\label{sec:scalefree}
In this section we derive the explicit solutions of the spectral problem of a two-edges quantum graph (edges $I_1$ and $I_2$ with mass $m_1$, and $m_2$) for the special class of scale-free b.c.\ defined by:
\begin{enumerate}
    \item the continuity conditions of $\psi$ and $(1/m)\psi'$   (zero Kirchhoff b.c.) at the connected vertices, namely vertices of degree at least two;
    \item Dirichlet b.c. at the pendant vertices, namely vertices of degree one (a similar analysis can be performed with Neumann b.c. at the pendant vertices).
\end{enumerate}
In Sec.~\ref{sec:uncoupled} we have considered the case of Dirichlet or Neumann b.c.\ at all vertices, corresponding to uncoupled (non connected) edges. 

Before analyzing the various cases in detail, it is necessary to introduce some useful notation. Recall that any unitary matrix $U \in \mathcal{U}(4)$ encoding scale-free b.c.\ is of the form $U=\mathbb{I}-2P$, where $P$ is a $r$-dimensional orthogonal projection, with $r \in \{0,1,2,3,4\}$.  
Recall that the spectral function has the form $S_{\mathbb{I}-2P}(\kappa)=\kappa^r f_P(\omega_1\kappa,\omega_2\kappa)g_P(\omega_1\kappa,\omega_2\kappa)$, with $f_P:\T^2 \to \mathbb{R}$ a trigonometric polynomial. In some of the examples considered in the following subsections, $f_P$  is written in a more convenient way (by means of the tangent half-angle formulae $\cos \varphi_i= (1-t_i^2)/(1+t_i^2)$, $\sin \varphi_i= 2t_i(1+t_i^2)$, with $t_i=\tan(\varphi_i/2)$, $i=1,2$), as
\begin{equation}
f_P(\varphi_1, \varphi_2)=\prod_{j=1}^{q}f_{P,j}(\varphi_1, \varphi_2),
\end{equation}
for some $q \in \{1,2,3,4\}$, with $f_{P,j}: \T^2 \to \mathbb{R}$ for all $j=1, \dots, q$. Hence 
\begin{equation}
\Sigma_P= \left\{ (\varphi_1, \varphi_2) \in \T^2: f_{P}(\varphi_1,\varphi_2)=0\right\} = \bigcup_{j=1}^{q} \Sigma_{P,j},
\end{equation}
where $\Sigma_{P,j}= \left\{ (\varphi_1, \varphi_2) \in \T^2: f_{P,j}(\varphi_1,\varphi_2)=0\right\}$, for all $j=1, \dots,q$.
The spectrum of $H_{\mathbb{I}-2P}$ is 
\begin{eqnarray}
\sigma\left( H_{\mathbb{I}-2P} \right)&=&  \bigcup_{j=0}^q \left \{E_{\kappa}=\frac{\hbar^2}{2}\kappa^2 : \kappa\in \mathcal{K}_j  \right\},
\end{eqnarray}
where for $j=1,\dots,q$
\begin{equation}
\mathcal{K}_j = \left\{\kappa \neq 0 : (\omega_1\kappa,\omega_2\kappa) \mod 2\pi \in \Sigma_{P,j}\right\},
\end{equation}
while  $\mathcal{K}_0= \{0\}$ if $0$ is an eigenvalue of $H_{\mathbb{I}-2P}$  and $\mathcal{K}_0=\emptyset$ otherwise. By the nonresonant condition of $\omega_1,\omega_2$, the sets $\mathcal{K}_j$'s  are disjoint.

For $E_\kappa >0$ the eigenfunctions $\psi_{E_{\kappa}}$ are in the form~\eqref{eq:eigenfunction} with $(c_1,d_1,c_2,d_2)^T$ non-zero element of $\operatorname{ker}A_{\mathbb{I}-2P}(\kappa)$ and the corresponding leaning is
\begin{equation}\label{eq:leaningintro}
    \mathcal{L}(\psi_{E_{\kappa}})=\frac{\int_{I_2}|\psi_{E_{\kappa}}(x)|^2\, dx-\int_{I_1}|\psi_{E_{\kappa}}(x)|^2\, dx}{\int_{I_2}|\psi_{E_{\kappa}}(x)|^2\, dx+\int_{I_1}|\psi_{E_{\kappa}}(x)|^2\, dx},
\end{equation}
where
\begin{align}
\int_{I_1}|\psi_{E_{\kappa}}(x)|^2\, dx&=\ell_{1}\left[|c_{1}|^2+|d_{1}|^2+2\operatorname{Re}\left(c_{1}\overline{d_{1}}\rme^{-\ii \omega_1 \kappa}\frac{\sin(\omega_1 \kappa)}{\omega_1 \kappa}\right)\right],\\
\int_{I_2}|\psi_{E_{\kappa}}(x)|^2\, dx&=\ell_{2}\left[|c_{2}|^2+|d_{2}|^2+2\operatorname{Re}\left(c_{2}\overline{d_{2}}\,\rme^{\ii \omega_2 \kappa}\,\frac{\sin(\omega_2 \kappa)}{\omega_2 \kappa}\right)\right].
\end{align}
As we will show in the examples, the coefficients $c_1,d_1$ depend only on $\omega_1\kappa$, while $c_2,d_2$ depend only on $\omega_2\kappa$, thus the leaning  $\mathcal{L}(\psi_{E_\kappa})$ in \eqref{eq:leaningintro} depends on $\omega_1\kappa$ and $\omega_2\kappa$. The behaviour of $\mathcal{L}(\psi_{E_\kappa})$ is erratic in $\kappa$ and strongly depends on the b.c. $U$, see Fig.~\ref{fig:LeaningAll}. On the contrary, the Ces\`aro mean of the leaning assumes the same value
\begin{equation}\label{cesmeanlimintro}
       \lim_{E\to +\infty}\frac{\sum_{E_{\kappa}\leq E} \mathcal{L}(\psi_{E_{\kappa}})}{\#\{E_{\kappa}\leq E\}}=\frac{\sqrt{m_2}\ell_2-\sqrt{m_1}\ell_1}{\sqrt{m_2}\ell_2+\sqrt{m_1}\ell_1}.
\end{equation}
To prove this we will see that for all $j=1,\dots,q$, the angle $\theta_j$ between the normal to the set $\Sigma_{P,j}$ and the linear flow on the torus \eqref{eq:flow} is such that \begin{equation}
    \cos\theta_j=\frac{\omega\cdot \nabla f_{P,j}(\varphi_1,\varphi_2)}{\|\omega\|\|\nabla f_{P,j}(\varphi_1,\varphi_2)\|} \neq0, \quad (\varphi_1,\varphi_2)\in\Sigma_{P,j},
\end{equation}
where $\omega=(\omega_1,\omega_2)$ is the vector of frequencies and $\nabla f_{P,j}$ is the gradient of $f_{P,j}$. Hence, by Theorem~\ref{thm:BG} we can compute the Barra-Gaspard measure $\nu_P$ supported on the $q$ components of $\Sigma_P$ (the corresponding restrictions of $\nu_P$ are denoted with $\nu_{P,1}, \dots, \nu_{P,q}$). See Fig.~\ref{fig:BG}. Then, the Ces\`aro mean of the leaning can be computed as a space average on $\Sigma_P$ with respect to the measure $\nu_P$.

\begin{figure}
    \centering
\raisebox{4.5mm}{
\begin{tikzpicture}[scale=0.95, transform shape,auto, node distance =1 cm and 3cm ,on grid ,
thick ,
state/.style ={ circle,inner sep=2pt ,top color =white , bottom color = gray ,
draw,black , text=blue , minimum width =.1 cm}]
\node[state] (C) {};
\node[state] (A) [ left=of C] {};
\node[state] (B) [right =of C] {};
\node at (3 , -0.4) {$ \ell_2 $};
\node at (0.25 , -0.4) {$ 0^+ $};
\node at (-0.25 , -0.4) {$ 0^-$}; 
\node at (-3 , -0.4) {$ -\ell_1$}; 
\draw[->-,red]  (A) to [bend right =0] (C);
\draw[->-,blue]  (C) to [bend left =0]  (B);
%\path[blue]  (C) edge [bend left =0] node[above =0.02 cm] {} (B);
\end{tikzpicture}}\qquad 
\begin {tikzpicture}[scale=0.95, transform shape,auto, node distance =1 cm and 3cm ,on grid ,
thick ,
state/.style ={ circle,inner sep=2pt ,top color =white , bottom color = gray ,
draw,black , text=blue , minimum width =.1 cm}]
\node[state] (C) {};
\node[state] (A) [ left=of C] {};
\node at (-3.2 , 0.4) {$ \ell_2 $};
\node at (0.2 , 0.4) {$ 0^+ $};
\node at (0.2 , -0.4) {$ 0^- $};
%\node at (-.25 , -0.4) {$ 0^- $}; 
\node at (-3.2 , -0.4) {$ -\ell_1$}; 
\draw[->-,red]   (A) to [bend right =70]  (C);
\draw[->-,blue]  (C) to [bend right =70]  (A);
\end{tikzpicture}\\
\hspace{20pt} (a) The segment \hspace{100pt} (b) The ring\\
\qquad\qquad
\begin {tikzpicture}[auto, node distance =1 cm and 3cm ,on grid ,
thick ,
state/.style ={ circle,inner sep=2pt ,top color =white , bottom color = gray ,
draw,black , text=blue , minimum width =.1 cm}]
\node[state] (C) {};
\node[state] (A) [ left=of C] {};
%\node[state] (B) [right =of C] {};
\node at (-0.1 , 0.4) {$ \ell_2 $};
\node at (0.5 , -0) {$ 0^+ $};
\node at (-0.1 , -0.4) {$ 0^-$}; 
\node at (-3 , -0.4) {$ -\ell_1$}; 
\draw[->-,red]  (A) to [bend right =0]  (C);
%\path[blue] (C) edge [loop right] node[right] {} (C);
%\path[->-,blue,every loop/.style={looseness=50}] (C)
%         edge  [in=45,out=-45,loop] node {} ();
\draw[->-,blue,every loop/.style={looseness=50}] (C)
         to  [in=45,out=-45,loop] ();
\end{tikzpicture} 
\begin{tikzpicture}[auto, node distance =1 cm and 3cm ,on grid ,
thick ,
state/.style ={ circle,inner sep=2pt ,top color =white , bottom color = gray ,
draw,black , text=blue , minimum width =.1 cm}]
\node[state] (C) {};
\node at (-0 , 0.5) {$ \ell_2 $};
\node at (0.5 , -0) {$ 0^+ $};
\node at (0.1 , -0.5) {$ 0^-$}; 
\node at (-0.6 , -0) {$ -\ell_1$}; 
\draw[->-,red,every loop/.style={looseness=50}] (C)
         to  [in=-135,out=135,loop] ();
\draw[->-,blue,every loop/.style={looseness=50}] (C)
         to  [in=45,out=-45,loop] ();
\end{tikzpicture}\\
\vspace{-35pt}
\hspace{20pt} (c) The pendant \hspace{100pt} (d) The rose\\
    \caption{Pictorial representation of the boundary conditions investigated: (a) the segment, equation~\eqref{eq:Segment}, (b) the ring, equation~\eqref{eq:Ring}, (c) the pendant, equation~\eqref{eq:Pendant}, and (d) the rose, equation~\eqref{eq:bc_rose}}
    \label{fig:pictorialrepresentation}
\end{figure}
 \begin{figure}[th]
  \centering
  \begin{subfigure}[b]{0.42\textwidth}
    \includegraphics[width=\textwidth]{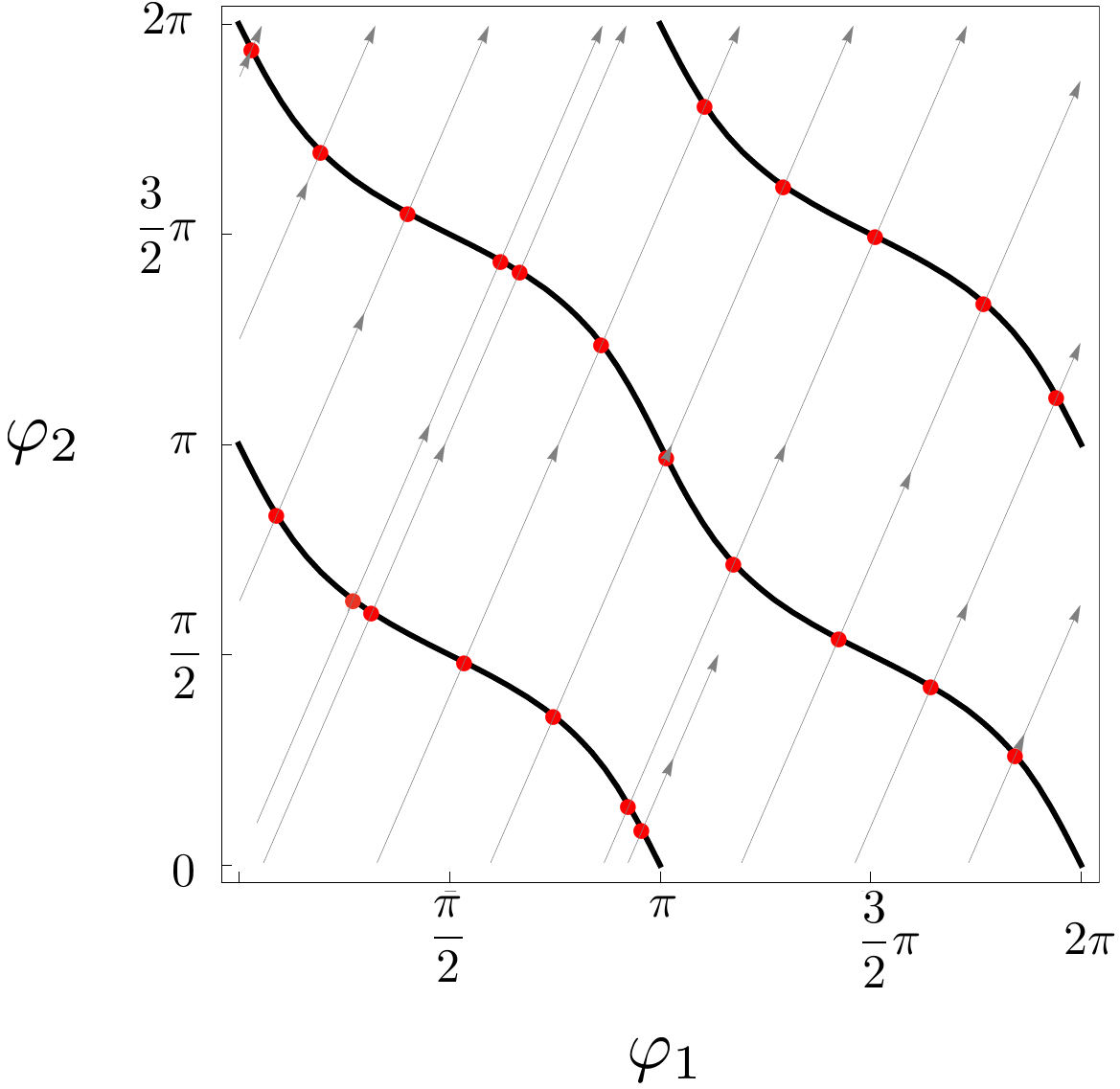} 
    \caption{The segment, Eq.~\eqref{eq:SF_KirDir}}
    \label{fig:subSF1}
  \end{subfigure}
  \hfill
  \begin{subfigure}[b]{0.42\textwidth}
    \includegraphics[width=\textwidth]{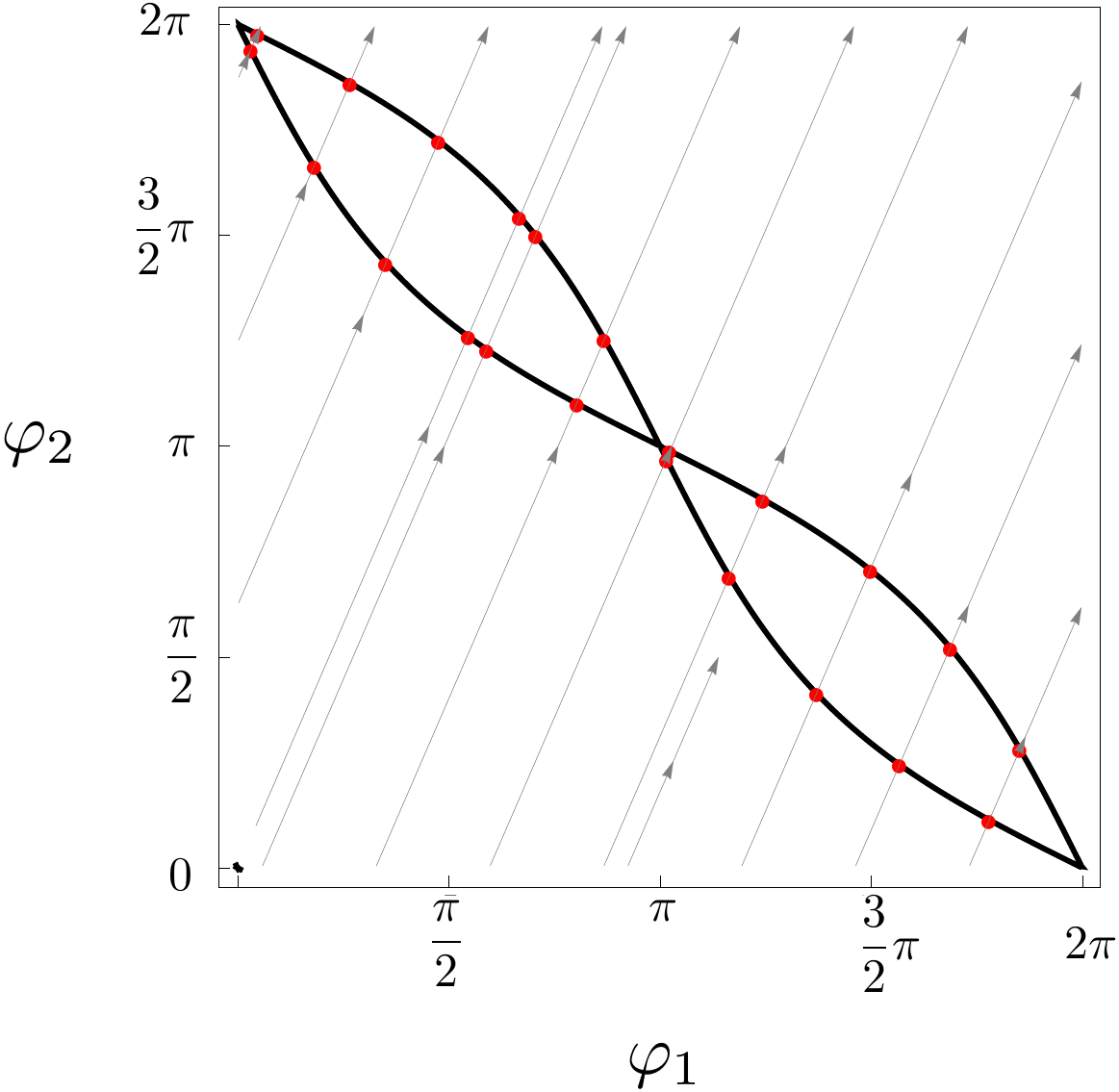}
    \caption{The ring, Eq.~\eqref{eq:SF_KirRing}.}
    \label{fig:subSF2}
  \end{subfigure}\vspace{.5cm}
    \hfill
  \begin{subfigure}[b]{0.42\textwidth}
    \includegraphics[width=\textwidth]{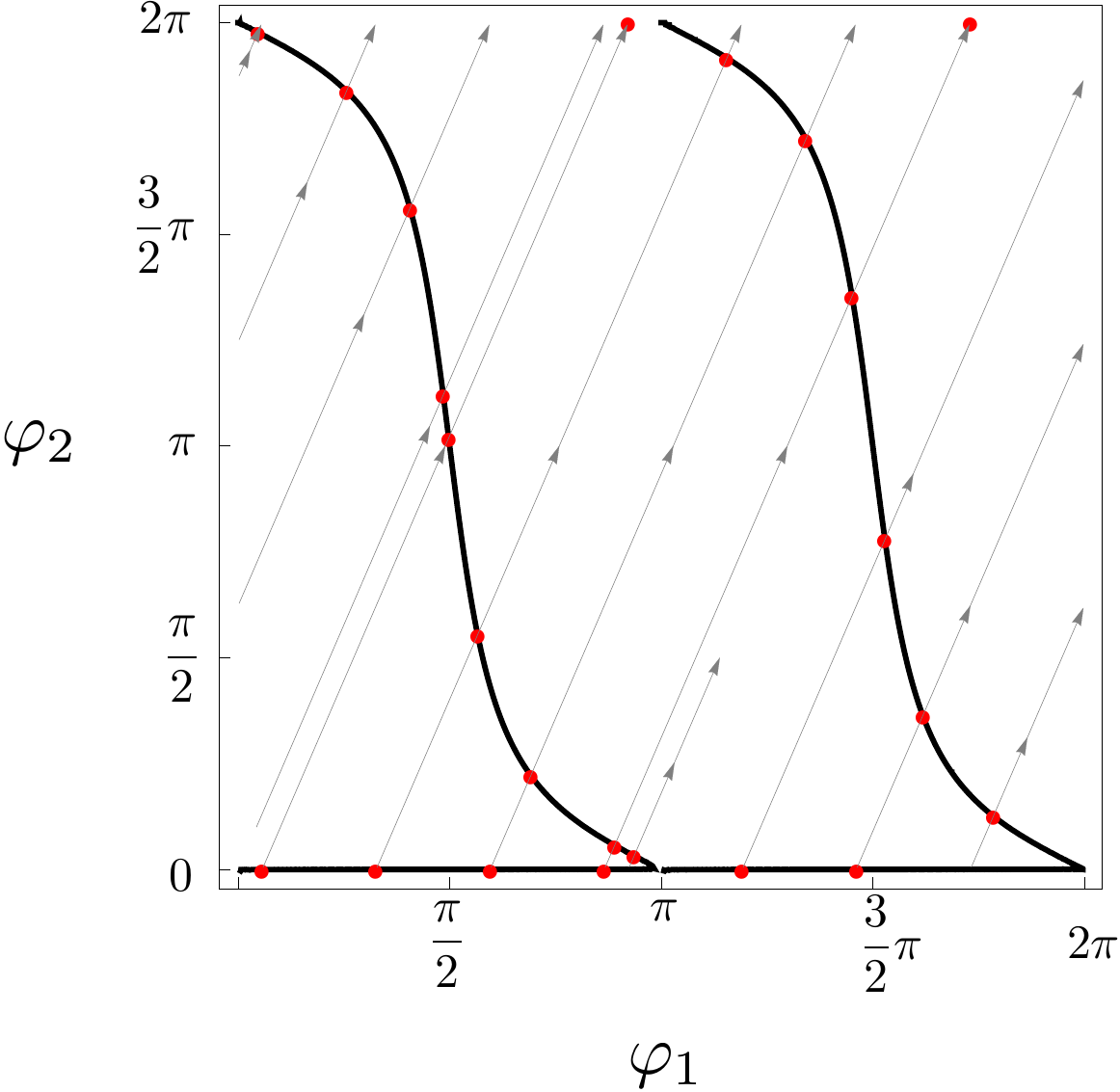}
    \caption{The single pendant, Eq.~\eqref{eq:SF_KirPend}.}
    \label{fig:subSF3}
  \end{subfigure}
  \hfill
  \begin{subfigure}[b]{0.42\textwidth}
    \includegraphics[width=\textwidth]{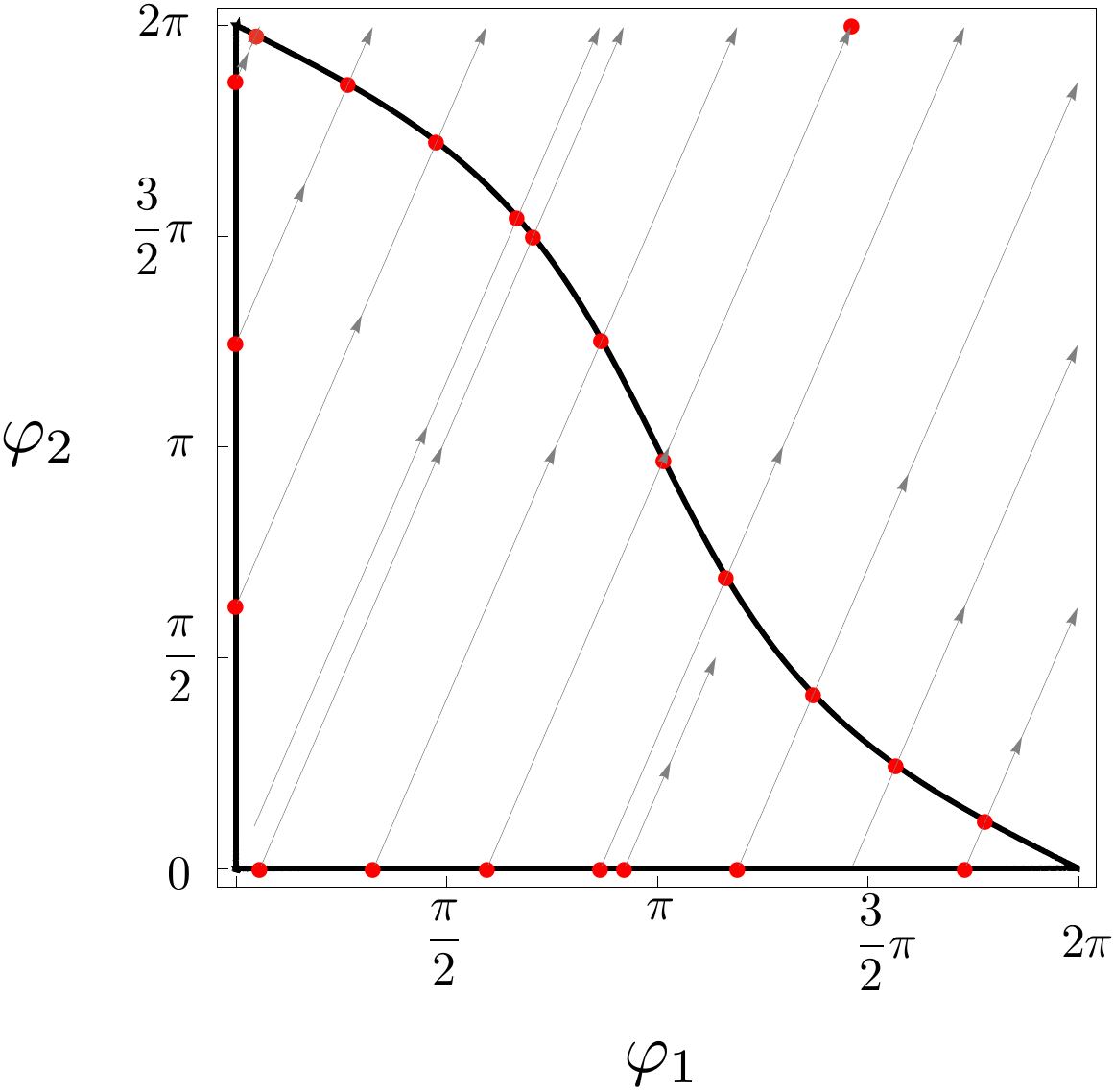}
    \caption{The rose, Eq.~\eqref{eq:SF_KirRose}.}
    \label{fig:subSF4}
  \end{subfigure}
    \caption{Zero set $\Sigma_P$ of the spectral functions  on the torus $\mathbb{T}^2$, along with the linear flow~\eqref{eq:flow}. Here $m_1=16$, $m_2=1$, $\ell_1=\rme$, and $\ell_2=2\pi$. The corresponding frequencies  $\omega_1=4\rme$ and $\omega_2=\pi$ are nonresonant.}
    \label{fig:SF} 
\end{figure}

\subsection{Zero Kirchhoff b.c.\ at the junction}
\label{sec:KirDir}
Two segments with Kirchhoff b.c.\ at the junction in $x=0$ and Dirichlet at the pendant vertices (see Figure~\ref{fig:pictorialrepresentation}):
\begin{equation}\label{eq:Segment}
    \begin{cases}
    \psi(0^-)=\psi(0^+),\\\\
    \psi(-\ell_1)=0,\\\\
    \psi(\ell_2)=0, \\\\
       \dfrac{1}{m_1}\psi'(0^-)- \dfrac{1}{m_2}\psi'(0^+)=0.
    \end{cases}
\end{equation}

These b.c.\ correspond to a projection $P=\ketbra{u}{u}$ (in Dirac notation) of rank $r=1$ with $u=\frac{1}{\sqrt{2}}(0,1,1,0)^T$, the corresponding matrix $U=\mathbb{I}-2P$ is in~\eqref{eq:U_KirDir}, and  
\begin{align}
\label{eq:SF_KirDir}
f_P(\varphi_1,\varphi_2)=\sqrt{m_1}\sin \varphi_1   \cos \varphi_2  +\sqrt{m_2}  \sin  \varphi_2 \cos  \varphi_1.
\end{align}
Since $0$ is not an eigenvalue of $H_{\mathbb{I}-2P}$, the spectrum of $H_{\mathbb{I}-2P}$ is
\begin{equation}
\sigma(H_{\mathbb{I}-2P})=\left\{ E_{\kappa}= \frac{\hbar^2}{2} \kappa^2: \kappa \in \mathcal{K}\right\},
\end{equation}
where
\begin{equation}
\mathcal{K}=\left\{ \kappa \neq 0 : f_P(\omega_1\kappa,\omega_2\kappa)=0 \right\}.
\end{equation}
For all $\kappa \in \mathcal{K}$ the null space of the spectral matrix $A_{\mathbb{I}-2P}(\kappa)$ is nontrivial and the non zero elements  in $\operatorname{ker}A_{\mathbb{I}-2P}(\kappa)$ are 
\begin{equation}
\label{eq:coeff_KirDir}
\left(c_1,d_1,c_2,d_2\right)^T=
\alpha \left(\frac{\rme^{\ii \omega_1 \kappa}}{\sin(\omega_1 \kappa)},-\frac{\rme^{-\ii\omega_1 \kappa}}{\sin(\omega_1 \kappa)},-\frac{\rme^{-\ii \omega_2 \kappa}}{\sin(\omega_2 \kappa)},\frac{\rme^{\ii \omega_2 \kappa}}{\sin(\omega_2 \kappa)}\right)^T,
\end{equation}
with $\alpha \in \C$.
Inserting this into~\eqref{eq:eigenfunction}, we get the eigenfunctions 
\begin{equation}
\psi_{E_{\kappa}}(x)=\frac{ \sin \left(\kappa \sqrt{m_1} \left( x+\ell_1\right)\right)}{\sin \left(\kappa  \omega_1\right)}\chi_{I_1}(x)-\frac{ \sin \left(\kappa  \sqrt{m_2} \left(x-\ell_2\right)\right)}{\sin \left(\kappa  \omega_2\right)}\chi_{I_2}(x).
\end{equation}
We can compute the corresponding leaning and, using that $f_P(\omega_1 \kappa, \omega_2 \kappa)=0$, we get the following asymptotic expansion
\begin{equation}
\label{eq:leaning_DirKir}
     \mathcal{L}(\psi_{E_{\kappa}})=\frac{\left(m_1\sin^2(\omega_1 \kappa)+m_2\cos^2(\omega_1 \kappa)\right)\ell_2-m_1\ell_1}{\left(m_1\sin^2(\omega_1 \kappa)+m_2\cos^2(\omega_1\kappa)\right)\ell_2+m_1\ell_1}+O\left(\frac{1}{\kappa}\right),\quad \text{as $\kappa\to +\infty$}.
\end{equation}
By Proposition~\ref{prop:mean} and Theorem~\ref{thm:BG}, the sequence of values of the leaning $\{ \mathcal{L}(\psi_{E_{\kappa}})\}_{E_{\kappa}}$ densely fills the range of the function on the right-hand side of ~\eqref{eq:leaning_DirKir}. In particular, we have
\begin{align}
    \limsup_{\kappa \to + \infty} \mathcal{L}(\psi_{E_{\kappa}})&=\sup_{\varphi \in[0,2\pi]}\frac{\left(m_1\sin^2\varphi+m_2\cos^2\varphi\right)\ell_2-m_1\ell_1}{\left(m_1\sin^2\varphi+m_2\cos^2\varphi\right)\ell_2+m_1\ell_1},\\
\liminf_{\kappa \to + \infty}  \mathcal{L}(\psi_{E_{\kappa}})&=\inf_{\varphi\in[0,2\pi]}
\frac{\left(m_1\sin^2\varphi+m_2\cos^2\varphi\right)\ell_2-m_1\ell_1}{\left(m_1\sin^2\varphi+m_2\cos^2\varphi\right)\ell_2+m_1\ell_1},
\end{align}
and hence we obtain 
\begin{align}
\label{eq:suplim}
\limsup_{\kappa \to + \infty} \mathcal{L}(\psi_{E_{\kappa}})&=\frac{\ell_2-\ell_1}{\ell_2+\ell_1},\\
\label{eq:inflim}
\liminf_{\kappa \to + \infty} \mathcal{L}(\psi_{E_{\kappa}})&=\frac{m_2\ell_2-m_1\ell_1}{m_2\ell_2+m_1\ell_1},
\end{align}
if $m_2\leq m_1$ (otherwise the values of of the upper and lower limits are swapped).

The angle between the normal to the zero set $\Sigma_P$ and the linear flow on the torus $\T^2$ is given by
\begin{equation}
\cos\theta=-\frac{\left(\frac{\omega_1\sqrt{m_2}}{\sin^2\varphi_1}+\frac{\omega_2\sqrt{m_1}}{\sin^2\varphi_2}\right)}{\| \omega \|\sqrt{\left(\frac{m_2}{\sin^4\varphi_1}+\frac{m_1}{\sin^4\varphi_2}\right)}}, \quad (\varphi_1,\varphi_2)\in\Sigma_P,
\end{equation}
and so $\cos\theta<0$ on $\Sigma_P$ (the orbit is nowhere tangent at $\Sigma_P$).
We can compute the Barra-Gaspard measure $\nu_P$ on $\Sigma_P$ of Theorem~\ref{thm:BG}. Using the same steps outlined in~\cite{Keating03} we get (with a slight abuse of notation)
\begin{equation}
\label{eq:BGmeas_KirDir}
    d\nu_P=\frac{1}{2 \pi  (\omega_1+\omega_2)}\left(\omega_1  \frac{\sqrt{m_1 m_2}}{m_1\sin ^2\varphi_1+m_2 \cos ^2\varphi_1}+\omega_2\right)d\varphi_1,\quad \varphi_1\in[0,2\pi),
\end{equation}
see Fig.~\ref{fig:subBG1}.
The Ces\`aro mean is computed from Theorem~\ref{thm:BG} as a space average on $\Sigma_P$ with respect to the measure $\nu_P$ in~\eqref{eq:BGmeas_KirDir}:
\begin{eqnarray}\label{cesmeanlim}
       \lim_{E\to +\infty}\frac{\sum_{E_{\kappa}\leq E} \mathcal{L}(\psi_{E_{\kappa}})}{\#\{E_{\kappa}\leq E\}}&=&\int_{\Sigma_P}\frac{\left(m_1\sin^2\varphi_1+m_2\cos^2\varphi_1\right)\ell_2-m_1\ell_1}{\left(m_1\sin^2\varphi_1+m_2\cos^2\varphi_1\right)\ell_2+m_1\ell_1}d\nu_P \nonumber \\
       &=&\frac{\sqrt{m_2}\ell_2-\sqrt{m_1}\ell_1}{\sqrt{m_2}\ell_2+\sqrt{m_1}\ell_1}.
\end{eqnarray}
See Fig.~\ref{fig:subLeaning0}.
\begin{rem}
   When the masses are equal, the spectral function drastically simplifies:
\begin{equation}
f_P(\varphi_1,\varphi_2)=\sqrt{m_1}\sin\left(\varphi_1+\varphi_2\right),\quad\text{ if $m_1=m_2$}.
\end{equation}
\end{rem} 
\subsection{A ring with Kirchhoff b.c.\ at two junctions}
\label{sec:KirRing}
\begin{equation}\label{eq:Ring}
    \begin{cases}
        \psi(0^-)=\psi(0^+),\\\\
           \psi(-\ell_1)=\psi(\ell_2),\\\\
           \dfrac{1}{m_1}\psi'(0^-)- \dfrac{1}{m_2}\psi'(0^+)=0, \\\\
        -\dfrac{1}{m_1}\psi'(-\ell_1)+ \dfrac{1}{m_2}\psi'(\ell_2)=0.
    \end{cases}
\end{equation}
These boundary conditions correspond to the topology of a ring (see Figure~\ref{fig:pictorialrepresentation}). 
In this case the projection $P=\ketbra{u}{u}+\ketbra{v}{v}$ (in Dirac notation) has rank $r=2$, with $u=\frac{1}{\sqrt{2}}(0,1,1,0)^T$ and $v=\frac{1}{\sqrt{2}}(1,0,0,1)^T$. The quasi-periodic component of the spectral function is
\begin{eqnarray}
f_P(\varphi_1,\varphi_2)&=&2\sqrt{m_1m_2}(1-  \cos\varphi_1 \cos\varphi_2) +(m_1+m_2) \sin \varphi_1 \sin \varphi_2 \label{eq:SF_KirRing}\\
                                       &=&4 \frac{(\sqrt{m_2}t_2+\sqrt{m_1}t_1)\sqrt{m_1}t_2+\sqrt{m_2}t_1)}{\left(1+t_1^2\right) \left(1+t_2^2\right)} \quad \textrm{{ (for $\varphi_1, \varphi_2 \neq \pi)$}}    \label{eq:SF_KirRing_halftan},
\end{eqnarray}
where $t_1=\tan(\varphi_1/2)$ and $t_2=\tan(\varphi_2/2)$.

Using Assumption~\ref{assump:1}, it is immediate to verify that $f_P(\varphi_1,\varphi_2)\neq0$ if $\varphi_1=\pi$ or $\varphi_2=\pi$. Therefore, the zero set $\Sigma_P$ of the spectral function is given by the zero set of the numerator of~\eqref{eq:SF_KirRing_halftan}. We see that, if $m_1\neq m_2$, then $ \Sigma_P=\Sigma_{P,1}\cup\Sigma_{P,2}$ where 
\begin{equation}
\Sigma_{P,1}=\left\{ (\varphi_1, \varphi_2) \in \T^2: t_2=-\sqrt{\frac{m_1}{m_2}}t_1\right\}, \quad \Sigma_{P,2}=\left\{(\varphi_1, \varphi_2) \in \T^2: t_2=-\sqrt{\frac{m_2}{m_1}}t_1\right\}.
\end{equation}
Since $0$ is not an eigenvalue of $H_{\mathbb{I}-2P}$, the set $\mathcal{K}_0= \emptyset$ and the spectrum of $H_{\mathbb{I}-2P}$ is 
\begin{eqnarray}
\sigma\left( H_{\mathbb{I}-2P} \right)&=&  \bigcup_{j=1}^2 \left \{E_{\kappa}=\frac{\hbar^2}{2}\kappa^2 : \kappa\in \mathcal{K}_j  \right\}.                                       \end{eqnarray}
For $\kappa \in \mathcal{K}_1$ the eigenfunctions are
\begin{equation}
\psi_{E_{\kappa}}(x)=
\frac{ \cos \left(\frac{\kappa \omega_1}{2}\right)}{ \sin \left(\kappa \omega_1\right)} \sin\left(\kappa\sqrt{m_1}\left(x+\frac{\ell_1}{2}\right)\right)\chi_{I_1}(x)-\frac{ \cos \left(\frac{\kappa \omega_2}{2}\right)}{ \sin \left(\kappa \omega_2\right)} \sin\left(\kappa\sqrt{m_2}\left(x-\frac{\ell_2}{2}\right)\right)\chi_{I_2}(x),
\end{equation}
while for $\kappa \in \mathcal{K}_2$ the eigenfunctions are
\begin{equation}
\psi_{E_{\kappa}}(x)=
\frac{ \sin \left(\frac{\kappa \omega_1}{2}\right)}{ \sin \left(\kappa \omega_1\right)} \cos\left(\kappa\sqrt{m_1}\left(x+\frac{\ell_1}{2}\right)\right)\chi_{I_1}(x)+\frac{ \sin \left(\frac{\kappa \omega_2}{2}\right)}{ \sin \left(\kappa \omega_2\right)} \cos\left(\kappa\sqrt{m_2}\left(x-\frac{\ell_2}{2}\right)\right)\chi_{I_2}(x).
\end{equation}
The corresponding leaning is, for $\kappa \to +\infty$,
\begin{equation}
    \label{eq:Leaning_Ring}
 \mathcal{L}(\psi_{E_{\kappa}})=
 \begin{cases}
     \frac{\left[m_1(1-\cos(\omega_1 \kappa))+m_2(1+\cos(\omega_2 \kappa))\right]\ell_2-m_1\ell_1}{\left[m_1(1-\cos(\omega_1 \kappa))+m_2(1+\cos(\omega_2 \kappa))\right]\ell_2-m_1\ell_1}+O\left(\frac{1}{\kappa}\right)&\text{if $\kappa \in \mathcal{K}_1$}\\\\
    \frac{\left[m_1(1+\cos(\omega_1 \kappa))+m_2(1-\cos(\omega_2 \kappa))\right]\ell_2-m_1\ell_1}{\left[m_1(1+\cos(\omega_1 \kappa))+m_2(1-\cos(\omega_2 \kappa))\right]\ell_2-m_1\ell_1}+O\left(\frac{1}{\kappa}\right)&\text{if $\kappa \in \mathcal{K}_2$}
 \end{cases}.
\end{equation}
Therefore, as in the previous case, the leaning densely fills the interval with endpoints
\begin{align}
\label{eq:suplimring}
\limsup_{\kappa \to + \infty} \mathcal{L}(\psi_{E_{\kappa}})&=\frac{\ell_2-\ell_1}{\ell_2+\ell_1},\\
\label{eq:inflimring}
\liminf_{\kappa \to + \infty} \mathcal{L}(\psi_{E_{\kappa}})&=\frac{m_2\ell_2-m_1\ell_1}{m_2\ell_2+m_1\ell_1},
\end{align}
if $m_2\leq m_1$ (otherwise the values of of the upper and lower limits are swapped). 

The angles between the orbit of the flow and the zero sets $\Sigma_{P,1}$ and $\Sigma_{P,2}$ are given by,
\begin{equation}
\cos\theta_1=\frac{\left(\frac{\omega_1\sqrt{m_1}}{1+\cos\varphi_1}+\frac{\omega_2\sqrt{m_2}}{1+\cos\varphi_2}\right)}{\| \omega\|\sqrt{\frac{m_1}{\left({1+\cos\varphi_1}\right)^2}+\frac{m_2}{\left({1+\cos\varphi_2}\right)^2}}}, \quad (\varphi_1,\varphi_2) \in \Sigma_{P,1},
\end{equation}
\begin{equation}
\cos\theta_2=
\frac{\left(\frac{\omega_1\sqrt{m_2}}{1+\cos\varphi_1}+\frac{\omega_2\sqrt{m_1}}{1+\cos\varphi_2}\right)}{\| \omega\|\sqrt{\frac{m_2}{\left({1+\cos\varphi_1}\right)^2}+\frac{m_1}{\left(1+\cos \varphi_2\right)^2}}}, \quad (\varphi_1,\varphi_2) \in \Sigma_{P,2}
\end{equation}
and so $\cos\theta_j>0$ on $\Sigma_{P,j}$, $j=1,2$, (the orbit is nowhere tangent at $\Sigma_P$). 
The Barra-Gaspard measure $\nu_P$ is supported on the two components with density
\begin{equation}
\label{eq:BGmeas_KirRing}
\frac{1}{4 \pi  (\omega_1+\omega_2)}\times
    \begin{cases}
     \left(\omega_1\frac{2  \sqrt{m_1m_2}}{(m_2-m_1) \cos \varphi_1+(m_1+m_2)}+\omega_2\right),\quad&\text{on $\Sigma_{P,1}$}\\\\
     \left(\omega_1\frac{2  \sqrt{m_1m_2}}{(m_1-m_2) \cos \varphi_1+(m_1+m_2)}+\omega_2\right),\quad&\text{on $\Sigma_{P,2}$}.
    \end{cases}
    \end{equation}
     With a slight abuse of notation we write
  \begin{equation}
    \label{eq:BGmeas_KirRing_2}
    d\nu_P=\frac{1}{2 \pi  (\omega_1+\omega_2)} \left(\omega_1\frac{2 \sqrt{m_1m_2} (m_1+m_2)}{(m_1+m_2)^2-(m_1-m_2)^2 \cos ^2\varphi_1}+\omega_2\right)d\varphi_1, \quad \varphi_1\in[0,2\pi),
\end{equation}
see Fig.~\ref{fig:subBG2}.
The Ces\`aro mean is computed from Theorem~\ref{thm:BG} as a space average on $\Sigma_P=\Sigma_{P,1} \cup \Sigma_{P,2}$ with respect to the measure $\nu_P$ in~\eqref{eq:BGmeas_KirRing}:
\begin{eqnarray}\label{cesmeanlim}
       \lim_{E\to +\infty}\frac{\sum_{E_{\kappa}\leq E} \mathcal{L}(\psi_{E_{\kappa}})}{\#\{E_{\kappa}\leq E\}}&=&\int_{\Sigma_{P,1}}\frac{\left[m_1(1-\cos\varphi_1)+m_2(1+\cos\varphi_2)\right]\ell_2-m_1\ell_1}{\left[m_1(1-\cos\varphi_1)+m_2(1+\cos\varphi_2)\right]\ell_2-m_1\ell_1} \, d\nu_{P,1}+ \nonumber \\
       && +\int_{\Sigma_{P,2}}\frac{\left[m_1(1+\cos\varphi_1)+m_2(1-\cos\varphi_2)\right]\ell_2-m_1\ell_1}{\left[m_1(1+\cos\varphi_1)+m_2(1-\cos\varphi_1)\right]\ell_2-m_1\ell_1} \, d\nu_{P,2} \nonumber \\
       &=&\frac{\sqrt{m_2}\ell_2-\sqrt{m_1}\ell_1}{\sqrt{m_2}\ell_2+\sqrt{m_1}\ell_1}.
\end{eqnarray}
See Fig.~\ref{fig:subLeaning1}.
\begin{rem}
   When the masses are equal, the spectral function  simplifies:
\begin{equation}
f_P(\varphi_1,\varphi_2)=2m(1- \cos(\varphi_1+\varphi_2)),\quad\text{ if $m_1=m_2$}.
\end{equation}
\end{rem} 
\subsection{A graph with a single pendant vertex}
\label{sec:pendant}
Imposing zero Kirchhoff b.c.\ at three vertices (here $0^-, 0^+$, and $\ell_2$), and zero Dirichlet b.c.\ at the fourth one (here $-\ell_1)$, we get: 
\begin{equation}\label{eq:Pendant}
    \begin{cases}
    \psi(0^-)=\psi(0^+)= \psi(\ell_2),\\\\
    \psi(-\ell_1)  =0,\\ \\
   \dfrac{1}{m_1}\psi'(0^-)- \dfrac{1}{m_2}\psi'(0^+)+ \dfrac{1}{m_2}\psi'(\ell_2)=0.
    \end{cases}
\end{equation}
These b.c.\ correspond to the topology of a graph with one pendant vertex (see Figure~\ref{fig:pictorialrepresentation}). We emphasize that, unlike all the other cases considered in above, this topology is not symmetric with respect to the exchange of the intervals $I_1$ and $I_2$.
In this case  $P=\ketbra{u}{u}$ (in Dirac notation) is the rank $r=1$ orthogonal projection onto the span of $u=\frac{1}{\sqrt{3}}(0,1,1,1)^T$. The quasi-periodic component of the spectral function is
\begin{eqnarray}
\label{eq:SF_KirPend}
\qquad f_P(\varphi_1,\varphi_2)&=& \sin \left(\frac{\varphi_2}{2}\right) \left[2 \sqrt{m_1} \sin \varphi_1 \sin \left(\frac{\varphi_2}{2}\right)-\sqrt{m_2} \cos \varphi_1 \cos \left(\frac{\varphi_2}{2}\right)\right] \\ 
&=& \frac{t_2}{(1+t_1^2)(1+t_2^2)}\left[4 \sqrt{m_1} t_1 t_2-\sqrt{m_2} \left(1-t_1^2\right)\right] \quad  \textrm{ (for $\varphi_1, \varphi_2\neq\pi)$}, \nonumber
 \end{eqnarray}
where $t_1=\tan(\varphi_1/2)$ and $t_2=\tan(\varphi_2/2)$. In this case $\Sigma_P=\Sigma_{P,1}\cup \Sigma_{P,2}$ where
\begin{equation}
\label{eq:zeroset_KirPend_halftan}
\Sigma_{P,1}=\{(\varphi_1,0): \varphi_1 \in [0,2\pi) \}, \quad  \Sigma_{P,2}= \left\{(\varphi_1, \varphi_2)\in \mathbb{T}^2: t_2=\sqrt{\frac{m_2}{m_1}}\frac{1-t_1^2}{4t_1}\right\}.
\end{equation}  
Since $0$ is not an eigenvalue of $H_{\mathbb{I}-2P}$, the set $\mathcal{K}_0= \emptyset$ and the spectrum of $H_{\mathbb{I}-2P}$ is 
\begin{eqnarray}
\sigma\left( H_{\mathbb{I}-2P} \right)&=&  \bigcup_{j=1}^2 \left \{E_{\kappa}=\frac{\hbar^2}{2}\kappa^2 : \kappa\in \mathcal{K}_j  \right\}.
                                        \end{eqnarray}
For $\kappa \in \mathcal{K}_1=\frac{2\pi}{\omega_2}\Z \setminus\{ 0\}$, the eigenfunctions are 
\begin{equation}
 \psi_{E_{\kappa}}(x)=  \sin\left(\kappa\sqrt{m_2} x\right)\chi_{I_2} (x),
 \end{equation}
while for $\kappa \in \mathcal{K}_2$,  the eigenfunctions are 
 \begin{equation}
\psi_{E_{\kappa}}(x)= \displaystyle\frac{\sin\left(\kappa\sqrt{m_1}(x+\ell_1)\right)}{\sin\left(\kappa\omega_1\right)}\chi_{I_1}(x)+\frac{\sin\left(\kappa\sqrt{m_2}x\right)-\sin\left(\kappa\sqrt{m_2}(x-\ell_2)\right)}{\sin\left(\kappa\omega_2\right)}\chi_{I_2}(x).
 \end{equation}
The corresponding leaning is, for $\kappa\to +\infty$,
\begin{equation}
    \label{eq:Leaning_Pendant}
 \mathcal{L}(\psi_{E_{\kappa}})=
 \begin{cases}
     1 &\text{if $\kappa \in \mathcal{K}_1$}\\\\
    \frac{(4m_1+m_2)-(4m_1-m_2)\cos(2\omega_1\kappa)-8m_1\ell_1}{(4m_1+m_2)-(4m_1-m_2)\cos(2\omega_1\kappa)+8m_1\ell_1} +O\left(\frac{1}{\kappa}\right) &\text{if $\kappa \in \mathcal{K}_2$}
 \end{cases}.
\end{equation}
In this case, there is a subsequence of eigenfunctions completely localised in $I_2$ (with leaning $1$), and the leaning of the other eigenfunctions densely fills the interval  with endpoints  
\begin{align}
\label{eq:upper_limit_pend}
\limsup_{\kappa \to + \infty} \mathcal{L}(\psi_{E_\kappa})&=\frac{\ell_2-\ell_1}{\ell_2+\ell_1},\\
    \label{eq:lower_limit_pend}
\liminf_{\kappa \to + \infty} \mathcal{L}(\psi_{E_\kappa})&=\frac{m_2\ell_2-4m_1\ell_1}{m_2\ell_2+4m_1\ell_1},
\end{align}
if $m_2\leq m_1$ (otherwise the values of the upper and lower limits are swapped).

The angles $\theta_1$ and $\theta_2$ between the orbit of the flow and the zero sets $\Sigma_{P,1}$ and $\Sigma_{P,2}$ respectively, are given by
\begin{equation}
\cos\theta_1=\frac{\omega_2}{\| \omega\|},
\end{equation}
for $(\varphi_1,\varphi_2) \in \Sigma_{P,1}$ and
\begin{equation}
\cos\theta_2=\frac{ 4 \omega_1 \sqrt{m_1 m_2}+4m_1 \omega_2 \sin ^2\varphi_1+m_2 \omega_2 \cos ^2\varphi_1}{\| \omega\|\sqrt{4 \left(m_2^2-16 m_1^2\right) \cos (2 \varphi_1)+48 m_1^2+(m_2-4m_1)^2 \cos (4 \varphi_1)+136 m_1 m_2+3 m_2^2}},
\end{equation}
for $(\varphi_1,\varphi_2) \in \Sigma_{P,2}$, so $\cos\theta_j>0$ on $\Sigma_{P,j}$, $j=1,2$, hence the orbit is nowhere tangent at $\Sigma_P$.
The Barra-Gaspard measure is supported on the two components with density
\begin{equation}
\label{eq:BGmeas_KirPend}
\frac{1}{4 \pi  (\omega_1+\omega_2)}\times
    \begin{cases}
         \omega_2\quad&\text{on $\Sigma_{P,1}$}\\\\
    \omega_1\frac{8\sqrt{m_1m_2}}{4m_1+m_2-(4m_1-m_2)\cos(2\varphi_1)}+\omega _2,\quad&\text{on $\Sigma_{P,2}$}.
    \end{cases}
    \end{equation}
    With a slight abuse of notation, we write
\begin{equation}
    \label{eq:BGmeas_Pend_2}
    d\nu_P=\frac{1}{2\pi\left(\omega_1+\omega_2\right)}\left(\omega_1\frac{4\sqrt{m_1m_2}}{4m_1+m_2-(4m_1-m_2)\cos(2\varphi_1)}+\omega_2\right)d\varphi_1,\quad\varphi_1\in[0,2\pi),
\end{equation}
see Fig.~\ref{fig:subBG3}. 
We can compute the Ces\`aro mean 
\begin{eqnarray}
       \lim_{E\to +\infty}\frac{\sum_{E_{\kappa}\leq E} \mathcal{L}(\psi_{E_{\kappa}})}{\#\{E_{\kappa}\leq E\}}&=&\int_{\Sigma_{P,1}}  d\nu_{P,1}+ \nonumber \\
       && +\int_{\Sigma_{P,2}}\frac{(4m_1+m_2)-(4m_1-m_2)\cos(2\varphi_1)-8m_1\ell_1}{(4m_1+m_2)-(4m_1-m_2)\cos(2\varphi_1)+8m_1\ell_1}  \, d\nu_{P,2} \nonumber \\
       &=&\frac{\omega_2/2}{\omega_1+\omega_2}+\frac{\omega_2/2-\omega_1}{\omega_1+\omega_2} \nonumber \\
       &=&\frac{\omega_2-\omega_1}{\omega_2+\omega_1} \nonumber \\
       &=&\frac{\sqrt{m_2}\ell_2-\sqrt{m_1}\ell_1}{\sqrt{m_2}\ell_2+\sqrt{m_1}\ell_1}. \nonumber
\end{eqnarray}
See Fig.~\ref{fig:subLeaning2}.
\begin{rem}
   When the masses are equal, the spectral function simplifies:
\begin{equation}
f_P(\varphi_1,\varphi_2)=\frac{\sqrt{m_1}}{2}\left( 2 \sin \varphi_1-2 \sin \varphi_1\cos \varphi_2- \cos \varphi_1\sin \varphi_2\right) ,\quad\text{ if $m_1=m_2$}.
\end{equation}
\end{rem} 
\subsection{A rose graph}
\label{sec:Rose}
Finally we consider the case of Kirchhoff b.c.\ at the four vertices:
\begin{equation}
\label{eq:bc_rose}
\begin{cases}
\psi(-\ell_1)=\psi(0^-)=\psi(0^+)=\psi(\ell_2),\\\\
    -\dfrac{1}{m_1}\psi'(-\ell_1)  +\dfrac{1}{m_1}\psi'(0^-)- \dfrac{1}{m_2}\psi'(0^+)+ \dfrac{1}{m_2}\psi'(\ell_2)=0.
    \end{cases}
\end{equation}
This corresponds to a rose graph (a graph with two loops on one single vertex, see  Figure~\ref{fig:pictorialrepresentation}).
The b.c.~\eqref{eq:bc_rose} are scale-free with $P=\ketbra{u}{u}$ (in Dirac notation) of rank $r=1$ with $u=\frac{1}{2}(1,1,1,1)^T$. The quasi-periodic component of the spectral function is
\begin{eqnarray}
\label{eq:SF_KirRose}
f_P(\varphi_1,\varphi_2)&=& \sqrt{m_1}\sin\varphi_1 (1-\cos \varphi_2)+\sqrt{m_2}\sin\varphi_2  (1-\cos \varphi_1) \\
                                      &=& 4\frac{t_1t_2}{(1+t_1^2)(1+t_2^2)}(\sqrt{m_2}t_1+\sqrt{m_1}t_2) \quad \textrm{ (for $\varphi_1, \varphi_2\neq\pi)$},
\end{eqnarray}
where $t_1=\tan(\varphi_1/2)$ and $t_2=\tan(\varphi_2/2)$. In this case $\Sigma_P=\Sigma_{P,1}\cup \Sigma_{P,2} \cup \Sigma_{P,3}$ where
\begin{equation}
\Sigma_{P,1}=\{(0,\varphi_2): \varphi_2\in [0,2\pi) \}, \quad \Sigma_{P,2}=\{(\varphi_1,0): \varphi_1\in [0,2\pi) \},
\end{equation}    
and
\begin{equation}
\Sigma_{P,3}= \left\{(\varphi_1, \varphi_2)\in \mathbb{T}^2:t_2=-\sqrt{\frac{m_2}{m_1}}t_1\right\}.
\end{equation}
Since $0$ is an eigenvalue of $H_{\mathbb{I}-2P}$ the set $\mathcal{K}_0=\{0\}$ and the spectrum of $H_{\mathbb{I}-2P}$ is 
\begin{eqnarray}
\sigma\left( H_{\mathbb{I}-2P} \right)&=&  \bigcup_{j=0}^3 \left \{E_{\kappa}=\frac{\hbar^2}{2}\kappa^2 : \kappa\in \mathcal{K}_j  \right\}.
\end{eqnarray}
For $\kappa=0$ the eigenfunctions are same as \eqref{eq:zeromod}, for $\kappa \in \mathcal{K}_1=\frac{2\pi}{\omega_1}\Z  \setminus\{ 0\}$, the eigenfunctions are \begin{equation}
 \psi_{E_{\kappa}}(x)=  \sin\left(\kappa\sqrt{m_1} x\right)\chi_{I_1} (x),
 \end{equation}
with leaning $\mathcal{L}(\psi_{E_{\kappa}})=-1$, while for $\kappa \in \mathcal{K}_2=\frac{2\pi}{\omega_2}\Z \setminus\{ 0\}$, the eigenfunctions are 
\begin{equation}
 \psi_{E_{\kappa}}(x)=  \sin\left(\kappa\sqrt{m_2} x\right)\chi_{I_2} (x),
 \end{equation}
 with leaning $\mathcal{L}(\psi_{E_{\kappa}})=1$.
For $\kappa \in \mathcal{K}_3$, the eigenfunctions are 
\begin{equation}
 \psi_{E_{\kappa}}(x)= \displaystyle\frac{ \cos \left(\kappa\sqrt{m_1}\left(x+\frac{\ell_1}{2}  \right)\right)}{\cos\left(\frac{\kappa\omega _1}{2}\right)}\chi_{I_1}(x)+ \frac{ \cos \left(\kappa\sqrt{m_2}\left(x-\frac{\ell_2}{2}  \right)\right)}{\cos\left(\frac{\kappa\omega _2}{2}\right)}\chi_{I_2}(x),
\end{equation}
 and the corresponding leaning is
  \begin{equation}
  \mathcal{L}(\psi_{E_{\kappa}})=\frac{ m_1 \ell_2(1+\cos (\omega_1\kappa))+m_2\ell_2 m_2 (1-\cos (\omega_1\kappa))-2 m_1\ell_1}{m_1 \ell_2(1+\cos (\omega_1\kappa))+m_2\ell_2 m_2 (1-\cos (\omega_1\kappa))+2 m_1\ell_1 } +O\left(\frac{1}{\kappa}\right), \quad \textrm{as $\kappa \to + \infty$}.
 \end{equation}
In this case, there are two subsequences of eigenfunctions completely localised in $I_1$ (with leaning $-1$) and $I_2$ (with leaning $1$); all other eigenfunctions have leaning that densely fills the interval with endpoints 
\begin{align}
\label{eq:upper_limit_rose}
\limsup_{\kappa \to + \infty} \mathcal{L}(\psi_{E_\kappa})&=\frac{\ell_2-\ell_1}{\ell_2+\ell_1},\\
    \label{eq:lower_limit_rose}
\liminf_{\kappa \to + \infty} \mathcal{L}(\psi_{E_\kappa})&=\frac{m_2\ell_2-4m_1\ell_1}{m_2\ell_2+4m_1\ell_1},
\end{align}
if $m_2\leq m_1$ (otherwise the values of of the upper and lower limits are swapped). 
The Barra-Gaspard measure $\nu_P$ is supported on the three components with density
\begin{equation}
\label{eq:BGmeas_KirRose}
\frac{1}{4 \pi  (\omega_1+\omega_2)}\times
    \begin{cases}
         \omega_1\quad&\text{on $\Sigma_{P,1}$}\\\\
     \omega_2\quad&\text{on $\Sigma_{P,2}$}\\\\
     \omega_1\frac{2  \sqrt{m_1m_2}}{(m_1-m_2) \cos \varphi_1+(m_1+m_2)}+\omega_2 \quad&\text{on $\Sigma_{P,3}$}.
    \end{cases}
    \end{equation}
The Ces\`aro mean, computed using~\eqref{eq:BGmeas_KirRose}, is again given by~\eqref{cesmeanlim}:
\begin{eqnarray}
       \lim_{E\to +\infty}\frac{\sum_{E_{\kappa}\leq E} \mathcal{L}(\psi_{E_{\kappa}})}{\#\{E_{\kappa}\leq E\}}&=&-\int_{\Sigma_{P,1}}  d\nu_{P,1} +\int_{\Sigma_{P,2}} d\nu_{P,2} \nonumber \\
       &&+\int_{\Sigma_{P,3}}  \frac{ m_1 \ell_2(1+\cos \varphi_1)+m_2\ell_2 m_2 (1-\cos \varphi_1)-2 m_1\ell_1}{m_1 \ell_2(1+\cos \varphi_1)+m_2\ell_2 m_2 (1-\cos \varphi_1)+2 m_1\ell_1 } d\nu_{P,3} \nonumber \\
       &=&\frac{\sqrt{m_2}\ell_2-\sqrt{m_1}\ell_1}{\sqrt{m_2}\ell_2+\sqrt{m_1}\ell_1},
\end{eqnarray}
see Fig.~\ref{fig:subLeaning3}.
\begin{rem}
   When the masses are equal, the spectral function becomes:
\begin{equation}
f_P(\varphi_1,\varphi_2)=\sqrt{m_1}\left(\sin\varphi_1+\sin\varphi_2-\sin(\varphi_1+\varphi_2)\right),\quad\text{ if $m_1=m_2$}.
\end{equation}
\end{rem}

 \begin{figure}[th]
  \centering
  \begin{subfigure}[b]{0.49\textwidth}
    \includegraphics[width=\textwidth]{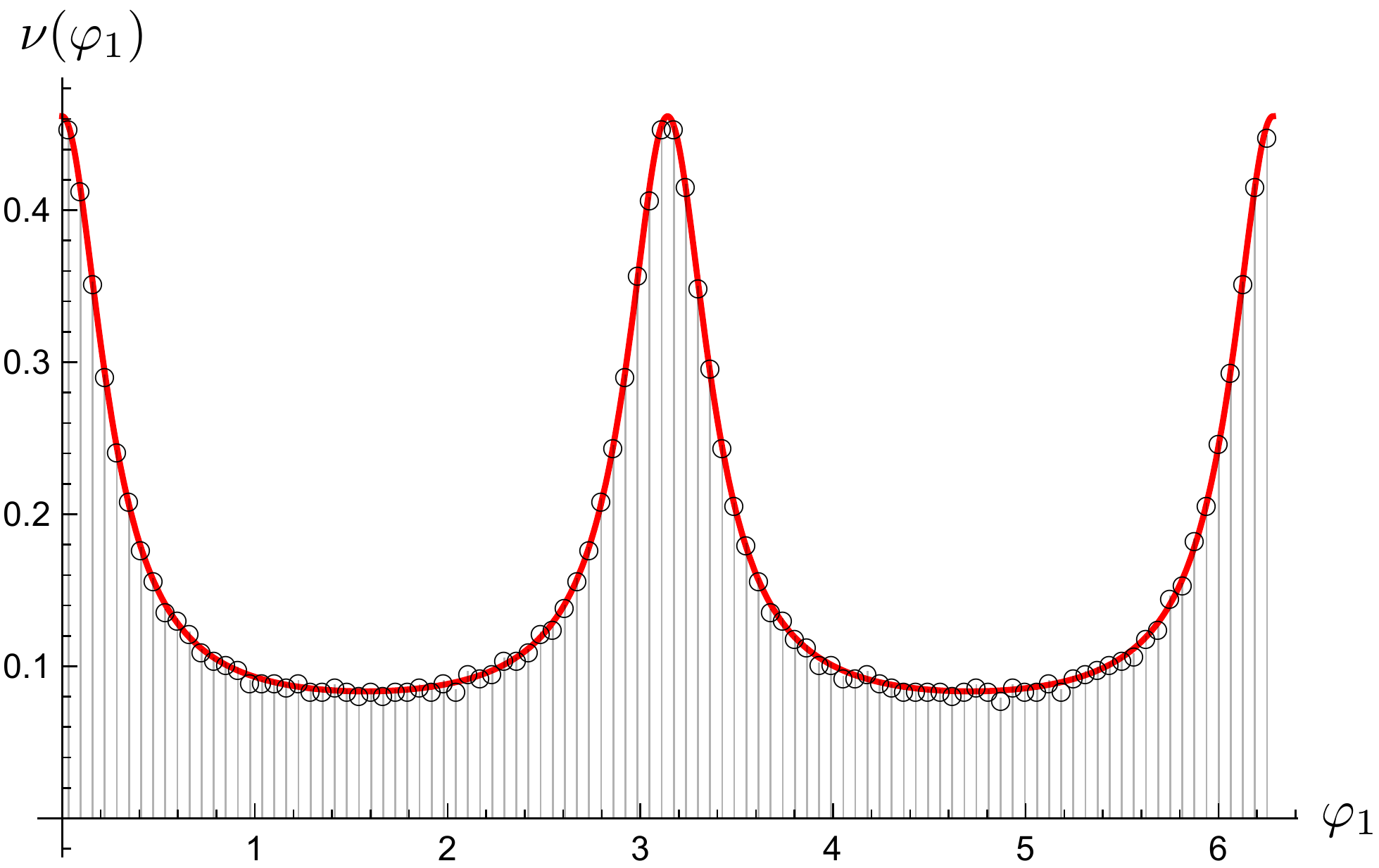}  
    \caption{The segment.}
    \label{fig:subBG1}
  \end{subfigure}
  \hfill
  \begin{subfigure}[b]{0.49\textwidth}
    \includegraphics[width=\textwidth]{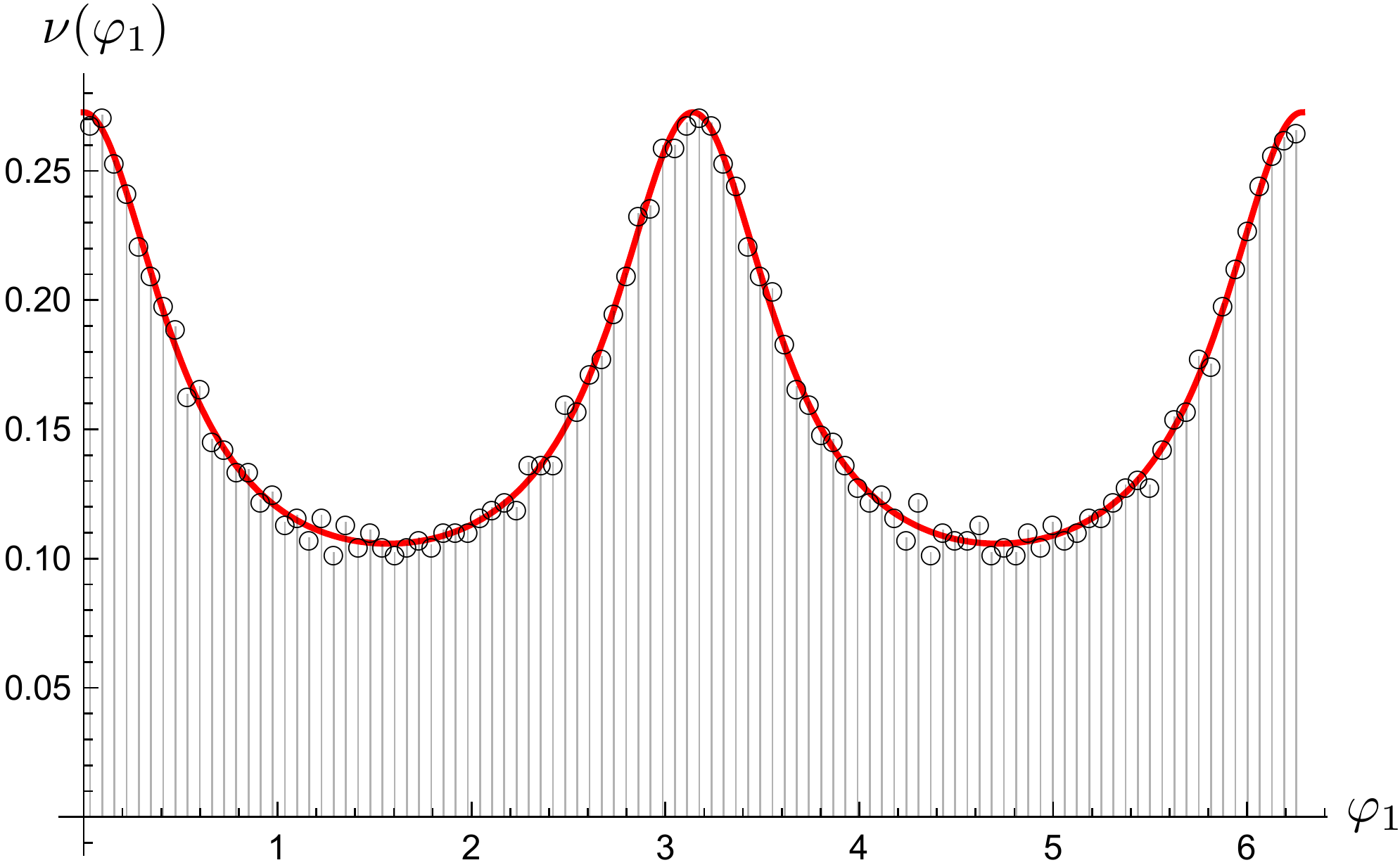}
    \caption{The ring.}
    \label{fig:subBG2}
  \end{subfigure}
    \hfill
  \begin{subfigure}[b]{0.49\textwidth}
    \includegraphics[width=\textwidth]{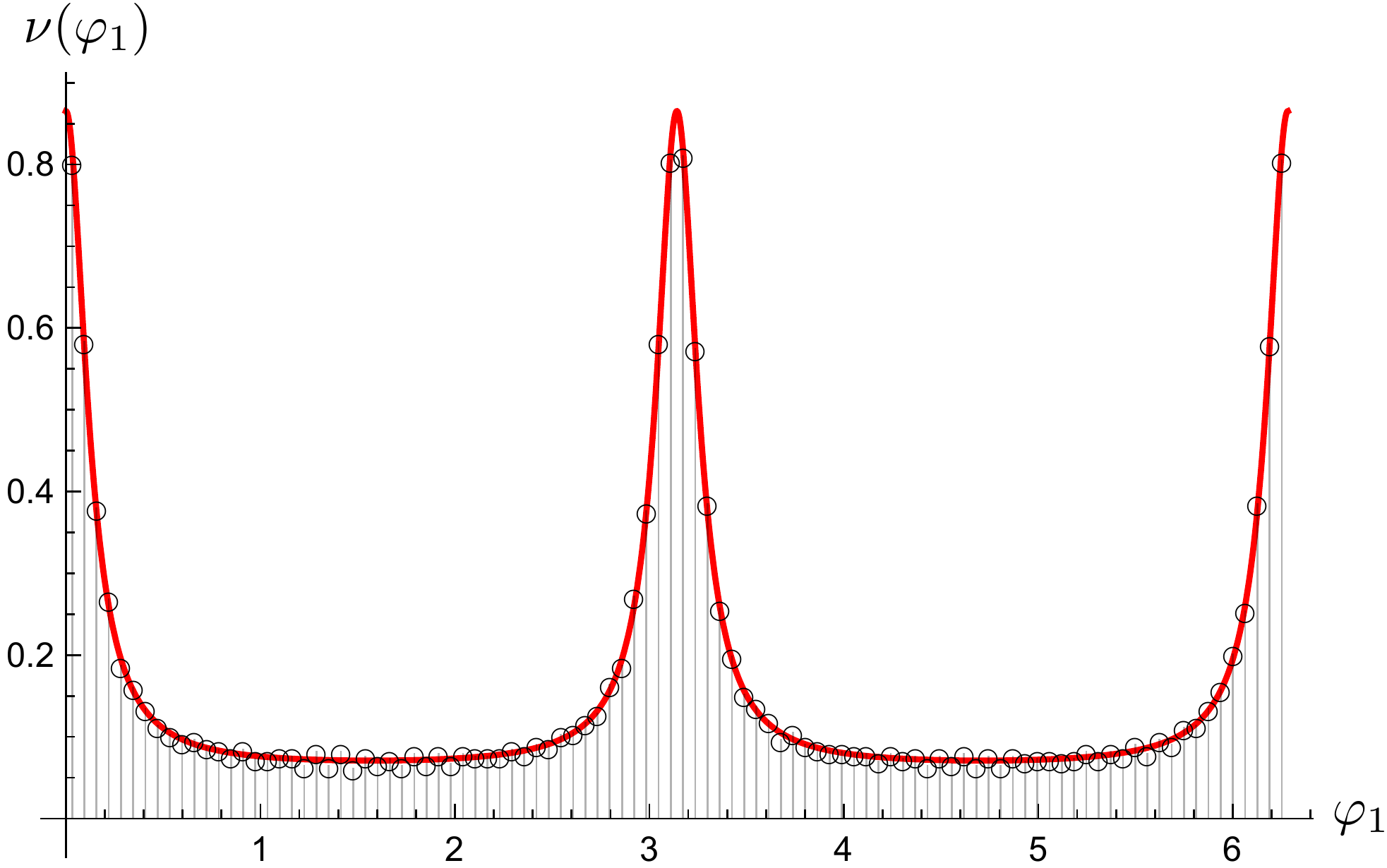}
    \caption{The single pendant.}
    \label{fig:subBG3}
  \end{subfigure}
  \hfill
  \begin{subfigure}[b]{0.49\textwidth}
    \includegraphics[width=\textwidth]{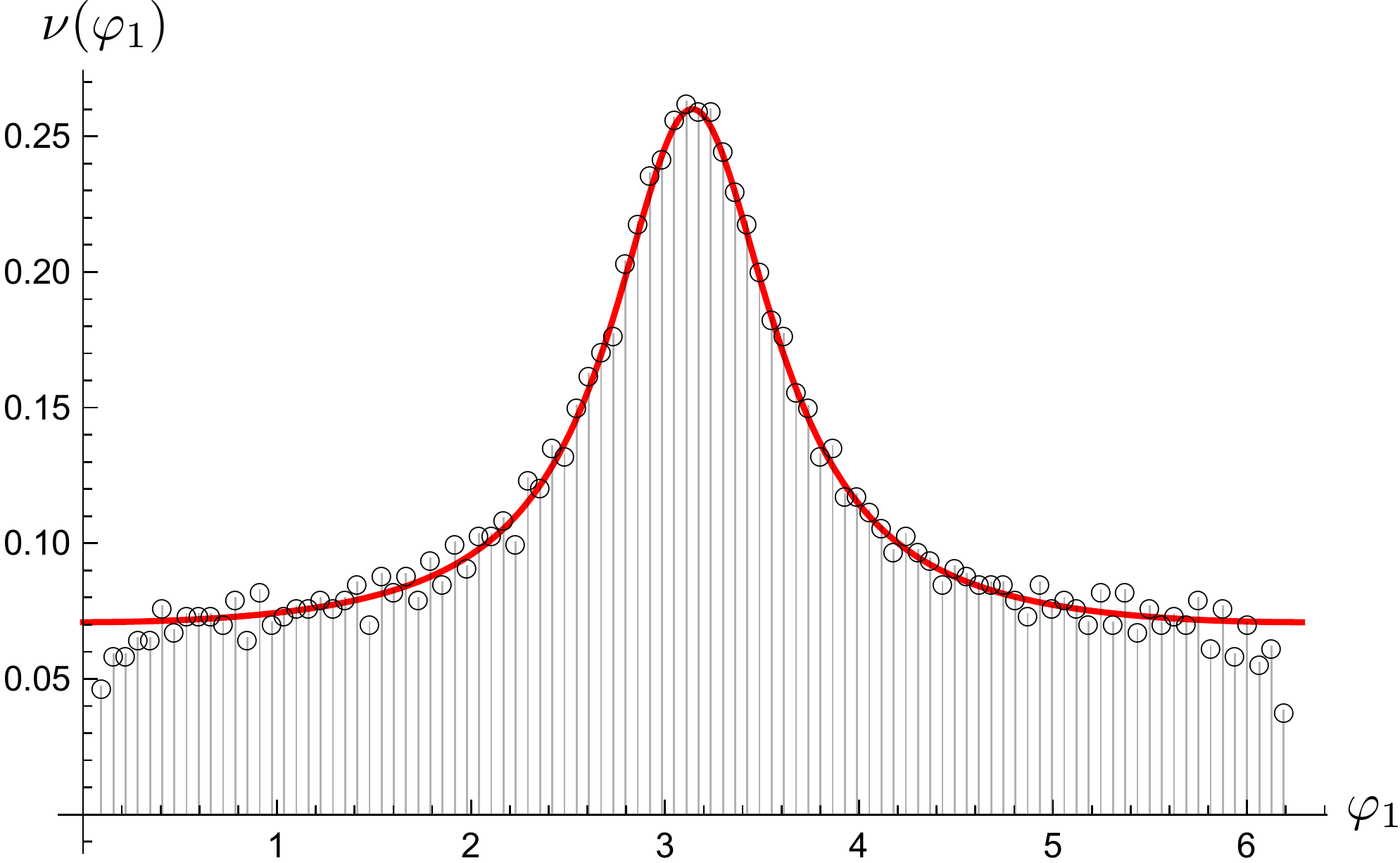}
    \caption{The rose.}
    \label{fig:subBG4}
  \end{subfigure}
  \caption{Barra-Gaspard measures $d\nu_P(\varphi_1)$ from~\eqref{eq:BGmeas_KirDir}-\eqref{eq:BGmeas_KirRing_2}-\eqref{eq:BGmeas_Pend_2}-\eqref{eq:BGmeas_KirRose} (see Figure~\ref{fig:pictorialrepresentation} for the corresponding pictorial representations). The dots are the numerical points $\omega_1\kappa\mod 2\pi$, where $\kappa \neq 0$ are such that  $f_P(\omega_1 \kappa, \omega_2 \kappa)=0$, with $f_P$ in~\eqref{eq:SF_KirDir}-\eqref{eq:SF_KirRing}-\eqref{eq:SF_KirPend}-\eqref{eq:SF_KirRose}. For the rose we omit in the plot the delta at $\varphi_1=0$. The parameters are the same of Fig.~\ref{fig:SF}.}
  \label{fig:BG}
\end{figure}

 \begin{figure}[h]
  \centering
    \begin{subfigure}{0.45\textwidth}
    \includegraphics[width=\textwidth]{Leaning_KirkGG.pdf}
    \caption{The segment.}
           \label{fig:subLeaning0}
      \end{subfigure}
          \hfill
  \begin{subfigure}{0.45\textwidth}
    \includegraphics[width=\textwidth]{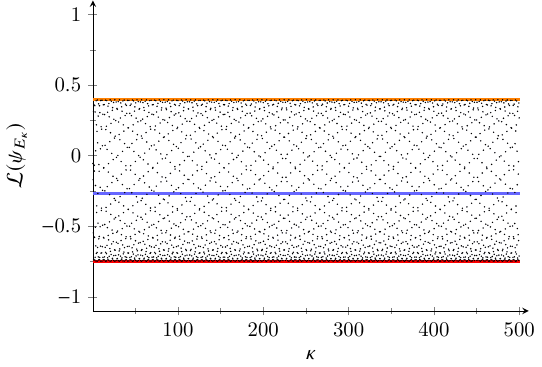}
    \caption{The ring.}
    \label{fig:subLeaning1}
  \end{subfigure}
    \hfill
  \begin{subfigure}{0.45\textwidth}
    \includegraphics[width=\textwidth]{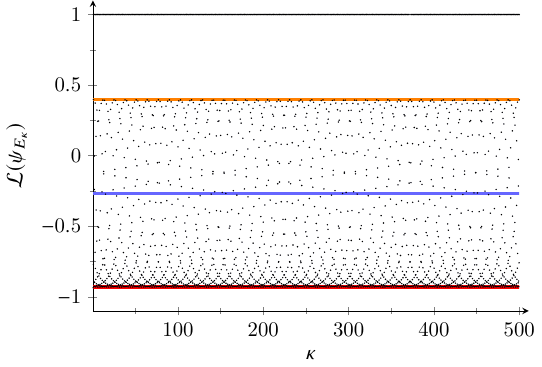}
    \caption{The single pendant.}
    \label{fig:subLeaning2}
  \end{subfigure}
  \hfill
  \begin{subfigure}{0.45\textwidth}
    \includegraphics[width=\textwidth]{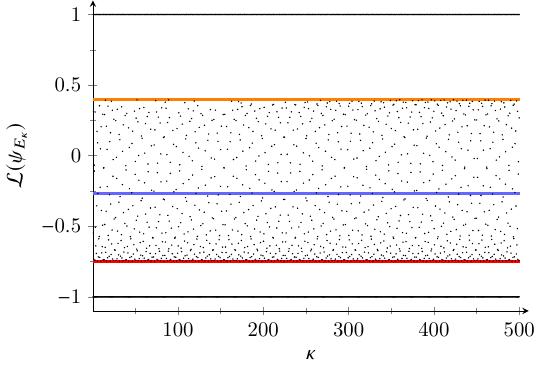}
    \caption{The rose.}
    \label{fig:subLeaning3}
  \end{subfigure}
  \caption{Numerical values (black dots) of the leaning  $\mathcal{L}(\psi_{E_{\kappa}})$ versus $\kappa$. The upper (in orange) and the lower (in red) 
 limits of the band are the one computed in Sections~\ref{sec:KirDir}-~\ref{sec:KirRing}-~\ref{sec:pendant}-~\ref{sec:Rose}. The Ces\`aro limit in blue is also displayed. The parameters are the same of Fig.~\ref{fig:SF}.}
  \label{fig:LeaningAll}
\end{figure}

\section{Semiclassical structure of the eigenfunctions}
\label{sec:semiclassical}

   \subsection{Wigner functions}
   \label{sec:Wigner}
   To understand the semiclassical behaviour of the eigenfunctions, it is natural to consider their associated Wigner functions. The idea is that, in the semiclassical limit, the Wigner function usually offers a phase space portrait of the quantum-to-classical correspondence.
  \subsection{General structure}  
   The Wigner function $W\psi$ of $\psi\in L^2(\R)$ is the function  
   \begin{equation}
       W\psi(x,p)=\frac{1}{2\pi\hbar}\int_{\R}\overline{\psi\left(x-\frac{y}{2}\right)}\psi\left(x+\frac{y}{2}\right)\rme^{-\ii py/\hbar}dy,
   \end{equation}
   on the phase space $\mathbb{R}_x\times\mathbb{R}_p$. This function is real-valued and captures, in a distributional sense, the quantum probability density in phase space.  It satisfies the normalization identity 
     \begin{equation}
     \int_{\R_x\times\R_p}   W\psi(x,p)dxdp=\|\psi\|_{L^2(\R)}.
   \end{equation}
   Moreover $W\psi\in L^2(\R_x\times\R_p)\cap C_0(\R_x\times\R_p)$, where $C_0$ is the space of continuous functions vanishing at infinity,  and $ \| W\psi\|_{L^2(\R_x\times\R_p)}=\frac{1}{\sqrt{2\pi\hbar}}\|\psi\|^2_{L^2(\R)}$. The reader can consult~\cite{Lions93,Folland} for further details.
 
  In our setting, let $\psi=\psi_1\chi_{I_1}+\psi_2\chi_{I_2}$ be a wavefunction on $\Omega=I_1\cup I_2$.  The Wigner function naturally decomposes as
   \begin{equation}
       W\psi=W_{11}+W_{22}+W_{12}+W_{21},
   \end{equation}
   where each term corresponds to contributions from auto- and cross-correlations between the components:
   \begin{equation}
       W_{mn}(x,p)=\frac{1}{2\pi\hbar}\int_{R_{mn}(x)}\overline{\psi_m\left(x-\frac{y}{2}\right)}\psi_n\left(x+\frac{y}{2}\right)\rme^{-\ii py/\hbar}dy,\quad m,n=1,2.
   \end{equation}

\par

    Here, the  integration regions in the variable $y$ (see Fig.~\ref{fig:int_regions}) reflect the finite support of the wave components and depend explicitly on $x$:
   \begin{align}
   \label{eq:int_regions}
       R_{11}(x)&=\left\{y\in\R\colon 2\left|x+\frac{\ell_1}{2}\right|-\ell_1\leq y\leq \ell_1-2\left|x+\frac{\ell_1}{2}\right|\right\} \quad &\text{for $x\in[-\ell_1,0]$}\\ \notag
        R_{22}(x)&=\left\{y\in\R\colon 2\left|x-\frac{\ell_2}{2}\right|-\ell_2\leq y\leq \ell_2-2\left|x-\frac{\ell_2}{2}\right|\right\} \quad &\text{for $x\in[0,\ell_2]$}\\ \notag
        R_{12}(x)&=\left\{y\in\R\colon 2\left|x\right|\leq y\leq (\ell_1+\ell_2)-2\left|x-\frac{\ell_2-\ell_1}{2}\right|\right\} \quad &\text{for $x\in\left[-\frac{\ell_1}{2},\frac{\ell_2}{2}\right]$}\\ \notag
        R_{21}(x)&=\left\{y\in\R\colon -(\ell_1+\ell_2)+2\left|x-\frac{\ell_2-\ell_1}{2}\right| \leq y\leq -2\left|x\right| \right\} \quad &\text{for $x\in\left[-\frac{\ell_1}{2},\frac{\ell_2}{2}\right]$}.
   \end{align}
   Note that $R_{12}(x)=-R_{21}(x)$ for all $x\in\left[-\frac{\ell_1}{2},\frac{\ell_2}{2}\right]$, and hence 
     \begin{equation}
       W\psi=W_{11}+W_{22}+2\operatorname{Re} W_{12}.
   \end{equation}

   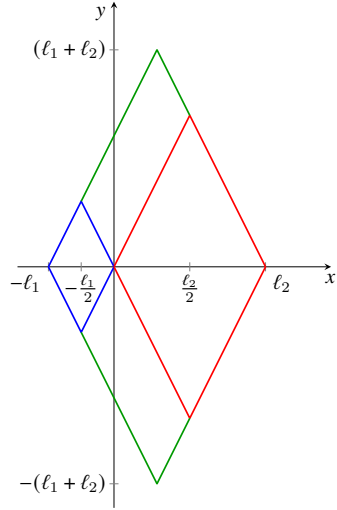
\begin{figure}
   \centering
\begin{tikzpicture}[scale=0.8]
  \begin{axis}[
    axis lines=middle,
    width=16cm,
    height=10cm,
    xmin=-4, xmax=9,
    ymin=-10, ymax=11,
    samples=400,
    domain=-3:7,
    xtick=\empty,
    extra x ticks={-\Lone, -\Lone/2,\Ltwo/2,\Ltwo},
    extra x tick labels={\empty},
    ytick=\empty,
    extra y ticks={-\Lone-\Ltwo,\Lone+\Ltwo},
    extra y tick labels={\empty},
    axis equal image,
    clip=false,
    enlargelimits=false,
  ]

  % Parameters
  \pgfmathsetmacro{\Lone}{exp(1)}
  \pgfmathsetmacro{\Ltwo}{2*pi}
  \pgfmathsetmacro{\xshift}{(\Ltwo - \Lone)/2}
  \pgfmathsetmacro{\sumL}{\Lone + \Ltwo}

  % a11(x) on [-L1, 0]
  \addplot[thick, blue, domain=-\Lone:0] 
    { \Lone - 2*abs(x + \Lone/2) };

  % a22(x) on [0, L2]
  \addplot[thick, red, domain=0:\Ltwo] 
    { \Ltwo - 2*abs(x - \Ltwo/2) };

  % a12(x) on [-L1/2, L2/2]
  \addplot[thick, green!60!black, domain=-\Lone/2:\Ltwo/2] 
    { \sumL - 2*abs(x - \xshift)  };

  % -a11(x)
  \addplot[ thick, blue, domain=-\Lone:0] 
    { -(\Lone - 2*abs(x + \Lone/2)) };

  % -a22(x)
  \addplot[ thick, red, domain=0:\Ltwo] 
    { -(\Ltwo - 2*abs(x - \Ltwo/2)) };

  % -a12(x)
  \addplot[thick, green!60!black, domain=-\Lone/2:\Ltwo/2] 
    { -(\sumL - 2*abs(x - \xshift)) };

  \node at (axis cs:-\Lone,0) [below left] {$-\ell_1$};
  \node at (axis cs:-\Lone/2,0) [below] {$-\frac{\ell_1}{2}$};
  \node at (axis cs:\Ltwo/2,0) [below] {$\frac{\ell_2}{2}$};
  \node at (axis cs:\Ltwo,0) [below right] {$\ell_2$};
 % \node at (axis cs:\xshift,0) [below=4pt] 
 %   {$\frac{\mathscr{l}_2 - \mathscr{l}_1}{2}$};

 % \node at (axis cs:\Lone,0) [above right] {$\mathscr{l}_1$};
  \node at (axis cs:0,{-\sumL}) [ left] {$-(\ell_1 +\ell_2)$};
  \node at (axis cs:0,{\sumL}) [ left] {$(\ell_1 + \ell_2)$};
    \node at (axis cs:0,10.5) [left] {$y$};
      \node at (axis cs:9,0) [below] {$x$};
  \end{axis}
\end{tikzpicture}
  \caption{Integration regions~\eqref{eq:int_regions} for the computation of $W\psi$.}
  \label{fig:int_regions}
\end{figure}

The Wigner function on quantum graphs was considered in~\cite{Barra2001} without the cross terms $W_{12}$ and $W_{21}$.
\subsection{Wigner function of $\psi_{E_{\kappa}}$ and its semiclassical limits}
Our main interest lies in the Wigner functions of the normalised eigenfunctions $\psi_{E_{\kappa}}$ of $H_U$ for all possible $U \in \mathcal{U}(4)$. 
After computing the resulting integrals explicitly, we get the following.
\begin{prop}
\label{prop:Wigner} Let $\psi_{E_{\kappa}}$ be an eigenfunction of $H_U$ as in~\eqref{eq:eigenfunction}. Denote $k_j=\kappa\sqrt{m_j}$, for $j=1,2$. Then,

\begin{align}
\label{eq:Wigner}
    \notag W\psi_{E_{\kappa}}(x,p)&=  \\
     \frac{1}{\pi}\chi_{I_1}(x) a_{11}(x) 
    &\left[ |c_1|^2\operatorname{S}_{11}\left(x,p-\hbar k_1\right) 
    + |d_1|^2\operatorname{S}_{11}\left(x,p+\hbar k_1\right) 
    + 2\Re c_1\overline{d_1}\rme^{2\ii k_1x}\operatorname{S}_{22}\left(x,p\right) \right] \notag \\
    + \frac{1}{\pi} \chi_{I_2}(x)a_{22}(x) 
    &\left[ |c_2|^2\operatorname{S}_{22}\left(x,p-\hbar k_2\right) 
    + |d_2|^2\operatorname{S}_{22}\left(x,p+\hbar k_2\right) 
    + 2\Re c_2\overline{d_2}\rme^{2\ii k_2x}\operatorname{S}_{11}\left(x,p\right) \right] \notag \\
    + \frac{2}{\pi}\chi_{I_{12}(x)} a_{12}(x)&\Re  
    \biggl[ 
        {c_1}\overline{c_2} \rme^{\ii(k_1-k_2)x} \rme^{\ii \widetilde{a}_{12}(x)\left(p-\hbar \frac{k_1+k_2}{2}\right)} 
\operatorname{S}_{12}\left(x,p-\hbar \frac{k_1+k_2}{2}\right) \notag \\
    &\quad\!+ {c_1}\overline{d_2} \rme^{\ii(k_1+k_2)x} \rme^{\ii\widetilde{a}_{12}(x)\left(p-\hbar \frac{k_1-k_2}{2}\right)} \operatorname{S}_{12}\left(x,p-\hbar \frac{k_1-k_2}{2}\right) \notag \\
    &\quad\!+ {d_1}\overline{c_2} \rme^{-\ii(k_1+k_2)x} \rme^{\ii\widetilde{a}_{12}(x)\left(p+\hbar \frac{k_1-k_2}{2}\right)} \operatorname{S}_{12}\left(x,p+\hbar \frac{k_1-k_2}{2}\right) \notag \\
    &\quad\! + {d_1}\overline{d_2} \rme^{-\ii(k_1-k_2)x} e^{i\widetilde{a}_{12}(x)\left(p+\hbar \frac{k_1+k_2}{2}\right)} \operatorname{S}_{12}\left(x,p+\hbar \frac{k_1+k_2}{2}\right)
    \biggr],
\end{align}
where $I_1=(-\ell_1,0)$, $I_2=(0,\ell_2)$, $I_{12}=\left(-\frac{\ell_1}{2},\frac{\ell_2}{2}\right)$, 
\begin{align}
   a_{11}(x)&=\frac{1}{ \hbar} \left(\ell_1-2\left|x+\frac{\ell_1}{2}\right| \right),&
a_{22}(x)&= \frac{1}{ \hbar} \left(\ell_2-2\left|x-\frac{\ell_2}{2}\right| \right),\\
   a_{12}(x)&=\frac{1}{ \hbar} \left(\frac{\ell_1+\ell_2}{2}-\left|x-\frac{\ell_2-\ell_1}{2}\right|-|x|\right),&
   \widetilde{a}_{12}(x)&=\frac{1}{ \hbar} \left(\frac{\ell_1+\ell_2}{2}-\left|x-\frac{\ell_2-\ell_1}{2}\right|+|x| \right),
\end{align}
%\begin{align}
%   \hbar a_{11}(x)&=\ell_1-2\left|x+\frac{\ell_1}{2}\right|,&
%\hbar a_{22}(x)&=\ell_2-2\left|x-\frac{\ell_2}{2}\right|,\\
%    \hbar a_{12}(x)&=\frac{\ell_1+\ell_2}{2}-\left|x-\frac{\ell_2-\ell_1}{2}\right|-|x|,&
%    \hbar\widetilde{a}_{12}(x)&=\frac{\ell_1+\ell_2}{2}-\left|x-\frac{\ell_2-\ell_1}{2}\right|+|x| ,
%\end{align}
and 
\begin{equation}
    \operatorname{S}_{ij}(x,p):=\frac{\sin\left(a_{ij}(x)p\right)}{a_{ij}(x)p}=
    \sinc(a_{ij}(x)p),\quad i,j=1,2.
\end{equation}
\end{prop}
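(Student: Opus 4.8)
The plan is to compute the three pieces of the decomposition $W\psi_{E_\kappa} = W_{11} + W_{22} + 2\Re W_{12}$ separately, in each case inserting the plane-wave form~\eqref{eq:eigenfunction} of $\psi_{E_\kappa}$, expanding the integrand into elementary exponentials in the variable $y$, and integrating them over the explicitly known regions $R_{mn}(x)$ of~\eqref{eq:int_regions}. The single computational ingredient needed throughout is
\[
\int_a^b e^{i\alpha y}\,dy = (b-a)\,e^{i\alpha(a+b)/2}\,\sinc\!\big(\tfrac12\alpha(b-a)\big),
\]
together with the evenness of $\sinc$; the momentum shifts $p\mapsto p-\hbar k_j$ and the $\sinc$ factors $S_{ij}$ then emerge automatically.

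First I would dispatch the diagonal terms $W_{mm}$, $m=1,2$. With $\psi_m(x)=c_m e^{ik_m x}+d_m e^{-ik_m x}$ and $k_m=\kappa\sqrt{m_m}$, the product $\overline{\psi_m(x-\tfrac{y}{2})}\,\psi_m(x+\tfrac{y}{2})$ expands into four terms: $|c_m|^2 e^{ik_m y}$ and $|d_m|^2 e^{-ik_m y}$, which carry a single exponential in $y$, and the two cross terms, which are $y$-independent and combine to $2\Re\!\big(c_m\overline{d_m}e^{2ik_m x}\big)$. The region $R_{mm}(x)$ in~\eqref{eq:int_regions} is the symmetric interval $[-\hbar a_{mm}(x),\hbar a_{mm}(x)]$, so integrating each summand against $e^{-ipy/\hbar}$ gives $2\hbar a_{mm}(x)$ times a $\sinc$ of the appropriate argument (built from $a_{mm}(x)(p\mp\hbar k_m)$ for the oscillating terms, and from $a_{mm}(x)\,p$ for the cross term), using evenness of $\sinc$; the prefactor $1/(2\pi\hbar)$ collapses to $a_{mm}(x)/\pi$, and inserting $\chi_{I_m}(x)$ to record the support produces the first two lines of~\eqref{eq:Wigner}.

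The core of the argument is the cross term $W_{12}$ (its conjugate $W_{21}$ requires no separate work, since $R_{12}(x)=-R_{21}(x)$ and the sum $W_{12}+W_{21}$ is exactly the $2\Re$ appearing in the statement). Expanding $\overline{\psi_1(x-\tfrac{y}{2})}\,\psi_2(x+\tfrac{y}{2})$ yields four exponentials $e^{i\beta y/2}$ with $\beta\in\{k_1+k_2,\ k_1-k_2,\ k_2-k_1,\ -(k_1+k_2)\}$, each multiplied by an $x$-dependent prefactor of the form $\overline{c_1}c_2 e^{i(k_2-k_1)x}$, and so on. The essential geometric observation is that $R_{12}(x)=\big[\,2|x|,\ (\ell_1+\ell_2)-2\big|x-\tfrac{\ell_2-\ell_1}{2}\big|\,\big]$ is an interval whose midpoint equals $\hbar\widetilde a_{12}(x)$ and whose half-width equals $\hbar a_{12}(x)$ — precisely the identities $\tfrac12(A+B)=\hbar\widetilde a_{12}(x)$ and $\tfrac12(B-A)=\hbar a_{12}(x)$ for the endpoints $A,B$ — which is what forces \emph{two} triangular profiles, $a_{12}$ and $\widetilde a_{12}$, to occur. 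After the recentering $y\mapsto y+\hbar\widetilde a_{12}(x)$ and the basic integral, each of the four terms contributes a phase $e^{i\widetilde a_{12}(x)(\,\cdot\,)}$ and a factor $\sinc\!\big(a_{12}(x)(p\mp\hbar\tfrac{k_1\pm k_2}{2})\big)=S_{12}\!\big(x,\,p\mp\hbar\tfrac{k_1\pm k_2}{2}\big)$; collecting, taking $2\Re$, and multiplying by $\chi_{I_{12}}(x)\,a_{12}(x)/\pi$ (the factor $2$ coming from $2\Re$) yields the last four lines of~\eqref{eq:Wigner}. Summing the three contributions completes the computation.

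I expect the main obstacle to be the bookkeeping in $W_{12}$: confirming the endpoint formulas for $R_{12}(x)$ on the relevant $x$-range, verifying the midpoint/half-width identities that identify them with $\hbar\widetilde a_{12}(x)$ and $\hbar a_{12}(x)$, and keeping straight the signs of the $\sinc$-arguments and of the exponents $e^{i\widetilde a_{12}(x)(\cdot)}$ through the complex conjugations that convert $W_{12}$ into the real-part form displayed. Everything else is a mechanical, if lengthy, chain of elementary integrals.
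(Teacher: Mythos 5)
Your proposal is correct and is precisely the computation the paper performs (the paper offers no written proof beyond ``after computing the resulting integrals explicitly''): decompose $W\psi_{E_\kappa}=W_{11}+W_{22}+2\Re W_{12}$, expand each integrand into elementary exponentials in $y$, and integrate over the regions $R_{mn}(x)$ of~\eqref{eq:int_regions} via $\int_a^b \rme^{\ii\alpha y}\,dy=(b-a)\,\rme^{\ii\alpha(a+b)/2}\sinc\!\big(\tfrac12\alpha(b-a)\big)$, identifying midpoint and half-width of $R_{12}(x)$ with $\hbar\widetilde a_{12}(x)$ and $\hbar a_{12}(x)$. One remark: your (correct) evaluation of the $y$-independent cross terms gives $\operatorname{S}_{11}(x,p)$ in the first bracket and $\operatorname{S}_{22}(x,p)$ in the second, whereas the displayed formula~\eqref{eq:Wigner} swaps these indices --- an apparent typo in the statement, since $a_{22}(x)<0$ for $x\in I_1$.
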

\begin{figure}
    \centering
    \includegraphics[width=.75\linewidth]{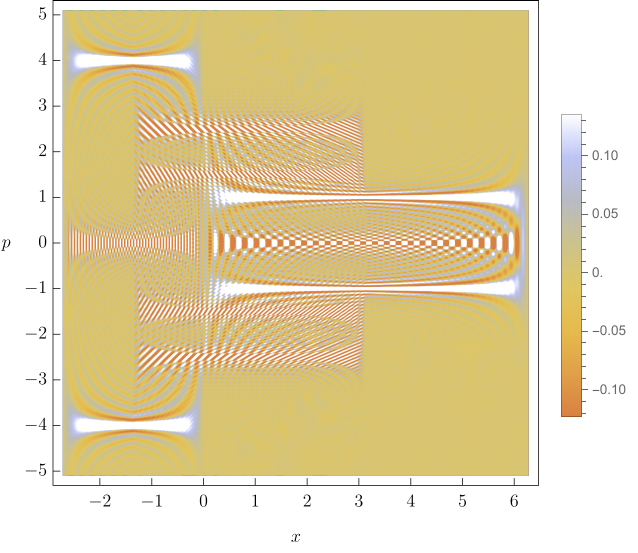}
    \caption{Wigner function $W\psi_{E_{\kappa}}$ for an eigenfunction of $H_U$, see~\eqref{eq:Wigner}. Here $U$ corresponds to Kirchhoff b.c.\ at the junction $x=0$ and Dirichlet b.c.\ at the pendant vertices, and  $\kappa$ corresponds to the $74$th energy level of $H_{\mathbb{I}-2P}$. The parameters are the same of Fig.~\ref{fig:SF}, $E=1/2$, and $\hbar$ is fixed by the condition $\hbar \kappa=\sqrt{2E}.$}
    \label{fig:Wigner}
\end{figure}
The Wigner function~\eqref{eq:Wigner} contains  oscillatory contributions modulated by $\sinc$ functions. The auto-correlation terms contain $\sinc$ functions in momentum, centered at $\pm \hbar k_1$ for $x\in I_1$, and $\pm \hbar k_2$, for $x\in I_2$. The $\sinc$ centered at $p=0$ is oscillatory in position $x$.  The cross term, $2\operatorname{Re}W_{12}$,  supported in the interval $x\in[-\ell_1/2,\ell_2/2]$, exhibits interference between the two components, built from shifted $\sinc$ functions and phase oscillations in $x$. See Fig.~\ref{fig:Wigner}.
\par

The study of the limiting behaviour of the Wigner function~\eqref{eq:Wigner} for $\hbar \to 0$ is rather complex and %it is perhaps 
beyond the purposes of this article. However, we want to point out some features that can be observed even pointwise around the mass-discontinuity point $x=0$: we consider the semiclassical regime of~\eqref{eq:Wigner}, i.e.  we fix a macroscopic energy $E > 0$ and take the limit $\hbar \to 0$ while keeping $\hbar \kappa = \sqrt{2E}$ fixed (large wavenumber) of $\hbar W\psi_{E_{\kappa}}(\hbar u,p)$. This corresponds to zooming in near the mass-discontinuity point $x = 0$ on a spatial scale of order $\hbar$, while simultaneously increasing the energy to explore the high-frequency behaviour of the system. (For similar calculations see~\cite{Cunden25}.) 
\par
We remark that the coefficients $c_1,d_1,c_2,d_2 \in \C$ in Proposition \ref{prop:Wigner} depend on $\kappa$, hence we can write $c_1(\kappa),d_1(\kappa),c_2(\kappa),d_2(\kappa)$. Suppose that, upon extraction of a subsequence $\{\kappa_n\}_{n\geq1}$, the vector of coefficients converges,
\begin{equation}
\left(c_1(\kappa_n), d_1(\kappa_n),c_2(\kappa_n),d_2(\kappa_n)\right)\to \left(\mathfrak{c}_1, \mathfrak{d}_1,\mathfrak{c}_2,\mathfrak{d}_2\right),\quad \text{as $n\to+\infty$}.
\end{equation}
This limit can (and in fact does) depend on the chosen subsequence. Then, along that subsequence, the scaled Wigner function converges pointwise to a limiting function
\begin{equation}
    \lim_{\substack{\hbar\to0,\kappa\to+\infty\\ \hbar\kappa=\sqrt{2E}}}\hbar W\psi_{E_{\kappa}}(\hbar u,p)=W_E(u,p),
\end{equation}
with
\begin{align}
\label{eq:limit_Wigner}
    &W_E(u,p)=  \\
    + \frac{2u}{\pi} \chi_{\mathbb{R}_-}(u)
    &\left[ |\mathfrak{c}_1|^2\sinc\left(2u(p-p_1)\right) 
    + |\mathfrak{d}_1|^2\sinc\left(2u(p+p_1)\right) 
    + 2\Re \left(\mathfrak{c}_1\overline{\mathfrak{d}_1}e^{2ip_1u}\right)\sinc\left(2up\right) \right] \notag \\
     -\frac{2u}{\pi} \chi_{\mathbb{R}_+}(u)
    &\left[ |\mathfrak{c}_2|^2\sinc\left(2u(p-p_2)\right) 
    + |\mathfrak{d}_2|^2\sinc\left(2u(p+p_2)\right) 
    + 2\Re \left(\mathfrak{c}_2\overline{\mathfrak{d}_2}e^{2ip_2u}\right)\sinc\left(2up\right) \right] \notag \\
    \qquad\qquad - \frac{2|u|}{\pi}&\operatorname{Re}  
    \biggl[ 
        {\mathfrak{c}_1}\overline{\mathfrak{c}_2} e^{+i(p_1-p_2)u} 
 \sinc\left(2|u|\left(p-\frac{p_1+p_2}{2}\right)\right) \notag \\
    &\quad+ {\mathfrak{c}_1}\overline{\mathfrak{d}_2} e^{+i(p_1+p_2)u}  \sinc\left(2|u|\left(p- \frac{p_1-p_2}{2}\right)\right) \notag \\
    & \quad+ {\mathfrak{d}_1}\overline{\mathfrak{c}_2} e^{-i(p_1+p_2)u} \sinc\left(2|u|\left(p+ \frac{p_1-p_2}{2}\right)\right) \notag \\
    &\quad+ {\mathfrak{d}_1}\overline{\mathfrak{d}_2} e^{-i(p_1-p_2)u} \sinc\left(2|u|\left(p+ \frac{p_1+p_2}{2}\right)\right) 
    \biggr],\notag
\end{align}
where we denoted the classical momenta $p_i=\sqrt{2m_i E}$, for $i=1,2$.
\par

The structure of $W_E(u,p)$ captures the classical portrait in phase space, particularly the orbits traversing the junction point at $u=0$, with interference fringes appearing in the cross terms. The auto-correlation term contains $\sinc$ functions in momentum, centered at the classical momenta:   $p=0$ and $p=\pm p_1=\pm\sqrt{2m_1 E}$ on the left $u<0$, and $\pm p_2\pm\sqrt{2m_1 E}$ on the right $u>0$. The cross-correlation part includes $\sinc$ functions centered at momentum values $p = \pm(p_1 + p_2)/2$ and $p = \pm(p_1 - p_2)/2$, accompanied by oscillations in position. The amplitudes of all these components are determined by the limiting coefficients  $\mathfrak{c}_1, \mathfrak{d}_1,\mathfrak{c}_2,\mathfrak{d}_2$, which themselves depend on the specific  subsequence $\{ E_{\kappa_n}\}_{n \geq 1}$ of energy levels.

\subsection{Non-Unique Semiclassical Measures and Toral Parametrization}
Suppose that $H_U$ is a scale-free kinetic energy operator. To fix  ideas, consider the case of zero Kirchhoff b.c.\ at the junction $x=0$ (the matrix $U$ in~\eqref{eq:U_KirDir}). See the initial discussion in Sec~\ref{sec:glimpse},  and Sec.~\ref{sec:KirDir} for relevant formulae. 
\par

When the frequencies $\omega_1,\omega_2$ are nonresonant (Assumption~\ref{assump:1}), by Theorem~\ref{prop:mean}, every point $(\varphi_1,\varphi_2)\in\Sigma_P\subset\mathbb{T}^2$ is an accumulation point for the set $\{(\omega_1\kappa,\omega_2\kappa) \mod 2\pi\colon f_P(\omega_1\kappa,\omega_2\kappa)=0\}$.  Therefore, for all $(\varphi_1,\varphi_2)\in\Sigma_P$ there exists a subsequence $(\kappa_n)_{n\geq1}$ of zeros of the spectral function such that $(\omega_1\kappa_n,\omega_2\kappa_n)  \to (\varphi_1,\varphi_2)$, as $n\to+\infty$, on $\mathbb{T}^2$. By continuity, the vector of coefficients (see~\eqref{eq:coeff_KirDir}) converges, as $n\to\infty$, to
\begin{equation}
 \left(\mathfrak{c}_1, \mathfrak{d}_1,\mathfrak{c}_2,\mathfrak{d}_2\right)=   \frac{1}{\sqrt{\frac{2\ell_1}{\sin^2\varphi_1}+\frac{2\ell_2}{\sin^2\varphi_2}}}\left(\frac{\rme^{\ii\varphi_1}}{\sin\varphi_1},-\frac{\rme^{-\ii \varphi_1}}{\sin\varphi_1},-\frac{\rme^{-\ii \varphi_2}}{\sin\varphi_2},\frac{\rme^{\ii \varphi_2}}{\sin\varphi_2}\right).
\end{equation}
Therefore, each point of the spectral curve on the torus  corresponds to a possible limit $W_E$ of the Wigner distribution~\eqref{eq:limit_Wigner}.
\par

We conclude that the sequence of Wigner functions $W\psi_{E_{\kappa}}$ admits \emph{infinitely many distinct semiclassical limits}, each corresponding to a different subsequence $\{E_{\kappa_n}\}_{n \geq 1}$ of energy levels. These limits are parametrized by points $(\varphi_1,\varphi_2)$ on a spectral curve $\Sigma_P$ embedded in the two-dimensional torus $\mathbb{T}^2$.  The emergence of multiple microlocal limits along distinct subsequences of eigenstates is a semiclassical echo of the discontinuous character of the quantum systems.

\section{Conclusions and outlook}
\label{sec:conclusions}
We introduced and studied the problem of a quantum particle in  dimension one with a jump-discontinuous mass. The model is naturally cast as a two-edges quantum graph with different masses on the edges and self-adjoint boundary conditions. We showed how the problem can be analysed in the case of scale-free b.c.\ and solved in detail the associated spectral problem for certain natural boundary conditions. The eigenfunctions spatial properties display erratic behaviour (measured for instance by the `leaning') that has no counterpart in the integrable case of constant mass. The discontinuity of the mass  makes the problem almost-integrable: too intricate to be pinned down, but not so complex to be chaotic or ergodic. 

Several intriguing questions remain open. Chief among them is whether the observed irregular dependence of the eigenfunctions on energy persists across all self-adjoint extensions — that is, for arbitrary boundary conditions at the discontinuities. A related issue concerns the universality of spectral statistics: are quantities such as the Ces\`aro mean, or other spectral averages, invariant under changes in boundary data (all our examples share the same value of the Ces\`aro mean of the leaning)? Concerning the quantum dynamics, the unitary evolution of a quantum particle with a discontinuous mass profile remains largely unexplored and poses significant challenges for future investigation.
 
 All this poses the question of studying  general quantum graphs (i.e. general metric graphs) with \emph{discontinuous mass}. 
   The next step is to solve the spectral problem on 
 two-edge quantum graphs with different mass on the edges  and {general} boundary conditions. This will be the subject of a subsequent paper~\cite{CGL}. Our findings indicate that understanding the semiclassical limit(s) of quantum graphs (one-dimensional quantum billiards) with discontinuous mass is still a challenging task. It is somewhat unsettling that, although we can define and study one-dimensional quantum billiards with jump-discontinuous mass, we find ourselves struggling to define -- let alone comprehend -- the classical motion of billiards with similar discontinuities in mass. This is yet another intriguing challenge in our ongoing pursuit to understand the classical-to-quantum boundary.

\section*{Acknowledgements} 
This research is partially supported by Gruppo Nazionale di Fisica Matematica GNFM-INdAM and by Istituto Nazionale di Fisica Nucleare INFN through the project QUANTUM. FDC and ML acknowledge the support from  PRIN 2022 project 2022TEB52W-PE1-
`The charm of integrability: from nonlinear waves to random matrices'.  ML acknowledge the support by PNRR MUR project CN00000013 `Italian National Centre on HPC, Big Data and Quantum Computing'. GG acknowledges financial support from PNRR MUR project PE0000023-NQSTI and from the University of Bari through the 2023-UNBACLE-0245516 grant. The authors thank Davide Lonigro for discussions at the initial stage of the project, and Paolo Facchi, Daniel Burgarth and Matteo Gallone for  valuable comments.

\appendix

\end{document}